\documentclass[12pt]{article}
\usepackage{setspace}
\usepackage[utf8]{inputenc}
\usepackage[english]{babel}  
\usepackage[colorlinks=true, linkcolor=blue, citecolor = blue]{hyperref} 
\usepackage[ruled,vlined]{algorithm2e}    
\usepackage{algorithmic}
\usepackage{url}  
\usepackage{booktabs}       
\usepackage{amsthm}
\usepackage{enumitem}
\usepackage{ragged2e}
\usepackage{makecell}
\newtheorem{theorem}{Theorem}
\usepackage{amsfonts}       
\usepackage{nicefrac}       
\usepackage{microtype}
\usepackage{authblk}
\usepackage{prodint}
\usepackage[top=1 in,bottom=1in, left=1 in, right=1 in]{geometry}
\usepackage{graphicx}
\usepackage{subfig}
\usepackage{rotating}
\usepackage{float}
\usepackage{wrapfig}
\usepackage{enumitem}
\usepackage{amsmath}
\usepackage{amssymb}
\usepackage[round]{natbib}
\usepackage{tabularx}
\usepackage[export]{adjustbox}
\usepackage{multicol}
\usepackage{color}
\usepackage{multirow}
\usepackage[thinlines]{easytable}

\newtheorem{lemma}{Lemma}

\newtheorem{assumption}{Assumption}

	 \newtheorem{remark}{Remark}

\theoremstyle{remark}

\usepackage{multirow}
\usepackage[table,xcdraw]{xcolor}
\hypersetup{
    colorlinks=lrue,
    linkcolor=blue,
    filecolor=blue,      
    urlcolor=blue,
}
\usepackage{tcolorbox}
\usepackage{tikz}

\usepackage{fancyhdr}
\newcommand\independent{\protect\mathpalette{\protect\independenT}{\perp}}
\def\independenT#1#2{\mathrel{\rlap{$#1#2$}\mkern2mu{#1#2}}}

\usepackage{epsfig}
\usepackage{threeparttable}
\graphicspath{{Images/}}
\title{%
  \rule{\linewidth}{0.8pt}\\[5pt]
  {\Large\textbf{Optimal Treatment Policy Estimation for Recurrent Events with a Competing Terminal Event: An Instrumented Difference-in-Differences Approach}}\\[3pt]
  \rule{\linewidth}{0.8pt}%
}
\author[1]{\small Ritoban Kundu\thanks{\small Corresponding Author. Email: \texttt{ritoban.kundu@pennmedicine.upenn.edu}}}
\author[2]{\small James Flory}
\author[1]{\small Sean Hennessy}
\author[1]{\small Ashkan Ertefaie}
\affil[1]{\small Department of Biostatistics, Epidemiology and Informatics, Perelman School of Medicine, University of Pennsylvania}
\affil[2]{\small Department of Subspecialty Medicine, Memorial Sloan Kettering Cancer Center}

\date{}

\begin{document}
\maketitle
\begin{abstract}
\small
\noindent
    Learning reproducible and generalizable optimal treatment policies for chronic diseases requires large, representative populations with long-term follow-up. Administrative health data provide a natural starting point, but their use is often limited by unmeasured confounding. We address this by proposing a novel framework based on Instrumented Difference-in-Differences (iDID) to estimate optimal policies for recurrent event outcomes subject to a terminating event. The iDID design is particularly useful in this setting because it leverages policy-induced treatment variation while allowing for persistent unmeasured differences across populations, relying on assumptions that are more plausible for administrative health data than those required by conventional IV or DID approaches. A key feature of our approach is that it explicitly addresses the fundamental challenge of avoiding policies that trivially reduce recurrent adverse events by increasing mortality. We derive two distinct Inverse Probability Weighted identifications and develop a multiply robust estimator that achieves consistency if any one of several subsets of nuisance models is correctly specified. We establish the estimator’s consistency and asymptotic normality through large-sample theory and demonstrate its superior finite-sample performance over existing methods via simulation. Finally, we apply this framework to a national Medicare dataset to optimize first-line Type 2 Diabetes strategies, specifically targeting the minimization of disease-related hospitalizations while accounting for survival.
\end{abstract}
\noindent%
{\it Keywords:} \textit{instrumented difference-in-differences; optimal policies; competing risks; unmeasured confounding; multiply robust estimation; medicare}

\vfill
\doublespacing
\section{Introduction}
The development of data-driven individualized optimal treatment policies has become a cornerstone of modern statistical inquiry. The primary objective is to recover a decision rule that assigns the optimal treatment, from a set of available options, to each patient based on their specific characteristics. These methods have received increasing interest across diverse fields, including precision medicine \citep{laber2024handbook}, econometrics and social sciences \citep{belloni2017program, athey2021policy}, and computer science and operations research \citep{kallus2022efficiently, shi2024off}. Recent literature has expanded these methods across various data structures, ranging from foundational randomized experiments \citep{kosorok2015adaptive, tsiatis2019dynamic} to large-scale observational studies \citep{wager2017efficient, kallus2018balanced} and electronic health records  \citep{wang2016learning, wu2020matched}. This evolution has established prominent methodological paradigms: indirect, model-based approaches like Q-learning \citep{qian2011performance,ertefaie2021robust} and A-learning \citep{murphy2003optimal, shi2018high}, alongside direct policy search methods framed as weighted classification tasks \citep{zhang2012estimating, chernozhukov2019semi}. Learning the optimal treatment policy is closely linked to the estimation of conditional average treatment effects, an area that has recently experienced significant growth \citep{hatt2202combining,demirel2024prediction}. \\

\noindent
While there have been significant advances in developing optimal treatment policies for survival outcomes \citep{goldberg2012q, cui2017tree, jiang2017estimation, zhao2025efficient}, relatively little attention has been given to recurrent event processes. This is an important gap, as accounting for the recurrent nature of adverse events will allow us to better assess the
overall disease burden, and is of critical importance for clinical decision making. This is particularly evident in chronic disease management, such as Type 2 diabetes mellitus (T2DM), where patients experience recurrent complications including hospitalizations, cardiovascular events, and kidney disease. Dilemmas in T2DM management illustrate the need for real-world evidence to guide individualized first-line treatment selection: the choice between metformin and glucagon-like peptide-1 (GLP-1) receptor agonists involves complex tradeoffs between efficacy, safety and comorbidity considerations that are difficult to resolve through clinical trials alone. Recent work on C-learning \citep{zhan2025c} accommodates recurrent events but assumes away administrative censoring and terminal events. In many applications, particularly those involving Medicare claims follow up is truncated by death, which induces informative censoring. There is a well-developed literature on modeling the expected number of recurrent events prior to failure \citep{cook2007statistical, zhao1997consistent, ghosh2000nonparametric, cook2009robust}. Causal work in this area has focused primarily on estimating average treatment effects for recurrent outcomes in the presence of terminal death \citep{schaubel2010estimating, janvin2024causal, su2020doubly, baer2023causal}. However, the problem of optimal policy learning in this setting remains largely unexplored, leaving a critical gap between causal effect estimation and decision-making.\\

\noindent
In observational settings such as large-scale Medicare data, unmeasured confounding remains the primary obstacle to credible policy learning, as it renders the ignorability assumption unverifiable. The foundational works of \citet{angrist1995identification, angrist1996identification} introduced a counterfactual framework for instrumental variables (IVs), identifying the complier average causal effect under monotonicity. Subsequent developments \citep{tan2006regression, ogburn2015doubly} provided semiparametric estimators for these effects. For policy learning, however, the population average treatment effect is typically the relevant target rather than a complier-specific estimand \citep{hernan2006instruments, aronow2013beyond}. Results by \citet{wang2018bounded} on point identification of the population average treatment effect underpin recent IV-based approaches to optimal treatment policy estimation \citep{cui2021semiparametric}. At the same time, the growing availability of longitudinal data, including administrative claims and electronic health records, has increased the appeal of designs that exploit temporal structure. Difference-in-differences (DiD) methods identify treatment effects under a parallel trends assumption \citep{abadie2005semiparametric, sant2020doubly, roth2023s}, but this assumption is often violated due to time-varying unmeasured confounding. This has motivated extensions such as changes-in-changes \citep{athey2006identification}, sensitivity analyses \citep{keele2019patterns}, and negative control approaches \citep{sofer2016negative}. Moreover, DiD typically targets the average treatment effect on the treated, which may not generalize to the full population and thus limits its use for policy learning. Beyond IV and DiD, recent work has explored proximal inference \citep{miao2018identifying} and double negative control methods \citep{miao2024confounding} as alternative strategies for addressing unmeasured confounding. The instrumented difference-in-differences (iDID) framework combines instrumental variables and difference-in-differences paradigms to enable identification when the parallel trends assumption is violated \citep{ye2023instrumented, vo2024structural}. It further permits the instrument to have a direct effect on the outcome, provided it does not modify the outcome trend. Within this setting, \citet{zhao2025semiparametric} developed an iDID-based approach for learning optimal treatment policies with continuous outcomes.\\ 

\noindent
Our work substantially expands the existing literature through four main contributions. First, while existing methods focus on continuous outcomes, we develop a framework for optimal treatment policies with recurrent events in the presence of a terminal event, a setting that is pervasive in chronic disease applications but largely unaddressed. This extension is nontrivial: terminal mortality induces competing risks that can yield degenerate ``optimal” policies that reduce recurrent events only by increasing death. We address this issue by formulating the policy objective as a constrained optimization problem that explicitly penalizes such solutions. Second, we propose a multiply robust estimation procedure for the optimal treatment policy that remains consistent if any one of several subsets of nuisance models is correctly specified, providing protection against model misspecification in this constrained setting. Third, we establish the estimator’s large-sample properties and show through simulations that it achieves improved finite-sample performance relative to existing approaches. Finally, we apply our method to a national Medicare dataset to estimate optimal first-line treatment strategies for Type 2 diabetes, targeting reductions in disease-related hospitalizations while appropriately accounting for competing mortality risk in presence of potential unmeasured confounding. Large-scale Medicare data can address clinically important treatment questions that are too numerous and complex for a feasible trial agenda, enabling us to estimate optimal first-line treatment strategies for Type 2 diabetes targeting reductions in disease-related hospitalizations while appropriately accounting for competing mortality risk.

\section{Notation and Framework}
We first introduce the notation for the proposed framework. Let $A \in \{0, 1\}$ represent a binary treatment variable and $Z \in \{0, 1\}$ a binary instrumental variable. We define $L \in \{0, 1\}$ as a binary period indicator representing whether a unit is from time period $L=1$ or $L=0$. Let $\boldsymbol W \in \mathcal{W}\subset\mathbb{R}^p$ denote a vector of measured pre-IV covariates. The outcome process is characterized by both recurrent events and a terminal event. Let $D$ denote the time to death (the terminating event).  We define $N^*(\cdot)$ as the underlying right-continuous recurrent event process. To account for the terminating nature of $D$, we define the process as $N^*(t) = N^*(t \wedge D)$, ensuring no events are recorded after the terminal event occurs. In practice, these processes are subject to a censoring variable $C$, which represents the time at which observation ceases for reasons independent of the terminal event, such as administrative censoring or loss to follow-up. The observed recurrent event process is $N(t) = N^*(t \wedge C)$, and the observed follow-up time is $X = D \wedge C$. We define $\Delta = I(D \le C)$ as the failure indicator. Moreover, let $\boldsymbol U$ denote a vector of unmeasured confounders that affects the relationship between $A$ and $D$, and between $A$ and $N^*(\cdot)$. Let $0 < t_1 < t_2 < \dots < t_m$ denote the landmark times at which we observe the process $N(\cdot)$ for both time periods $L=0$ and $L=1$. The observed data is thus given by $O = (\boldsymbol W, A, Z, L, X, \{N(t)\}_{t=t_1}^{t_m}, \Delta)$. We assume that $O_1, O_2, \dots, O_n$ are independent and identically distributed (i.i.d.) realizations of $O$. Let $Y^*(t):=\mathbb{I}(D\geq t)$ denote the true at risk process and $Y(t):=\mathbb{I}(X\geq t)$ denote the observed at risk process.\\

\noindent
We define the following potential outcomes. Let $A^{(z)}_l$ denote the potential exposure in period $L=l$ if the instrument were set to $z$. Similarly, let $N^{*(a,z)}_l(t)$ and $D^{(a,z)}_l$ denote the potential recurrent event process at time $t$ and the potential time to the terminal event, respectively, for a unit in period $L=l$ given treatment $a$ and instrument $z$. These potential outcomes account for the terminating nature of $D^{(a,z)}_l$ such that $N^{*(a,z)}_l(t) = N^{*(a,z)}_l(t \wedge D^{(a,z)}_l)$. Furthermore, let $N^{*(a)}_l(t)$ denote the potential recurrent event process at time $t$ if the exposure were set to level $a$, while the instrument $Z$ remains at its observed value for a unit in period $l$. Similarly, let $D^{(a)}_l$ denote the potential time to the terminal event under these same conditions. Finally, we define $Y^{*(a,z)}_l(t) = I(D^{(a,z)}_l \geq t)$ and $Y^{*(a)}_l(t) = I(D^{(a)}_l \geq t)$ as the potential at-risk indicators at time $t$ under the respective treatment and instrument assignments.\\

\noindent
Our primary objective is to identify an optimal policy, $d^t: \mathcal{W} \mapsto \{0, 1\}$, that minimizes the expected cumulative number of recurrent events by time $t$. However, in the presence of a terminal event such as mortality, unconstrained optimization may yield degenerate policies that artificially reduce the recurrent event burden simply by increasing the competing risk of death. To preclude this, we restrict our search to a class of admissible policies satisfying a clinically motivated survival constraint: we require that the marginal survival probability under the optimal policy is bounded below by the survival probability under a designated baseline behavioral policy, denoted $\widetilde{d}(\boldsymbol{W})$. 
Specifically, the optimization problem is formulated as:
\begin{align}
d^t_{\text{opt},l}=\arg\min_{d^t \in \mathcal{D}} \mathbb{E} \left[ N_l^{*(d^t(\boldsymbol W))}(t) \right]\quad \text{subject to } \mathbb{E}[ Y_l^{*(d^t(\boldsymbol W))}(t)-Y_l^{*(\widetilde{d}(\boldsymbol W))}(t)]>0\cdot\label{eq:eq2}
\end{align}
This behavioral policy $\widetilde{d}(\boldsymbol{W})$ represents the observed standard of care and is defined via a threshold on the pseudo-propensity score, $\widetilde{d}(\boldsymbol{W}) = \mathbb{I}\{\widetilde{\pi}(\boldsymbol{W}) > c\}$, where $c$ is a context-dependent threshold and $\widetilde{\pi}(\boldsymbol{W}) \equiv P(A=1 \mid \boldsymbol{W})$ denotes the propensity score computed solely from the observed covariates $\boldsymbol{W}$. This quantity must be carefully distinguished from the true propensity score $P(A=1 \mid \boldsymbol{W}, \boldsymbol{U})$, which conditions additionally on the unmeasured confounders $\boldsymbol{U}$ and is fundamentally unidentifiable from the observed data. The pseudo-propensity score $\widetilde{\pi}(\boldsymbol{W})$ is therefore a misspecified surrogate that ignores the unobserved heterogeneity induced by $\boldsymbol{U}$. Throughout, we treat $\widetilde{\pi}(\boldsymbol{W})$ as pre-specified by the practitioner, which is natural in settings where treatment allocation is governed by an institutional protocol or administrative decision based solely on $\boldsymbol{W}$. Fixing $\widetilde{\pi}(\boldsymbol{W})$ rather than estimating it serves a deliberate purpose: it anchors the survival constraint to a well-defined, externally specified reference policy, thereby ensuring that the constraint is interpretable and remains invariant to the estimation of nuisance parameters. Importantly, all identification and asymptotic results developed in this paper are derived using the dichotomized rule $\widetilde{d}(\boldsymbol{W})$, but apply equally when the constraint is instead expressed directly in terms of the pseudo-propensity score $\widetilde{\pi}(\boldsymbol{W})$ without dichotomization. 
\section{Methodology}
\subsection{Identification Assumptions}
In this section, we specify the key assumptions required to identify the optimal treatment policies within the constrained optimization problem \eqref{eq:eq2} under unmeasured confounding using an instrumented differences-in-differences (iDID) approach. We impose the following identification assumptions,
\begin{assumption}[Consistency]\label{ass:ass1}
    For all $0\leq t<\infty$ and $l=0,1$, $A=A^{(Z)}_l$, $N_l^{*}(t)=N_l^{*(A)}(t)$ and $Y_l^{*}(t)=Y_l^{*(A)}(t)$.
\end{assumption}
\noindent
Assumption \ref{ass:ass1} incorporates the Stable Unit Treatment Value Assumption (SUTVA), meaning an individual's observed outcome is not affected by others' exposure levels (no interference) or by the individual's own exposure level at a different time point.
\begin{assumption}[Positivity]\label{ass:ass2}
    For any $l=0,1$ and $z=0,1$, $0<P(L=l,Z=z|\boldsymbol W)<1$, almost surely.
\end{assumption}
\noindent
Assumption \ref{ass:ass2} postulates that there is a positive probability of receiving each $(l, z)$ combination within each level of $\boldsymbol W$. Equivalently, it ensures the support of $\boldsymbol W$ is the same for each level of $(L, Z)$.
\begin{assumption}[Random Sampling]\label{ass:ass3}
   For all $0\leq t<\infty$ and any $l=0,1$, $z=0,1$, $a=0,1$, $L\independent \{A_l^{(z)},N_l^{*(a)}(t),Y_l^{*(a)}(t)\}|Z,\boldsymbol W$.
\end{assumption}
\noindent
Assumption \ref{ass:ass3} is often assumed for repeated cross-sectional datasets and states that for each level of $(Z, \boldsymbol W)$, the collected data at every time point is a random sample from the underlying population. 
\begin{assumption}[Trend Relevance]\label{ass:ass4}
    $\mathbb{E}[A_1^{(1)}-A_0^{(1)}|Z=1,\boldsymbol W]\neq \mathbb{E}[A_1^{(0)}-A_0^{(0)}|Z=0,\boldsymbol W]$ almost surely.
\end{assumption}

\begin{assumption}[Independence and Exclusion Restriction]\label{ass:ass5}
   For all $0\leq t<\infty$ and any $l=0,1$, $Z\independent \{A_l^{(0)},A_l^{(1)},N_1^{*(0)}(t)-N_0^{*(0)}(t),N_l^{*(1)}(t)-N_l^{*(0)}(t),Y_1^{*(0)}(t)-Y_0^{*(0)}(t),Y_l^{*(1)}(t)-Y_l^{*(0)}(t)\}|\boldsymbol W$.
\end{assumption}
\noindent
Assumptions \ref{ass:ass4} and \ref{ass:ass5} are parallel to the core assumptions required for instrumental variables in a Difference-in-Differences framework. Assumption \ref{ass:ass4} states that the instrument $Z$, acting as an encouragement that disproportionately affects a subpopulation, changes the temporal trend in exposure. Assumption \ref{ass:ass5} posits that the instrument is independent of potential treatment trends and that any direct effect of $Z$ on the outcomes is period-invariant, allowing it to be canceled out through differencing. This highlights the primary advantage of employing $Z$ as an instrument in the iDID framework compared to a standard IV approach. In this setting, the instrument is permitted to have a direct effect on the outcome levels, provided it has no direct effect on the temporal trend of the outcome and does not modify the average treatment effect.

\begin{assumption}[No unmeasured common effect modifier]\label{ass:ass6}
For all $0\leq t<\infty$ and any $l=0,1$, $Cov(N_l^{*(1)}(t)-N_l^{*(0)}(t),A_l^{(1)}-A_l^{(0)}|\boldsymbol W)=0$ and $Cov(Y_l^{*(1)}(t)-Y_l^{*(0)}(t),A_l^{(1)}-A_l^{(0)}|\boldsymbol W)=0$. 
\end{assumption}
\noindent
Assumption \ref{ass:ass6} essentially states that there is no common effect modifier by an unmeasured confounder of the additive effect of treatment on the outcome and the additive effect of the IV on treatment. It allows us to identify the population average treatment effect (ATE) rather than a local average treatment effect (LATE) by ensuring that those who comply with the instrument do not have systematically different treatment effects than those who do not.
\begin{assumption}[Stable Treatment Effect over each period]\label{ass:ass7}
   For all $0\leq t<\infty$, $\mathbb{E}[N_1^{*(1)}(t)-N_1^{*(0)}(t)|\boldsymbol W]=\mathbb{E}[N_0^{*(1)}(t)-N_0^{*(0)}(t)|\boldsymbol W]$ and $\mathbb{E}[Y_1^{*(1)}(t)-Y_1^{*(0)}(t)|\boldsymbol W]=\mathbb{E}[Y_0^{*(1)}(t)-Y_0^{*(0)}(t)|\boldsymbol W]$.
\end{assumption}
\noindent
Assumption \ref{ass:ass7} requires that the Conditional Average Treatment Effects (CATEs) do not vary over periods. By ensuring the effect of the treatment on both the recurrent events and survival remains constant across periods, it guarantees that the optimal policy identified is stable across both the periods.
\begin{assumption}[Non-informative Censoring]\label{ass:ass8}
    For all $0\leq t<\infty$, $C\independent \{N^*(t),Y^*(t)\}|A,L,Z,\boldsymbol W$ almost surely.
\end{assumption}
\noindent
Assumption \ref{ass:ass8} implies that the censoring process is non-informative conditionally on the treatment, period, instrument, and baseline covariates.

\subsection{Inverse Probability Weighted Identification}
We present the identification of the optimal treatment policy through two distinct Inverse Probability Weighted (IPW) estimators. First, we define several key nuisance functions and quantities. Let $\pi(l,z,\boldsymbol w) = P(L=l, Z=z \mid \boldsymbol W=\boldsymbol w)$ represent the joint probability of period and instrument assignment given covariates, and $\mu_A(l,z,\boldsymbol w) = E(A \mid L=l, Z=z, \boldsymbol W = \boldsymbol w)$ denote the conditional mean of the treatment. The treatment trend denominator is given by $\delta_A(\boldsymbol w) = \mu_A(1,1,\boldsymbol w) - \mu_A(1,0,\boldsymbol w) - \mu_A(0,1,\boldsymbol w) + \mu_A(0,0,\boldsymbol w)$. Let $N_C(t)=\mathbb{I}(X\leq t,\Delta=0)$ and $N_D(t)=\mathbb{I}(X\leq t,\Delta=1)$.
To address right-censoring, we define the cumulative hazard of the censoring process as 
$$\Lambda_C(t; a, l,z,\boldsymbol w) = \int_{(0,t]} \frac{dE\{N_C(u) \mid A=a, L=l, Z=z,\boldsymbol W=\boldsymbol w\}}{E\{Y^\dagger(u) \mid A=a, L=l, Z=z,\boldsymbol W=\boldsymbol w\}},$$ 
where $Y^\dagger(u) = I(X > u, \Delta = 1 \text{ or } X \ge u, \Delta = 0)$ is a modified at-risk process \citep{baer2023causal}. This process specifically accounts for the terminating nature of events such that $Y^\dagger(t) = Y(t) - \{N_D(t) - N_D(t-)\}$. Finally, the survival function for the censoring process is defined via the product integral as $K(t; a, l,z,\boldsymbol w) = \prod_{u \in (0, t]} \{1 - d\Lambda_C(u; a, l,z,\boldsymbol w)\}$
where $\prod$ denotes the product integral defined in \citet{gill1990survey}.\\

\noindent
Since $N_l^{*(d^t(\boldsymbol W))}(t)=N_l^{*(1)}(t)d^t(\boldsymbol W)+N_l^{*(0)}(t)(1-d^t(\boldsymbol W))$ and $Y_l^{*(d^t(\boldsymbol W))}(t)=Y_l^{*(1)}(t)d^t(\boldsymbol W)+Y_l^{*(0)}(t)(1-d^t(\boldsymbol W))$, we define the CATEs as $\tau^N(t,\boldsymbol W)=E[N_l^{*(1)}(t)-N_l^{*(0)}(t)|\boldsymbol W]$ and $\tau^Y(t,\boldsymbol W)=E[Y_l^{*(1)}(t)-Y_l^{*(0)}(t)|\boldsymbol W]$. Due to Assumption \ref{ass:ass7}, which posits stable treatment effects over each period, we can omit the period index $l$ from these CATEs. Consequently, the optimization problem in \eqref{eq:eq2} simplifies to finding a policy $d^t(\boldsymbol W)$ that satisfies:
\begin{align}
d^t_{\text{opt}}=\arg\min_{d \in \mathcal{D}} \mathbb{E} \left[\tau^N(t,\boldsymbol W)d^t(\boldsymbol W) \right]\quad \text{subject to } \mathbb{E}[\tau^Y(t,\boldsymbol W)\{d^t(\boldsymbol W)-\widetilde{d}(\boldsymbol W) \}]> 0 \cdot\label{eq:eq3}
\end{align}
\begin{theorem}\label{thm:thm1}
    Under Assumptions \ref{ass:ass1}--\ref{ass:ass8}, the following results hold
    \begin{align*}
         &\mathbb{E}\left[\frac{\Delta\cdot(2Z-1)(2L-1)(2A-1)N(t)\{\mathbb{I}(A=d^t(\boldsymbol W))\}}{K(X-,A,L,Z,\boldsymbol W)\pi(L,Z,\boldsymbol W)\delta_A(\boldsymbol W)}\right]=\mathbb{E} \left[\tau^N(t,\boldsymbol W)d^t(\boldsymbol W) \right]+f_N\\
         &\mathbb{E}\left[\frac{\Delta\cdot(2Z-1)(2L-1)(2A-1)Y(t)\{\mathbb{I}(A=d^t(\boldsymbol W))\}}{K(X-,A,L,Z,\boldsymbol W)\pi(L,Z,\boldsymbol W)\delta_A(\boldsymbol W)}\right]=\mathbb{E} \left[\tau^Y(t,\boldsymbol W)d^t(\boldsymbol W) \right]+f_Y
    \end{align*}
    where $f_N,f_Y$ do not depend on $d^t(\boldsymbol W)$. Moreover, the optimization policy in \eqref{eq:eq3} is identified by 
     \begin{align*}
         d^t_{\text{opt}}&=\arg\min_{d^t\in \mathcal{D}} \mathbb{E}\left[\frac{\Delta\cdot(2Z-1)(2L-1)(2A-1)N(t)\{\mathbb{I}(A=d^t(\boldsymbol W))\}}{K(X-,A,L,Z,\boldsymbol W)\pi(L,Z,\boldsymbol W)\delta_A(\boldsymbol W)}\right]\hspace{0.2cm} \text{subject to } \\
         &\mathbb{E}\left[\frac{\Delta\cdot(2Z-1)(2L-1)(2A-1))Y(t)\{\mathbb{I}(A=d^t(\boldsymbol W))-\mathbb{I}(A=\widetilde{d}(\boldsymbol W))\}}{K(X-,A,L,Z,\boldsymbol W)\pi(L,Z,\boldsymbol W)\delta_A(\boldsymbol W)}\right]>0
    \end{align*}
\end{theorem}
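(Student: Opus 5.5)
The proof proceeds in three stages: remove censoring by inverse weighting, apply the iDID Wald-type identification conditional on $\boldsymbol W$, and collect the terms that do not involve $d^t$. Throughout, write $I_d:=\mathbb{I}(A=d^t(\boldsymbol W))=Ad^t(\boldsymbol W)+(1-A)(1-d^t(\boldsymbol W))$, so that $(2A-1)I_d = A d^t(\boldsymbol W) - (1-A)(1-d^t(\boldsymbol W))$. This splits each weighted expectation into an $A$-part multiplied by $d^t(\boldsymbol W)$ and a $(1-A)$-part multiplied by $1-d^t(\boldsymbol W)$.

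\textbf{Step 1 (censoring).} Conditioning on $(A,L,Z,\boldsymbol W)$ and the full-data quantities $\{N^*(t),Y^*(t),D\}$, we use Assumption~\ref{ass:ass8} and the standard product-integral identity for the modified at-risk process of \citet{baer2023causal}. This gives $E[\Delta g(N(t),Y(t))/K(X-;A,L,Z,\boldsymbol W)\mid \text{full data},A,L,Z,\boldsymbol W]=g(N^*(t),Y^*(t))$. The key point is that on $\{\Delta=1\}$ we have $X=D$, so $N(t)=N^*(t)$ for the terminal-event-observed units. The $K(X-)$ weighting then requires $P(C\ge D\mid\cdot)=K(D-;\cdot)$. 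Each left-hand side therefore equals the same expression with $\Delta/K$ dropped and $N(t),Y(t)$ replaced by $N^*(t),Y^*(t)$. I expect this step to be the main technical obstacle: one must verify the left-limit convention and the handling of ties via $Y^\dagger$. I would argue it by writing $K(D-)$ as the censoring survival at $D-$ under conditional independence and applying Fubini.

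\textbf{Step 2 (iDID identification).} Let $V:=N^*(t)$ (the $Y^*(t)$ case is identical). Iterating over $\boldsymbol W$ and dividing by $\pi(L,Z,\boldsymbol W)$, we obtain
\[E[(2Z-1)(2L-1)h(O)/\pi\mid\boldsymbol W]=\sum_{l,z}(-1)^{l+z}E[h\mid L=l,Z=z,\boldsymbol W].\]
Take $h=AV$ and use consistency (Assumption~\ref{ass:ass1}). Then $AV=A V^{(0)}+A(V^{(1)}-V^{(0)})$ in period $l$. Under Assumptions~\ref{ass:ass3} and~\ref{ass:ass5}, together with the covariance condition Assumption~\ref{ass:ass6} and Assumption~\ref{ass:ass7}, the double-difference of $E[A(V^{(1)}-V^{(0)})\mid l,z,\boldsymbol W]$ equals $\tau^N(t,\boldsymbol W)\,\delta_A(\boldsymbol W)$, up to terms that are double differences of quantities free of the policy. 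Concretely, with $\Delta_l := V_l^{(1)}-V_l^{(0)}$, we write $E[A_l^{(z)}\Delta_l\mid z,\boldsymbol W]=E[A_l^{(z)}\mid\boldsymbol W]\tau^N + \mathrm{Cov}(A_l^{(z)},\Delta_l\mid \boldsymbol W)$. The $z$-contrast of the covariance vanishes by Assumption~\ref{ass:ass6}, and the remaining $z$-free covariance cancels in the $l$-by-$z$ double difference only up to a policy-free remainder. Handling the $A V^{(0)}$ term and the $(1-A)V$ term similarly, after dividing by $\delta_A(\boldsymbol W)$ (nonzero by Assumption~\ref{ass:ass4}), we find that the coefficient of $d^t(\boldsymbol W)$ minus the coefficient of $1-d^t(\boldsymbol W)$ combine as follows:
\[E[(2A-1)I_d V\cdots]=d^t(\boldsymbol W)\{\text{DD}(AV)+\text{DD}((1-A)V)\}/\delta_A-\text{DD}((1-A)V)/\delta_A.\]
Here $\text{DD}(AV)+\text{DD}(V-AV)=\text{DD}(V)$. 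Applying the Wald-type iDID identity $\text{DD}(V)=\tau^N\delta_A$, which follows from Assumptions~\ref{ass:ass5}--\ref{ass:ass7} as in \citet{ye2023instrumented}, the coefficient of $d^t$ is $\tau^N(t,\boldsymbol W)$. The remaining term, $f_N:=-E[\text{DD}((1-A)V)/\delta_A]$, is independent of $d^t$.

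\textbf{Step 3 (conclusion).} Taking outer expectations yields the two displayed identities. In the constraint, the difference of the two $Y$-expressions cancels $f_Y$ and leaves $E[\tau^Y\{d^t-\widetilde d\}]$. The objective differs from that in \eqref{eq:eq3} only by the constant $f_N$, so the argmin is unchanged. Hence the identified program coincides with \eqref{eq:eq3}.
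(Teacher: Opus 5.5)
Your proposal is correct. Your Step~1 matches the paper's treatment of censoring: on $\{C\ge D\}$ one has $X=D$ and $N(t)=N^*(t)$, and Assumption~\ref{ass:ass8} gives $\mathbb{E}[\mathbb{I}(C\ge D)\mid D,N^*(t),A,L,Z,\boldsymbol W]=K(D-)$. Your Step~2 takes a genuinely different, and cleaner, route.

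The paper conditions on the unmeasured $\boldsymbol U$ and uses the latent ignorability $N^{*(a)}_L(t)\independent A\mid Z,L,\boldsymbol W,\boldsymbol U$, which is not among Assumptions~\ref{ass:ass1}--\ref{ass:ass8}. It then expands into eight $(a,l,z)$ terms and cancels the baseline trends after writing $P(A=0\mid\cdot)=1-P(A=1\mid\cdot)$. Finally it factorizes $\mathbb{E}[\tau^N_l(t,\boldsymbol W,\boldsymbol U)\,\delta_{A,l}(\boldsymbol W,\boldsymbol U)\mid \boldsymbol W]$ by reading Assumptions~\ref{ass:ass5}--\ref{ass:ass6} at the $(\boldsymbol W,\boldsymbol U)$ level.

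You instead use the pointwise identity $(2A-1)\mathbb{I}\{A=d^t(\boldsymbol W)\}=d^t(\boldsymbol W)-(1-A)$. Given $\boldsymbol W$, the coefficient of $d^t(\boldsymbol W)$ is then simply $\delta_{N^*(t)}(\boldsymbol W)/\delta_A(\boldsymbol W)$. The Wald-type identity $\delta_{N^*(t)}(\boldsymbol W)=\tau^N(t,\boldsymbol W)\delta_A(\boldsymbol W)$ finishes the argument. This is exactly what the paper establishes in proving Theorem~\ref{thm:thm3}, from Assumptions~\ref{ass:ass1}, \ref{ass:ass3} and \ref{ass:ass5}--\ref{ass:ass7} stated conditionally on $\boldsymbol W$. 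It also yields the explicit constant $f_N=-\mathbb{E}[\mathrm{DD}\{(1-A)N^*(t)\}/\delta_A(\boldsymbol W)]$.

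What each route buys:
\begin{itemize}
\item Yours is shorter, needs no latent $\boldsymbol U$ or extra ignorability, and shows Theorem~\ref{thm:thm1} is essentially a corollary of the Theorem~\ref{thm:thm3} identity.
\item The paper's gives a structural interpretation in terms of $\boldsymbol U$-specific compliance and treatment effects, at the price of those additional conditions.
\end{itemize}

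The intermediate discussion in your Step~2 of $\mathrm{DD}\{A(V^{(1)}-V^{(0)})\}$ and ``policy-free'' covariance remainders is not needed for your final chain. Under the same assumptions that double difference equals $\tau^N\delta_A$ exactly, with no remainder, so you can drop that paragraph.
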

\noindent
The proof of Theorem \ref{thm:thm1} is provided in the Supplementary Section S1. The IPW identification formula simultaneously adjusts for right-censoring using the survival function of the censoring process $K(\cdot)$ effectively reweighting the observed events to account for those lost to follow-up and for unmeasured confounding using the iDID framework. A limitation of the first identification result is its reliance on the uncensored sub-sample $(\Delta=1)$, which effectively discards potentially valuable information from units who are censored before the terminal event at any point during the followup time. In particular, it excludes individuals who remain at risk at time $t$ but are censored at some later time $t' > t$. To overcome this loss of information and improve efficiency, we propose a second IPW-type estimator. This approach is inspired by the framework of \citet{schaubel2010estimating}, but we extend it in two significant directions: first, by adapting it to the iDID framework to account for unmeasured confounding; and second, by allowing the censoring survival process $K(\cdot)$ to depend on the period ($L$), the instrument ($Z$), and baseline covariates ($\boldsymbol W$), in addition to the treatment ($A$). Let $dN^*(t)=N^*(t)-N^*(t-)$, $dN(t)=N(t)-N(t-)$, $dY^*(t)=Y^*(t)-Y^*(t-)$ and $dY(t)=Y(t)-Y(t-)$.
\begin{theorem}\label{thm:thm2}
    Under Assumptions \ref{ass:ass1}--\ref{ass:ass8}, the optimization policy in \eqref{eq:eq3} is identified by
    \begin{align*}
        d^t_{\text{opt}}&=\arg\min_{d^t\in \mathcal{D}}\mathbb{E}\left[\int_{0}^t\frac{(2Z-1)(2L-1)(2A-1)dN(s)\{\mathbb{I}(A=d^t(\boldsymbol W))\}}{K(s,A,L,Z,\boldsymbol W)\pi(L,Z,\boldsymbol W)\delta_A(\boldsymbol W)}\right]\hspace{0.2cm} \text{subject to } \\
         & \mathbb{E}\left[\int_{0}^t\frac{(2Z-1)(2L-1)(2A-1)dY(s)\{\mathbb{I}(A=d^t(\boldsymbol W))-\mathbb{I}(A=\widetilde{d}(\boldsymbol W))\}}{K(s,A,L,Z,\boldsymbol W)\pi(L,Z,\boldsymbol W)\delta_A(\boldsymbol W)}\right]>0\cdot
    \end{align*}
\end{theorem}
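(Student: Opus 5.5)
The plan is to reduce Theorem \ref{thm:thm2} to the structure of Theorem \ref{thm:thm1}. It suffices to show that each integrated IPW functional equals $\mathbb{E}[\tau^N(t,\boldsymbol W)d^t(\boldsymbol W)]+g_N$ and $\mathbb{E}[\tau^Y(t,\boldsymbol W)d^t(\boldsymbol W)]+g_Y$ respectively, with $g_N,g_Y$ free of $d^t$. For the constraint, the $d^t$-free constants cancel in the difference $\mathbb{I}(A=d^t)-\mathbb{I}(A=\widetilde d)$, and we are left with $\mathbb{E}[\tau^Y\{d^t-\widetilde d\}]$, which is exactly \eqref{eq:eq3}.

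\textbf{Step 1: remove censoring.} Condition on $(A,L,Z,\boldsymbol W)$ and write $dN(s)=I(C\ge s)\,dN^*(s)$. The same device handles $dY(s)$: its jumps come from deaths $dN_D$, with $I(C\geq s)$ attached. By Assumption \ref{ass:ass8},
$$E\{I(C\ge s)\,dN^*(s)\mid A,L,Z,\boldsymbol W\}=P(C\ge s\mid A,L,Z,\boldsymbol W)\,E\{dN^*(s)\mid A,L,Z,\boldsymbol W\}.$$
Next identify $P(C\ge s\mid\cdot)$ with $K(s,\cdot)$, or with $K(s-,\cdot)$ depending on the convention. This step uses the modified at-risk process $Y^\dagger$: under independent censoring its hazard equals the net censoring hazard, so the product integral recovers the censoring survivor function (Gill--Johansen). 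The weight then cancels. Integrating over $(0,t]$ with Fubini gives $\mathbb{E}[\text{weight}\cdot N^*(t)\mathbb{I}(A=d^t)]$, and analogously $Y^*(t)-1$, whose constant $-1$ will again be $d^t$-free up to terms we track.

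\textbf{Step 2: the iDID argument with uncensored outcomes.} Write $\mathbb{I}(A=d)=A d+(1-A)(1-d)$. Then
$$(2A-1)\mathbb{I}(A=d)N^*(t)=dAN^*(t)-(1-d)(1-A)N^*(t).$$
Conditioning on $\boldsymbol W$ and summing over $(l,z)$ with the $(2Z-1)(2L-1)/\pi$ weights produces double differences of $E[AN^*(t)\mid l,z,\boldsymbol W]$ and $E[(1-A)N^*(t)\mid l,z,\boldsymbol W]$. Using Assumptions \ref{ass:ass1}, \ref{ass:ass3}, \ref{ass:ass5}, \ref{ass:ass6} and \ref{ass:ass7}, as in \citet{ye2023instrumented}, decompose $AN^*=N^{*(0)}+A(N^{*(1)}-N^{*(0)})$. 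The double difference of $E[N^{*(0)}]$ contributes a term not involving $d$. The second term factorizes by Assumption \ref{ass:ass6} into $\tau^N(t,\boldsymbol W)\delta_A(\boldsymbol W)$, so that after dividing by $\delta_A$ (nonzero by Assumption \ref{ass:ass4}) the total becomes $d\,\tau^N+(\text{$d$-free})$. The $(1-A)$ term is handled likewise; it yields $-(1-d)\{-\tau^N\}+\text{const}$. Collecting terms gives $\mathbb{E}[\tau^N d^t]+g_N$, and the same computation applies to $Y^*$. I expect this to be a reuse of the algebra in the proof of Theorem \ref{thm:thm1} (Supplementary S1), which I may invoke directly once Step 1 has reduced the integral to the uncensored functional.

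\textbf{Main obstacle.} The delicate point is Step 1. First, the left-limit conventions ($K(s)$ versus $K(s-)$, together with the $Y^\dagger$ definition) must match $P(C\ge s)$ exactly, including at jump times shared with death. Second, in the $dY$ integral the terminating event $D$ is itself the thing censored, so I must check that $E\{I(C\ge s)\,dN_D(s)\mid\cdot\}$ factorizes correctly. Third, $Y(t)-Y(0)$ must map to $Y^*(t)-1$ so that the difference from $Y^*(t)$ is $d^t$-free. I would first verify these identities pointwise in $s$ under continuous censoring, and then extend them to general distributions via the product-integral representation.
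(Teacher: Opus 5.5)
Your route is the paper's. In S2 the weight $1/K$ is cancelled by conditioning on $(A,L,Z,\boldsymbol W)$ and using Assumption 8, $\mathbb{E}$ and $\int_0^t$ are exchanged, and the resulting uncensored functional is handed to the algebra of S1. Your fallback of invoking S1 at that point is exactly what the paper does, and it is sound. Your $K(s)$ versus $K(s-)$ caveat is also right: the paper writes $\mathbb{I}(C>s)$ where $dN(s)=\mathbb{I}(C\ge s)\,dN^*(s)$, which matters only if $C$ has atoms.

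Your self-contained Step 2, however, is wrong as written.
\begin{itemize}
\item The identity $AN^*=N^{*(0)}+A(N^{*(1)}-N^{*(0)})$ is false; that is the decomposition of $N^*$. In fact $AN^*=AN^{*(0)}+A(N^{*(1)}-N^{*(0)})$.
\item The bookkeeping is internally inconsistent. If the $A$-arm gave $d\,\tau^N$ and the $(1-A)$-arm gave $(1-d)\tau^N$, the sum would be $\tau^N$ plus a constant, with no dependence on $d$.
\item What actually happens involves $\kappa(\boldsymbol W)=\sum_{l,z}(2l-1)(2z-1)\mathbb{E}[A_l^{(z)}N_l^{*(0)}(t)\mid Z=z,\boldsymbol W]$. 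This is generally nonzero, since no assumption restricts how compliance covaries with the level $N_l^{*(0)}(t)$. It enters the $A$-arm as $d\,\kappa/\delta_A$ (alongside $d\,\tau^N$) and the $(1-A)$-arm as $(1-d)\kappa/\delta_A$, so it cancels only in the sum.
\end{itemize}
The clean fix is the identity $(2A-1)\mathbb{I}(A=d)=d-(1-A)$. The $d$-dependent part is then $\mathbb{E}[d^t(\boldsymbol W)\,\delta_{N^*(t)}(\boldsymbol W)/\delta_A(\boldsymbol W)]$. This equals $\mathbb{E}[\tau^N(t,\boldsymbol W)d^t(\boldsymbol W)]$ by the $\boldsymbol W$-level computation in the proof of Theorem~\ref{thm:thm3}, which uses Assumptions 1, 3, 5, 6 and 7. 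The remainder $-\mathbb{E}[(2Z-1)(2L-1)(1-A)N^*(t)/(\pi\delta_A)]$ is free of $d^t$. This version also avoids the latent ignorability $N^{*(a)}_L(t)\independent A\mid Z,L,\boldsymbol W,\boldsymbol U$, which S1 uses but which is not among Assumptions 1--8.

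For the constraint, the real obstacle is not the factorization of $\mathbb{E}\{\mathbb{I}(C\ge s)\,dN_D(s)\mid\cdot\}$; that is immediate from Assumption 8. The problem is your claim that the jumps of $Y$ come only from deaths. For $Y(s)=\mathbb{I}(X\ge s)$, the increments over $(0,t]$ total $-\{N_D(t)+N_C(t)\}$, and the $1/K$ weight does not remove the censoring drops. For continuous $C$ they contribute
\[
-\mathbb{E}\bigl[(2Z-1)(2L-1)(2A-1)\mathbb{I}(A=d^t(\boldsymbol W))\,g/(\pi\delta_A)\bigr],\qquad g=\int_0^t P(D>s\mid A,L,Z,\boldsymbol W)\,d\Lambda_C(s;A,L,Z,\boldsymbol W).
\]
The coefficient of this term on $d^t$ is the double difference of $\mathbb{E}[g\mid L,Z,\boldsymbol W]$ divided by $\delta_A$, which is generally nonzero.

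So the integral must be read with $dY(s)=-dN_D(s)=-\mathbb{I}(C\ge s)\,d\mathbb{I}(D\le s)$. S2 makes the same identification silently, since it treats only $dN$ and leaves the $Y$ case implicit. With that reading your reduction to $Y^*(t)-1$ goes through. The constant is harmless: $\mathbb{E}[(2Z-1)(2L-1)(2A-1)\mathbb{I}(A=d)/(\pi\delta_A)\mid\boldsymbol W]=1$ for every $d$, and it cancels in the difference $\mathbb{I}(A=d^t)-\mathbb{I}(A=\widetilde d)$ anyway.
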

\noindent
The proof of Theorem \ref{thm:thm2} is provided in the Supplementary Section S2. Unlike Theorem \ref{thm:thm1}, this formulation uses the cumulative event information over the interval $[0,t]$. In contrast to the IPW mapping in Theorem \ref{thm:thm1}, which relies on the uncensored subsample, the present approach allows individuals who are at risk at time $t$ to contribute to the estimation, even if they are censored at a later time $t' > t$.

\subsection{Multiply Robust Identification}
Both IPW estimators require the correct specification of all nuisance parameter models to ensure consistency. Specifically, these include the joint distribution $\pi(\cdot)$; the treatment trend denominator $\delta_A(\boldsymbol{W})$, which captures the instrument's effect on treatment across time; and the censoring survival function $K(\cdot)$, used to account for administrative censoring. Consequently, methods robust against model mis-specification are highly desired, where consistency is guaranteed if only a subset of the posited models is correctly specified. In this section, we propose a Multiply Robust Identification strategy. In our context, this is particularly complex as the estimator must simultaneously account for recurrent events, the competing risk of death, and censoring, all while operating within the iDID framework to address unmeasured confounding.\\

\noindent
Following the framework of \citet{cui2021semiparametric} and \citet{zhao2025semiparametric}, we first derive a multiply robust identification for $\mathbb{E}[\tau^N(\boldsymbol W)]$ and $\mathbb{E}[\tau^Y(\boldsymbol W)]$. We identify these quantities through a modified Wald-type identification formula.  Let $\widetilde{N}(t)=\frac{\Delta}{K(X-,A,L,Z,\boldsymbol W)}N(t)$ and $\widetilde{Y}(t)=\frac{\Delta}{K(X-,A,L,Z,\boldsymbol W)}Y(t)$. For $Q\in\{\widetilde{N}(t),\widetilde{Y}(t)\}$, we define $\mu_Q(l,z,\boldsymbol w)=\mathbb{E}[Q|L=l,Z=z,\boldsymbol W=\boldsymbol w]$ and $\delta_{Q}(\boldsymbol w)= \mu_{Q}(1,1,\boldsymbol w)-\mu_{Q}(1,0,\boldsymbol w)-\mu_{Q}(0,1,\boldsymbol w)+\mu_{Q}(0,0,\boldsymbol w)$.
\begin{theorem}\label{thm:thm3}
    Under Assumptions \ref{ass:ass1}--\ref{ass:ass8}, the following results hold:
    \begin{align*}
        \beta(t):=\mathbb{E}[\tau^N(t,\boldsymbol W)]=\mathbb{E}\left[\frac{\delta_{\widetilde{N}(t)}(\boldsymbol W)}{\delta_{A}(\boldsymbol W)}\right],\quad \eta(t):=\mathbb{E}[\tau^Y(t,\boldsymbol W)]=\mathbb{E}\left[\frac{\delta_{\widetilde{Y}(t)}(\boldsymbol W)}{\delta_{A}(\boldsymbol W)}\right]\cdot
    \end{align*}
\end{theorem}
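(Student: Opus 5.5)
The plan is to prove the statement pointwise in $\boldsymbol W$. I will show that $\delta_{\widetilde N(t)}(\boldsymbol W)=\tau^N(t,\boldsymbol W)\,\delta_A(\boldsymbol W)$, and likewise for $\widetilde Y(t)$. Dividing by $\delta_A(\boldsymbol W)$, which is nonzero almost surely by Assumption \ref{ass:ass4}, and then taking the outer expectation gives both displays. The argument has two steps. First, remove censoring: show $\mu_{\widetilde N(t)}(l,z,\boldsymbol w)=\mathbb{E}[N^*(t)\mid L=l,Z=z,\boldsymbol W=\boldsymbol w]$. Second, run the iDID Wald argument on the uncensored $N^*(t)$ and $Y^*(t)$, exactly as in the continuous-outcome case of \citet{zhao2025semiparametric}. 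The $Y$ case is identical to the $N$ case with $N$ replaced by $Y$, so I only describe $N$.

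\textbf{Step 1 (censoring).} Condition on $(A,L,Z,\boldsymbol W)$ and on the full-data quantities $(D,N^*(t))$. On the event $\Delta=1$ we have $X=D$ and $N(t)=N^*(t\wedge D)=N^*(t)$. Therefore
\[
\mathbb{E}\bigl[\widetilde N(t)\mid D,N^*,A,L,Z,\boldsymbol W\bigr]=\frac{N^*(t)\,P(C\ge D\mid D,N^*,A,L,Z,\boldsymbol W)}{K(D-;A,L,Z,\boldsymbol W)}.
\]
By Assumption \ref{ass:ass8}, the conditional probability equals the censoring survival function evaluated at $D-$. I also need $K$, defined through the modified at-risk process $Y^\dagger$, to coincide with $P(C\ge u\mid A,L,Z,\boldsymbol W)$. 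This follows from the standard argument of \citet{baer2023causal}: with the tie convention built into $Y^\dagger$, $d\Lambda_C$ is the crude censoring hazard, and independent censoring makes it equal to the net hazard. The ratio is then $1$, provided the positivity $K>0$ holds on the relevant support, which I will state as implicit. Taking expectations over $A$ gives the claim in Step 1. This step is where the details of the product-integral and ties convention have to be checked carefully. Since Assumption \ref{ass:ass8} is stated only for $N^*(t),Y^*(t)$, I may need to read it as independence from $D$ as well, and I will say so explicitly.

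\textbf{Step 2 (iDID).} By Assumption \ref{ass:ass1},
\[
N^*(t)=N_L^{*(0)}(t)+A\{N_L^{*(1)}(t)-N_L^{*(0)}(t)\},\qquad A=A_L^{(Z)}.
\]
Condition on $(L=l,Z=z,\boldsymbol W)$. Assumption \ref{ass:ass3} lets me drop $L$ from the conditioning, so the mean equals $\mathbb{E}[N_l^{*(0)}\mid Z=z,\boldsymbol W]+\mathbb{E}[A_l^{(z)}\Delta_l^N\mid Z=z,\boldsymbol W]$, where $\Delta_l^N:=N_l^{*(1)}(t)-N_l^{*(0)}(t)$.

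For the baseline term, the double difference over $(l,z)$ gives
\[
\mathbb{E}[N_1^{*(0)}-N_0^{*(0)}\mid Z=1,\boldsymbol W]-\mathbb{E}[N_1^{*(0)}-N_0^{*(0)}\mid Z=0,\boldsymbol W],
\]
which is $0$ by Assumption \ref{ass:ass5}.

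For the product term, Assumption \ref{ass:ass5} gives joint independence of $Z$ from $(A_l^{(z)},\Delta_l^N)$, so $Z=z$ can be dropped. Writing $A_l^{(z)}=A_l^{(0)}+z(A_l^{(1)}-A_l^{(0)})$, the $z$-difference leaves $\mathbb{E}[(A_l^{(1)}-A_l^{(0)})\Delta_l^N\mid\boldsymbol W]$. Assumption \ref{ass:ass6} factors this as $\mathbb{E}[A_l^{(1)}-A_l^{(0)}\mid\boldsymbol W]\,\tau_l^N$. Assumption \ref{ass:ass7} makes $\tau_l^N=\tau^N$ common across $l$. The $l$-difference then yields $\tau^N(t,\boldsymbol W)\sum_l(2l-1)\mathbb{E}[A_l^{(1)}-A_l^{(0)}\mid\boldsymbol W]$.

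Applying the same computation to $A$ itself, using Assumptions \ref{ass:ass3} and \ref{ass:ass5}, shows this sum equals $\delta_A(\boldsymbol W)$. This completes the identity $\delta_{\widetilde N(t)}(\boldsymbol W)=\tau^N(t,\boldsymbol W)\,\delta_A(\boldsymbol W)$.

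I expect the main obstacle to be Step 1, specifically matching $K$ built from $Y^\dagger$ to the true censoring survival at $D-$. Step 2 is routine bookkeeping.
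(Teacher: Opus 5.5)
Your proposal is correct and follows essentially the same route as the paper: first show $\mu_{\widetilde N(t)}(l,z,\boldsymbol w)=\mu_{N^*(t)}(l,z,\boldsymbol w)$ by conditioning on $(A,L,Z,\boldsymbol W,D,N^*(t))$ and applying Assumption~\ref{ass:ass8}. Then run the iDID Wald calculation (consistency; random sampling to drop $L$; independence and exclusion to cancel the baseline trend and drop $Z$; no common effect modifier; stable effects) and identify the resulting compliance double difference with $\delta_A(\boldsymbol W)$. The points you flag are steps the paper uses implicitly without comment: reading Assumption~\ref{ass:ass8} as independence of $C$ from $D$ as well, checking that $K(u-)=P(C\ge u\mid A,L,Z,\boldsymbol W)$ (true since $Y^\dagger(u)=\mathbb{I}(D>u)\mathbb{I}(C\ge u)$), and using Assumption~\ref{ass:ass4} for $\delta_A(\boldsymbol W)\neq 0$.
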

\noindent
The proof of Theorem \ref{thm:thm3} is provided in the Supplementary Section S4. 
To establish the multiply robust framework, we utilize the theory of von Mises expansions. Let $\mathbb{P}$ and $\bar{\mathbb{P}}$ denote two observed data distributions, with $\mathbb{E}$ and $\bar{\mathbb{E}}$ representing their respective expectations. For a given parameter $\psi$, the von Mises expansion of $\psi$ at $\bar{\mathbb{P}}$ centered at $\mathbb{P}$ is given by:
$$\psi(\bar{\mathbb{P}}) - \psi(\mathbb{P}) = (\bar{\mathbb{E}} - \mathbb{E}) D(O; \bar{\mathbb{P}}) + R(\bar{\mathbb{P}}, \mathbb{P})\cdot$$
In this expansion, $D(O; \bar{\mathbb{P}})$ captures the first-order behavior of the functional $\psi$, while $R(\bar{\mathbb{P}}, \mathbb{P})$ represents the second-order remainder term. For the statement of the next theorem, we introduce the following notation. 
\begin{align*}
    & F^*(u,t,a,l,z,\boldsymbol w)=E(\mathbb{I}(D>u)N^*(t)|A=a,L=l,Z=z,\boldsymbol W=\boldsymbol w),\\
    & F(u,t,a,l,z,\boldsymbol w)=\mathbb{E}[\mathbb{I}(X>u)\widetilde{N}(t)|A=a,L=l,Z=z,\boldsymbol W=\boldsymbol w],\\
    & \Lambda_D(t; a, l,z,\boldsymbol w) = \int_{(0,t]} \frac{dE\{N_D(u) \mid A=a, L=l, Z=z,\boldsymbol W=\boldsymbol w\}}{E\{Y(u) \mid A=a, L=l, Z=z,\boldsymbol W=\boldsymbol w\}},\\
    & H(t,a, l,z,\boldsymbol w)=\prod_{u \in (0, t]} \{1 - d\Lambda_D(u; a, l,z,\boldsymbol w)\},\\
    &M_C(t,a,l,z,\boldsymbol w)=N_C(t)-\int_{(0,t]}Y^{\dagger}(u)d\Lambda_C(u,a,l,z,\boldsymbol w)\cdot
\end{align*}
\begin{theorem}
     Consider two observed data distributions specified by $\mathbb{P}$ and $\bar{\mathbb{P}}$. Under Assumptions \ref{ass:ass1}--\ref{ass:ass8}, for all $0\leq t<\infty$,\\
     (i) The estimand $\beta(t)$ admits a von Mises expansion with $D_{\beta}(t,O,\mathbb{P})$ given by,
     \begin{align*}
     &\frac{\delta_{\widetilde{N}(t)}(\boldsymbol W)}{\delta_{A}(\boldsymbol W)}(\mathbb{P})-\beta(t)+\frac{(2Z-1)(2L-1)}{\pi(L,Z,\boldsymbol W)\delta_A(\boldsymbol W)}\left[\widetilde{N}(t)-\mu_{\widetilde{N}(t)}(L,Z,\boldsymbol W)-\right.\\
    &\left.\frac{\delta_{\widetilde{N}(t)}(\boldsymbol W)}{\delta_{A}(\boldsymbol W)}\{A-\mu_{A}(L,Z,\boldsymbol W)\}+ \int_{0}^\infty \frac{F(u,t,A,L,Z,\boldsymbol W)}{H(u,A,L,Z,\boldsymbol W)}\frac{dM_C(u,A,L,Z,\boldsymbol W)}{K(u,A,L,Z,\boldsymbol W)}\right]\cdot
\end{align*}
(ii) The estimand $\eta(t)$ admits a von Mises expansion with $D_{\eta}(t,O,\mathbb{P})$ given by,
    \begin{align*}
     &\frac{\delta_{\widetilde{Y}(t)}(\boldsymbol W)}{\delta_{A}(\boldsymbol W)}(\mathbb{P})-\eta(t)+\frac{(2Z-1)(2L-1)}{\pi(L,Z,\boldsymbol W)\delta_A(\boldsymbol W)}\left[\widetilde{Y}(t)-\mu_{\widetilde{Y}(t)}(L,Z,\boldsymbol W)-\right.\\
    &\left.\frac{\delta_{\widetilde{Y}(t)}(\boldsymbol W)}{\delta_{A}(\boldsymbol W)}\{A-\mu_{A}(L,Z,\boldsymbol W)\}+ \int_{0}^\infty \frac{H(u\vee  t,A,L,Z,\boldsymbol W)}{H(u,A,L,Z,\boldsymbol W)}\frac{dM_C(u,A,L,Z,\boldsymbol W)}{K(u,A,L,Z,\boldsymbol W)}\right]\cdot
\end{align*}
The explicit forms of the remainder terms $R_{\beta}(t,O,\bar{\mathbb{P}}, \mathbb{P})$ and $R_{\eta}(t,O,\bar{\mathbb{P}}, \mathbb{P})$ are shown in Supplementary Section S4 where they are shown to be of second order along with the proof.
\end{theorem}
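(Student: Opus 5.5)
The plan is to verify the expansion directly: define the remainder as $R_\beta(t,\bar{\mathbb{P}},\mathbb{P}):=\beta(t;\bar{\mathbb{P}})-\beta(t;\mathbb{P})-(\bar{\mathbb{E}}-\mathbb{E})D_\beta(t,O;\bar{\mathbb{P}})$ and show it is a sum of products of nuisance differences. I will use Theorem \ref{thm:thm3}, which writes $\beta(t)=\mathbb{E}[\delta_{\widetilde N(t)}(\boldsymbol W)/\delta_A(\boldsymbol W)]$, as the functional to expand. Two facts make the centering work: $\bar{\mathbb{E}}D_\beta(t,O;\bar{\mathbb{P}})=0$, and the leading term $\bar\delta_{\widetilde N}/\bar\delta_A-\bar\beta$ cancels the difference of the plug-in parts. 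It therefore suffices to compute $-\mathbb{E}D_\beta(t,O;\bar{\mathbb{P}})$ under the true $\mathbb{P}$ and subtract it from $\bar\beta-\beta$.

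\textbf{Step 1: the uncensored Wald part.} Ignore the martingale integral for now and condition on $(L,Z,\boldsymbol W)$. Since $\sum_{l,z}(2z-1)(2l-1)\,g(l,z,\boldsymbol w)$ is exactly the double difference, the weighted residual term becomes
\[
\mathbb{E}\Big[\sum_{l,z}\frac{(2z-1)(2l-1)\pi(l,z,\boldsymbol W)}{\bar\pi(l,z,\boldsymbol W)\bar\delta_A(\boldsymbol W)}\Big\{\bar\mu^*_{\widetilde N}-\bar\mu_{\widetilde N}-\tfrac{\bar\delta_{\widetilde N}}{\bar\delta_A}(\mu_A-\bar\mu_A)\Big\}(l,z,\boldsymbol W)\Big].
\]
Here $\bar\mu^*_{\widetilde N}$ is the conditional mean of $\widetilde N$ built with $\bar K$ but taken under the true law.
\begin{itemize}
\item Write $\pi/\bar\pi=1+(\pi-\bar\pi)/\bar\pi$.
\item The ``$1$'' part gives $\{\bar\delta^*_{\widetilde N}-\bar\delta_{\widetilde N}-(\bar\delta_{\widetilde N}/\bar\delta_A)(\delta_A-\bar\delta_A)\}/\bar\delta_A$.
\item Add $\bar\delta_{\widetilde N}/\bar\delta_A-\delta_{\widetilde N}/\delta_A$ and use the identity $\frac{a}{b}-\frac{\bar a}{\bar b}=\frac{a-\bar a}{\bar b}-\frac{a}{b\bar b}(b-\bar b)$.
\end{itemize}
This leaves three kinds of term: products $(\pi-\bar\pi)\times(\mu_{\widetilde N}-\bar\mu_{\widetilde N})$ or $(\pi-\bar\pi)\times(\mu_A-\bar\mu_A)$; a product $(\delta_A-\bar\delta_A)\times(\delta_{\widetilde N}/\delta_A-\bar\delta_{\widetilde N}/\bar\delta_A)$; and a first-order term $\bar\delta^*_{\widetilde N}-\delta_{\widetilde N}$ coming from the censoring weight. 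Up to that last term, this is the same algebra as in \citet{zhao2025semiparametric}, and I will follow their arrangement.

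\textbf{Step 2: the censoring augmentation (the main obstacle).} I need to show that the augmentation term
\[
\int_0^\infty \frac{\bar F}{\bar H}\,\frac{d\bar M_C}{\bar K},
\]
taken in expectation under $\mathbb{P}$ given $(A,L,Z,\boldsymbol W)$, cancels $\mathbb{E}[\Delta N(t)/\bar K(X-)]-\mathbb{E}[\Delta N(t)/K(X-)]$ up to a product term. The tool is the standard Duhamel/Gill--Johansen identity:
\[
\frac{K(s)}{\bar K(s)}-1=-\int_0^s \frac{K(u-)}{\bar K(u)}\,d(\Lambda_C-\bar\Lambda_C)(u).
\]
Combined with $\mathbb{E}[dM_C(u)\mid\cdot]=0$ under $\mathbb{P}$, where $d\bar M_C=dM_C-Y^\dagger d(\bar\Lambda_C-\Lambda_C)$, this lets me write both pieces as integrals against $d(\Lambda_C-\bar\Lambda_C)$.

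I then need $\mathbb{E}[\Delta\,\mathbb{I}(X>u)N(t)/K(X-)\mid\cdot]$ to factor as $F^*(u,t)$, and $\mathbb{E}[Y^\dagger(u)\mid\cdot]=H(u)K(u-)$. Both follow from Assumption \ref{ass:ass8}. After this, the difference reduces to
\[
\int F^*\Big\{\frac{1}{H}-\frac{\bar F}{\bar H\,F^*}\Big\}\frac{K}{\bar K}\,d(\Lambda_C-\bar\Lambda_C),
\]
which is a product of $(F/H-\bar F/\bar H)$ with $(\Lambda_C-\bar\Lambda_C)$, hence second order. The delicate parts are bookkeeping the left limits ($X-$ versus $u$) and the modified at-risk process $Y^\dagger$ at ties with death. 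Getting those wrong would leave a spurious first-order term, so I will check them on a discrete-time grid first.

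\textbf{Step 3: part (ii) and conclusion.} For $\eta(t)$ I repeat the argument with $N(t)$ replaced by $Y(t)$. The only change is the conditional function $F$: for $u<t$, $\mathbb{E}[\mathbb{I}(D>u)\,\mathbb{I}(D\ge t)\mid\cdot]/H(u)$ equals $H(t)/H(u)$, and for $u\ge t$ it equals $1$. This gives the stated $H(u\vee t)/H(u)$ integrand. Collecting the product terms from Steps 1–2 gives explicit remainders $R_\beta$ and $R_\eta$ that are sums of products of differences of $(\pi,\mu_A,\mu_{\widetilde N})$ and $(F/H,\Lambda_C)$, with denominators bounded away from zero by Assumptions \ref{ass:ass2} and \ref{ass:ass4}. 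This establishes the expansion with second-order remainders.
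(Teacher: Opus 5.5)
Your proposal is correct and follows essentially the same route as the paper's proof. The remainder splits into two pieces. The first is the iDID Wald part (plug-in difference plus the weighted residual), which yields products in $\pi$, $\mu_A$, $\mu_{\widetilde N}$ and $\delta_A$. The second is a censoring part: the IPCW error $\Delta N(t)\{1/\bar K(X-)-1/K(X-)\}$ and the augmentation evaluated under $\mathbb{P}$ (using $\mathbb{E}[Y^\dagger(u)\mid A,L,Z,\boldsymbol W]=H(u)K(u-)$) are both written as integrals against $d(\Lambda_C-\bar\Lambda_C)$, and they combine into an $(F/H-\bar F/\bar H)\times(\Lambda_C-\bar\Lambda_C)$ product. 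Your Duhamel identity with the factor $K(u-)/\bar K(u)$ is a more rigorous version of the paper's smooth quotient-rule step, and your derivation of $H(u\vee t)/H(u)$ spells out what the paper calls ``the same steps''. The only slip is that the combined censoring term is $\int H(u)\{\bar F(u,t)/\bar H(u)-F^*(u,t)/H(u)\}\frac{K(u-)}{\bar K(u)}\,d(\Lambda_C-\bar\Lambda_C)(u)$; your display drops the bounded factor $H(u)$ and flips the sign, which does not affect second-orderness.
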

\noindent
The existence of this expansion indicates that the observed data estimand is sufficiently smooth to be pathwise differentiable, guaranteeing the existence of at least one asymptotically linear estimator. 
To construct our multiply robust framework, we proceed with $D_{\beta}(t,O,\mathbb{P})$ and $D_{\eta}(t,O,\mathbb{P})$ as the corresponding influence functions for $\beta(t)$ and $\eta(t)$, respectively, accompanied by second-order remainder terms.\\

\noindent
Let $\mathcal{P}$ denote the class of observed data distributions. We define the following six sub-models $\mathcal{M}_1, \dots, \mathcal{M}_6 \subset \mathcal{P}$. In Theorem \ref{thm:thm5}, we show the multiple robustness of the constrained optimization problem in the union model, $\mathcal{M}_{\text{union}}=\bigcup_{j=1}^6 \mathcal{M}_j$, where:\\

\noindent
$\mathcal{M}_1$: $\pi, \mu_A, F, H$ are correctly specified; $\mathcal{M}_2$: $\pi, \mu_A, K$ are correctly specified;\\
\noindent
$\mathcal{M}_3$: $\pi, \delta_{\widetilde{N}}/\delta_A, \delta_{\widetilde{Y}}/\delta_A, F, H$ are correctly specified; $\mathcal{M}_4$: $\pi, \delta_{\widetilde{N}}/\delta_A, \delta_{\widetilde{Y}}/\delta_A, K$ are correctly specified;\\
\noindent
$\mathcal{M}_5$: $\mu_{\widetilde{N}}, \mu_{\widetilde{Y}}, \mu_A, F, H$ are correctly specified; 
$\mathcal{M}_6$: $\mu_{\widetilde{N}}, \mu_{\widetilde{Y}}, \mu_A, K$ are correctly specified.\\

\noindent
 Let $\mathbb{P}_0$ denote the true data generating mechanism. Let $\Delta^t_N(O):=D_{\beta}(t,O,\mathbb{P}_0)+\beta(t)$, $\Delta^t_Y(O):=D_{\eta}(t,O,\mathbb{P}_0)+\eta(t)$, and define 
\begin{align*}
    W^t_{N}&:=(2A-1)\left[\frac{\delta_{\widetilde{N}(t)}(\boldsymbol W)}{\delta_{A}(\boldsymbol W)}+\frac{(2Z-1)(2L-1)}{\pi(L,Z,\boldsymbol W)\delta_A(\boldsymbol W)}\left[\widetilde{N}(t)-\mu_{\widetilde{N}(t)}(L,Z,\boldsymbol W)-\right.\right.\\
    &\left.\left.\frac{\delta_{\widetilde{N}(t)}(\boldsymbol W)}{\delta_{A}(\boldsymbol W)}\{A-\mu_{A}(L,Z,\boldsymbol W)\}+ \int_{0}^\infty \frac{F(u,t,A,L,Z,\boldsymbol W)}{H(u,A,L,Z,\boldsymbol W)}\frac{dM_C(u,A,L,Z,\boldsymbol W)}{K(u,A,L,Z,\boldsymbol W)}\right]\right],\\
    W^t_{Y}&:=(2A-1)\left[\frac{\delta_{\widetilde{Y}(t)}(\boldsymbol W)}{\delta_{A}(\boldsymbol W)}+\frac{(2Z-1)(2L-1)}{\pi(L,Z,\boldsymbol W)\delta_A(\boldsymbol W)}\left[\widetilde{Y}(t)-\mu_{\widetilde{Y}(t)}(L,Z,\boldsymbol W)-\right.\right.\\
    &\left.\left.\frac{\delta_{\widetilde{Y}(t)}(\boldsymbol W)}{\delta_{A}(\boldsymbol W)}\{A-\mu_{A}(L,Z,\boldsymbol W)\}+ \int_{0}^\infty \frac{H(u\vee  t,A,L,Z,\boldsymbol W)}{H(u,A,L,Z,\boldsymbol W)}\frac{dM_C(u,A,L,Z,\boldsymbol W)}{K(u,A,L,Z,\boldsymbol W)}\right]\right]\cdot
\end{align*}
\begin{theorem}\label{thm:thm5}
Under the union of models $\mathcal{M}=\bigcup_{j=1}^6 \mathcal{M}_j$, the optimal policy in \eqref{eq:eq3} is identified by the following constrained optimization problem:
\vspace{-0.5cm}
\begin{align*}
\arg\min_{d^t\in \mathcal{D}}\mathbb{E}[\Delta^t_N(O)d^t(\boldsymbol W)]\hspace{0.2cm} \text{subject to } \mathbb{E}[\Delta^t_Y(O)\{d^t(\boldsymbol W)-\widetilde{d}(\boldsymbol W)\}]>0
\end{align*}
which is equivalent to:
\vspace{-0.5cm}
\begin{align*}
\arg\min_{d^t\in \mathcal{D}}\mathbb{E}[W^t_{N}\mathbb{I}\{A=d^t(\boldsymbol W)\}]\hspace{0.2cm} \text{subject to } \mathbb{E}[W^t_{Y}(\mathbb{I}\{A=d^t(\boldsymbol W)\}-\mathbb{I}\{A=\widetilde{d}(\boldsymbol W)\})]>0
\end{align*}
\end{theorem}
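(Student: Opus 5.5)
The proof has two parts. First, show that under each submodel $\mathcal{M}_j$ the conditional means satisfy $\mathbb{E}[\Delta^t_N(O)\mid \boldsymbol W]=\tau^N(t,\boldsymbol W)$ and $\mathbb{E}[\Delta^t_Y(O)\mid \boldsymbol W]=\tau^Y(t,\boldsymbol W)$, with nuisances evaluated at their possibly misspecified limits. Then plugging into \eqref{eq:eq3} gives the first formulation. Second, rewrite $d^t(\boldsymbol W)$ via $\mathbb{I}\{A=d^t(\boldsymbol W)\}$ to get the $W^t$ formulation.

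For the first step I would condition on $\boldsymbol W$ and remove the censoring integral. When $K$ is correct ($\mathcal{M}_2,\mathcal{M}_4,\mathcal{M}_6$), $M_C$ is a martingale with respect to the full filtration given $(A,L,Z,\boldsymbol W)$ by Assumption \ref{ass:ass8}. The integrand $F/(HK)$ is predictable whatever $F,H$ are, so the augmentation term has mean zero. Also $\mathbb{E}[\widetilde N(t)\mid A,L,Z,\boldsymbol W]$ equals the censoring-free mean of $N^*(t)$, by the standard IPCW argument used in Theorem \ref{thm:thm1}.

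When $F,H$ are correct but $K$ is not ($\mathcal{M}_1,\mathcal{M}_3,\mathcal{M}_5$), I would use the usual augmented-IPCW identity: $\widetilde N(t)$ plus the augmentation term has conditional mean $F(0,t,\cdot)$-type quantities independent of $K$. This is verified through the Duhamel equation relating $\bar K/K$, as in the remainder computation of Supplementary Section S4. In particular I would import the second-order remainder $R_\beta$ from the von Mises theorem. It contains a product of $(F/H$ error$)\times(1/K$ error$)$, which vanishes when either is correct.

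After this reduction, the conditional expectation becomes a Wald-type iDID expression:
\[
\bar m(\boldsymbol W)+\sum_{l,z}\frac{(2z-1)(2l-1)\pi(l,z,\boldsymbol W)}{\bar\pi(l,z,\boldsymbol W)\bar\delta_A(\boldsymbol W)}\Big[\mu_Q(l,z)-\bar\mu_Q(l,z)-\bar m(\boldsymbol W)\{\mu_A(l,z)-\bar\mu_A(l,z)\}\Big],
\]
where $\bar m$ denotes the working $\delta_Q/\delta_A$.

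I would then treat three cases.
\begin{itemize}
\item If $\pi$ and $\mu_A$ are correct, the $\mu_A$ discrepancy vanishes and the sum over $(l,z)$ gives $(\delta_Q-\bar\delta_Q)/\delta_A$. Since $\bar m$ is built from $\bar\delta_Q/\delta_A$, the total is $\delta_Q/\delta_A$.
\item If $\pi$ and $\delta_Q/\delta_A$ are correct, the summation gives $\{\delta_Q-m\,\delta_A\}/\bar\delta_A$ minus the corresponding working-model terms, which cancel, leaving $m=\delta_Q/\delta_A$. The working-model terms cancel because $\bar\delta_Q-\bar m\bar\delta_A=0$ by construction.
\item If $\mu_Q$ and $\mu_A$ are correct, the bracket is identically zero and $\bar m=m$.
\end{itemize}
In each case the conditional mean is $\delta_Q(\boldsymbol W)/\delta_A(\boldsymbol W)$. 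By Theorem \ref{thm:thm3}'s pointwise argument (Assumptions \ref{ass:ass1}--\ref{ass:ass7}) this equals $\tau^N$ or $\tau^Y$.

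The second step is algebraic. Using $\mathbb{I}\{A=d\}=Ad+(1-A)(1-d)$, we get $(2A-1)\mathbb{I}\{A=d\}=d-(1-A)$. Then
\[
W^t_N\,\mathbb{I}\{A=d^t(\boldsymbol W)\}=\Delta^t_N(O)\,d^t(\boldsymbol W)-(1-A)\Delta^t_N(O).
\]
The last term does not involve $d^t$, so the objective changes only by a constant. In the constraint the difference $\mathbb{I}\{A=d^t\}-\mathbb{I}\{A=\widetilde d\}$ makes the constant cancel exactly, so the two constraints coincide.

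The main obstacle is the censoring robustness under a wrong $K$: showing that the augmentation term exactly corrects the bias of $\widetilde N(t)$ when $F,H$ are right. I would handle it via the S4 remainder rather than recomputing from scratch. The iDID algebra also needs care that $\bar\delta_A\neq0$, which follows from Assumption \ref{ass:ass4}.
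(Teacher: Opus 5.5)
Your proposal is correct and follows the paper's proof. You show that $\mathbb{E}[\Delta^t_N(O)\mid\boldsymbol W]=\tau^N(t,\boldsymbol W)$ and $\mathbb{E}[\Delta^t_Y(O)\mid\boldsymbol W]=\tau^Y(t,\boldsymbol W)$ under each $\mathcal M_j$ from the product structure of the Theorem 4 remainder, spelling out the AIPCW step and the three iDID cases that the paper only calls ``easy to see''. You then use $(2A-1)\mathbb{I}\{A=d\}=d-(1-A)$, which is the paper's identity $2\mathbb{I}\{A=d\}-1=(2A-1)(2d-1)$ rearranged.

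One small correction: in the $\mathcal M_3/\mathcal M_4$ case, the working $\bar\delta_A$ must be nonzero as a condition on the working model. This does not follow from Assumption~\ref{ass:ass4}, which only concerns the true $\delta_A$.
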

\noindent
The proof of this theorem is given in the Supplementary Section S5. A summary of the two IPW methods and the proposed multiply robust method is provided in Table \ref{tab:method_Comparison}.\\

\noindent
\begin{remark}
    Due to the iDID setup, the absolute value function $V^t_N(d) = \mathbb{E}[N^{*(d^t(\boldsymbol W))}(t)]$ is not point-identified. However, the contrast relative to the behavioral policy is identified. We therefore evaluate the policy gain, defined as the difference between the value function under the optimal policy and the behavioral policy, mathematically let $\Gamma^t(d^t_{\boldsymbol{\theta}}, \widetilde{d}) = V^t_N(d^t_{\boldsymbol{\theta}}) - V^t_N(\widetilde{d})$, which is identified by $\mathbb{E}[\Delta^t_N(O)\{d^t_{\boldsymbol{\theta}}(\boldsymbol{W}) - \widetilde{d}(\boldsymbol{W})\}]$.
    This quantity can be consistently estimated using the two IPW methods and the proposed multiply robust method.
\end{remark}

\newcolumntype{P}[1]{>{\RaggedRight\arraybackslash}p{#1}}
\begin{table}[ht]
\centering
\caption{Comparison of IPW and AIPW Estimators}
\label{tab:method_Comparison}
\renewcommand{\arraystretch}{1.2}
\begin{tabular}{P{1.5cm}P{3.5cm}P{4.5cm}P{3.5cm}}
\toprule
 & \multicolumn{1}{c}{\textbf{Features}} & \multicolumn{1}{c}{\textbf{Pros}} & \multicolumn{1}{c}{\textbf{Cons}} \\
\midrule
\multirow{4}{*}{\textbf{IPW}}
  & \vspace{-0.5\baselineskip}\begin{itemize}[leftmargin=*, nosep, topsep=0pt, itemsep=2pt]
      \item Instrumented DID
      \item Constrained Optimization
    \end{itemize}
  & \vspace{-0.5\baselineskip}\begin{itemize}[leftmargin=*, nosep, topsep=0pt, itemsep=2pt]
      \item Accounts for Unmeasured Confounding
      \item Faster Computation
    \end{itemize}
  & \vspace{-0.5\baselineskip}\begin{itemize}[leftmargin=*, nosep, topsep=0pt, itemsep=2pt]
      \item Parametric Nuisance Models
      \item Only non-censored units up to $t$
    \end{itemize} \\
\midrule
\multirow{4}{*}{\textbf{AIPW}}
  & \vspace{-0.5\baselineskip}\begin{itemize}[leftmargin=*, nosep, topsep=0pt, itemsep=2pt]
      \item Instrumented DID
      \item Constrained Optimization
    \end{itemize}
  & \vspace{-0.5\baselineskip}\begin{itemize}[leftmargin=*, nosep, topsep=0pt, itemsep=2pt]
      \item Accounts for Unmeasured Confounding
      \item Multiply Robust
      \item Flexible Nuisance Models
    \end{itemize}
  & \vspace{-0.5\baselineskip}\begin{itemize}[leftmargin=*, nosep, topsep=0pt, itemsep=2pt]
      \item Longer Computation
      \item Only non-censored units up to $t$
    \end{itemize} \\
\bottomrule
\end{tabular}
\begin{tablenotes}
\small
\item IPW2 includes all units other than those censored at time zero.
\end{tablenotes}
\end{table}
\section{Estimation and Asymptotics}
\subsection{Linear Class of Decision policies}
We focus on a class of policies parametrized using a finite dimensional vector of parameters. Specifically, we consider the class: $\mathcal{D}_{\boldsymbol \theta} = \{d^t_{\boldsymbol \theta}(\boldsymbol W) = \mathbb{I}(\boldsymbol \theta' \widetilde{\boldsymbol W} > 0) : \boldsymbol \theta \in \mathbb{R}^{p+1}\},$
where $\widetilde{\boldsymbol W} = (1, \boldsymbol W')'$.

\begin{remark}
    We focus on a parametric linear class for both interpretability and practicality. Linear policies yield transparent policies  and enable analysis of the convergence rate of the estimated policy (Theorem \ref{thm:thm6}). However, the identification results developed in  Theorems \ref{thm:thm1}, \ref{thm:thm2}, and \ref{thm:thm5} are not restricted to this class and extend to more general, potentially nonparametric policies. In particular, the framework can accommodate richer function classes such as reproducing kernel Hilbert spaces, as in outcome-weighted learning approaches  \citep{zhao2012estimating} 
\end{remark}
\subsection{Optimization through Lagrange Multiplier}
From here on, for the purpose of estimation and asymptotics, we demonstrate the procedure through the more general multiply robust method. For the IPW methods, the derivations can be performed similarly. To operationalize the constrained optimization problem in Theorem \ref{thm:thm5}, we reformulate it as a Lagrangian multiplier problem. Let $\Phi^t_{1}(\boldsymbol \theta)=-\mathbb{E}[\Delta^t_N(O)d^t_{\boldsymbol \theta}(\boldsymbol W)]$, $\Psi^t_{1}(\boldsymbol \theta)=\mathbb{E}[\Delta^t_Y(O)\{d^t_{\boldsymbol \theta}(\boldsymbol W)-\widetilde{d}(\boldsymbol W)\}]$, $\Phi^t_2(\boldsymbol \theta)=-\mathbb{E}[W^t_{N}\cdot\mathbb{I}\{A=d^t_{\boldsymbol \theta}(\boldsymbol W)\}]$ and $\Psi^t_2(\boldsymbol \theta)=\mathbb{E}[W^t_{Y}(\mathbb{I}\{A=d^t_{\boldsymbol \theta}(\boldsymbol W)\}-\mathbb{I}\{A=\widetilde{d}(\boldsymbol W)\})]$. Moreover let $\xi^t_1(\boldsymbol \theta)=\mathbb{I}(\Psi^t_1(\boldsymbol \theta) < 0)$ and $\xi^t_2(\boldsymbol \theta)=\mathbb{I}(\Psi^t_2(\boldsymbol \theta) < 0)$.
The optimization task is expressed as:
\begin{align*}
\boldsymbol \theta^{*t} &= \arg\max_{\boldsymbol \theta\in \Theta} \left[ \Phi^t_1(\boldsymbol \theta) - \lambda \cdot \xi^t_1(\boldsymbol \theta) \right] = \arg\max_{\boldsymbol \theta\in \Theta} \left[ \Phi^t_2(\boldsymbol \theta) - \lambda \cdot \xi^t_2(\boldsymbol \theta)\right],
\end{align*}
where $\lambda$ is a sufficiently large pre-specified positive constant that serves as a penalty parameter to ensure the safety constraint is satisfied. 
\subsection{Estimation}
For the purpose of estimation, we replace the population quantities $\Phi^t_{1}(\boldsymbol{\theta}), \Phi^t_{2}(\boldsymbol{\theta}), \Psi^t_{1}(\boldsymbol{\theta}),$ and $\Psi^t_{2}(\boldsymbol{\theta})$ with their respective sample estimators. The multiply robust estimators for the objective $\widehat{\Phi}^t_{1}(\boldsymbol{\theta})$ and the constraint $\widehat{\Psi}^t_{1}(\boldsymbol{\theta})$ are given by:
\begin{align*}
\widehat{\Phi}^t_{1}(\boldsymbol \theta) &= -\frac{1}{n}\sum_{i=1}^n\left[\frac{\widehat{\delta}_{\widetilde{N}(t)}(\boldsymbol W_i)}{\widehat{\delta}_{A}(\boldsymbol W_i)}+\frac{(2Z_i-1)(2L_i-1)}{\widehat{\pi}(L_i,Z_i,\boldsymbol W_i)\widehat{\delta}_A(\boldsymbol W_i)}\left\{\frac{\Delta_i\cdot N_i(t)}{\widehat{K}(X_i-,A_i,L_i,Z_i,\boldsymbol W_i)}\right.\right.\\
&\quad\left.\left.-\widehat{\mu}_{\widetilde{N}(t)}(L_i,Z_i,\boldsymbol W_i)-\frac{\widehat{\delta}_{\widetilde{N}(t)}(\boldsymbol W_i)}{\widehat{\delta}_{A}(\boldsymbol W_i)}\{A-\widehat{\mu}_{A}(L_i,Z_i,\boldsymbol W_i)\}\right.\right.\\
&\quad\left.\left.+ \int_{0}^\infty \frac{\widehat{F}(u,t,A_i,L_i,Z_i,\boldsymbol W_i)}{\widehat{H}(u,A_i,L_i,Z_i,\boldsymbol W_i)}\frac{d\widehat{M}_C(u,A_i,L_i,Z_i,\boldsymbol W_i)}{\widehat{K}(u,A_i,L_i,Z_i,\boldsymbol W_i)}\right\}\right]d^t_{\boldsymbol \theta}(\boldsymbol W),
\end{align*}
\begin{align*}
\widehat{\Psi}^t_{1}(\boldsymbol \theta) &= \frac{1}{n}\sum_{i=1}^n\left[\frac{\widehat{\delta}_{\widetilde{Y}(t)}(\boldsymbol W_i)}{\widehat{\delta}_{A}(\boldsymbol W_i)}+\frac{(2Z_i-1)(2L_i-1)}{\widehat{\pi}(L_i,Z_i,\boldsymbol W_i)\widehat{\delta}_A(\boldsymbol W_i)}\left\{\frac{\Delta_i\cdot Y_i(t)}{\widehat{K}(X_i-,A_i,L_i,Z_i,\boldsymbol W_i)}\right.\right.\\
&\quad\left.\left.-\widehat{\mu}_{\widetilde{Y}(t)}(L_i,Z_i,\boldsymbol W_i)-\frac{\widehat{\delta}_{\widetilde{Y}(t)}(\boldsymbol W_i)}{\widehat{\delta}_{A}(\boldsymbol W_i)}\{A-\widehat{\mu}_{A}(L,Z,\boldsymbol W)\}\right.\right.\\
&\quad\left.\left.+ \int_{0}^\infty \frac{\widehat{H}(u \vee t,A_i,L_i,Z_i,\boldsymbol W_i)}{\widehat{H}(u,A_i,L_i,Z_i,\boldsymbol W_i)}\frac{d\widehat{M}_C(u,A_i,L_i,Z_i,\boldsymbol W_i)}{\widehat{K}(u,A_i,L_i,Z_i,\boldsymbol W_i)}\right\}\right][d^t_{\boldsymbol \theta}(\boldsymbol W)-\widetilde{d}(\boldsymbol W_i)].
\end{align*}
Hence $\widehat{\xi}^t_1(\boldsymbol \theta)=\mathbb{I}(\widehat{\Psi}^t_1(\boldsymbol \theta) < 0)$.
Similarly, we can define the sample versions $\widehat{\Phi}^t_{2}(\boldsymbol{\theta}), \widehat{\Psi}^t_{2}(\boldsymbol{\theta})$ and $\widehat{\xi}^t_2(\boldsymbol \theta)$. The estimated optimal treatment policy parameter $\widehat{\boldsymbol{\theta}^t}$ is then given by:
\begin{align*}
\widehat{\boldsymbol{\theta}}^t &= \arg\max_{\boldsymbol{\theta}\in \Theta} \left[ \widehat{\Phi}^t_1(\boldsymbol{\theta}) - \lambda \cdot \widehat{\xi}^t_1(\boldsymbol \theta)\right] = \arg\max_{\boldsymbol{\theta}\in \Theta} \left[\widehat{\Phi}^t_2(\boldsymbol{\theta}) - \lambda \cdot\widehat{\xi}^t_2(\boldsymbol \theta)\right]
\end{align*}
For the two IPW estimators, we need to estimate the nuisance functions $\pi(l,z,\boldsymbol W)$, $\mu_a(l,z,\boldsymbol W)$,
$K(s,a,l,z,\boldsymbol W)$ and $\widetilde{\pi}(\boldsymbol W)$. To ensure the proper convergence and asymptotic normality of the IPW-based policy parameters, these nuisance functions must be estimated at a sufficiently fast rate, typically $O_p(n^{-1/2})$. This requirement restricts choices to parametric or certain semiparametric approaches, such as generalized linear models and Cox proportional hazards models.\\

\noindent
For the multiply robust estimator, we employ flexible machine learning methods to estimate the nuisance parameter models specified by $\pi(l, z, \boldsymbol W)$, $\mu_a(l, z, \boldsymbol W)$, $K(s, a, l, z, \boldsymbol W)$, $\mu_{\widetilde{N}(t)}(l, z, \boldsymbol W)$, $\mu_{\widetilde{Y}(t)}(L, Z, \boldsymbol W)$, $F(u, t, A, L, Z, \boldsymbol W)$, $H(u, A, L, Z, \boldsymbol W)$ and $\widetilde{\pi}(\boldsymbol W)$. To maintain the $\sqrt{n}$-consistency of the policy parameters and protect against overfitting, we utilize the technique of cross-fitting \citep{klaassen1987consistent,zheng2011cross}. To optimize the policy, we employ the genetic algorithm implemented in the R package \texttt{rgenoud} to identify the global optimum $\widehat{\boldsymbol{\theta}}$. This choice is necessitated by the fact that our objective functions are both non-convex and non-smooth, rendering traditional derivative-based optimization algorithms unsuitable. The genetic algorithm provides a robust search mechanism across the parameter space, ensuring the identification of the optimal treatment policy even in the presence of complex, discontinuous policy surfaces.
\subsection{Asymptotic Properties}
In this section, we study the asymptotic properties of the multiply robust estimator. We define the objective function for optimization as $G^t(\boldsymbol{\theta}) = \Phi^t_1(\boldsymbol{\theta}) - \lambda \cdot \xi^t_1(\boldsymbol{\theta})$.

\begin{assumption}\label{ass:ass9}
For all $0\leq t<\infty$, the following holds:
\begin{itemize}
\item[(i)] The supports of $N(t)$ and $\boldsymbol{W}$ are bounded.
\item[(ii)] The functions $\Phi^t_1(\boldsymbol{\theta})$ and $\Psi^t_1(\boldsymbol{\theta})$ are twice differentiable in neighborhoods $\mathcal{N}_1$ and $\mathcal{N}_2$ of $\boldsymbol{\theta}^{*t}$, respectively, such that $\mathcal{N} = \mathcal{N}_1 \cap \mathcal{N}_2 \neq \emptyset$.
\item[(iii)] For $\boldsymbol{\theta} \in \Theta$, $|\Psi^t_1(\boldsymbol{\theta})| > 0$ almost surely.
\item[(iv)] Margin Condition: There exists a constant $\delta_0 > 0$ such that $P(0 < |\boldsymbol{\theta}^{'} \widetilde{\boldsymbol{W}}| < \delta) = O(\delta)$, where the $O(\delta)$ term is uniform in $0 < \delta < \delta_0$.
\end{itemize}
\end{assumption}
\noindent
Assumptions \ref{ass:ass9}(i) and (iii) are standard regularity conditions used to establish uniform convergence of the empirical processes. Assumption \ref{ass:ass9}(iii) ensures that the safety constraint $\mathbb{I}(\Psi^t_1(\boldsymbol{\theta}) < 0)$ is identifiable and converges as $n \to \infty$, preventing the constraint boundary from becoming degenerate. Margin conditions such as Assumption \ref{ass:ass9}(iv) are common in policy learning literature \citep{tsybakov2004optimal, luedtke2016statistical} to control the behavior of the decision boundary and ensure the objective function is well-behaved near the optimal parameter $\boldsymbol{\theta}^{*t}$. 

\begin{assumption}\label{ass:ass10}
 We assume the following rate conditions for nuisance parameter estimation, that for any $l,z=0,1$ and for a $T$ large enough,
 \begin{align*}
   &||\widehat{\mu}_A(l,z,\boldsymbol W)-\mu_A(l,z,\boldsymbol W)||_{L_2}=o_p(n^{-1/4}), \hspace{0.1cm} ||\widehat{\mu}_{\widetilde{N}}(l,z,\boldsymbol W)-\mu_{\widetilde{N}}(l,z,\boldsymbol W)||_{L_2}=o_p(n^{-1/4}),\\
     & ||\widehat{\pi}(l,z,\boldsymbol W)-\pi(l,z,\boldsymbol W)||_{L_2}=o_p(n^{-1/4}),\\
     &\text{sup}_{u\leq L}||\widehat{K}(u,A,L,Z,\boldsymbol W)-K(u,A,L,Z,\boldsymbol W)||_{L_2}=o_p(n^{-1/4}),\\
     & \sup_{u\leq T}\left|\left|d\widehat{\Lambda}_C(u,A,L,Z,\boldsymbol W)d\Lambda_C(u,A,L,Z,\boldsymbol W)\right|\right|_{L_2}=o_p(n^{-1/4}),\\
      & \text{sup}_{u\leq T}||\widehat{H}(u,A,L,Z,\boldsymbol W)-H(u,A,L,Z,\boldsymbol W)||_{L_2}=o_p(n^{-1/4}),\\
      &\text{sup}_{u\leq T}||\widehat{F}(u,A,L,Z,\boldsymbol W)-F(u,A,L,Z,\boldsymbol W)||_{L_2}=o_p(n^{-1/4})
\end{align*}
\end{assumption}
\noindent
The rate conditions in Assumption \ref{ass:ass10} are standard in the semiparametric inference literature. These rates can be achieved by various flexible methods, such as ensemble learners or specific machine learning algorithms, provided the underlying nuisance functions satisfy certain smoothness or structural properties. Crucially, the nuisance parameters do not individually require $n^{-1/4}$ convergence rates for the results in Theorem \ref{thm:thm6} to hold. Due to the second-order nature of the von Mises remainder, it is sufficient that the product of the convergence rates of the relevant nuisance parameter pairs is $o_p(n^{-1/2})$. 
\begin{theorem}\label{thm:thm6}
Under Assumptions \ref{ass:ass1}--\ref{ass:ass10}, as $n \rightarrow \infty$, the following properties hold:
\begin{itemize}
\item[(i)] $\|\widehat{\boldsymbol{\theta}}^t - \boldsymbol{\theta}^{*t}\| = O_p(n^{-1/3})$.
\item[(ii)] $\sqrt{n}(G^t(\widehat{\boldsymbol{\theta}^t}) - G^t(\boldsymbol{\theta}^{*t})) = o_p(1)$.
\item[(iii)] $\sqrt{n}(\widehat{G}^t(\widehat{\boldsymbol{\theta}}^t) - G^t(\boldsymbol{\theta}^{*t})) \xrightarrow{d} \mathcal{N}(0, \sigma^2)$, where $\sigma^2 = \mathbb{E}\big[\{\Delta^t_N(O)d_{\boldsymbol{\theta}^{*t}}(\boldsymbol{W}) - G^t(\boldsymbol{\theta}^{*t})\}^2\big]$.
\item[(iv)] Estimated policy gain defined as $\widehat{\Gamma}^t(\widehat{d}^t_{\boldsymbol \theta}(\boldsymbol W),\widehat{\widetilde{d}}(\boldsymbol W))=\frac{1}{n}\sum_{i=1}^n[\widehat{\Delta}^t_N(O_i)\{\widehat{d}^t_{\boldsymbol{\theta}}(\boldsymbol{W}_i) -\widetilde{d}(\boldsymbol{W}_i)\}]$ satisfies:
\begin{align*}
     &\sqrt{n}(\widehat{\Gamma}^t(\widehat{d}^t_{\boldsymbol \theta}(\boldsymbol W),\widetilde{d}(\boldsymbol W))-\Gamma^t(d^t_{\boldsymbol \theta^*}(\boldsymbol W),\widetilde{d}(\boldsymbol W)))\\
     &\hspace{3cm}\xrightarrow{d}\mathcal{N}(0,\mathbb{E}[\{\Delta^t_N(O)(d^t_{\boldsymbol \theta^*}(\boldsymbol W)-\widetilde{d}(\boldsymbol W))-\Gamma^t(d^t_{\boldsymbol \theta^*}(\boldsymbol W),\widetilde{d}(\boldsymbol W))\}^2])\cdot
\end{align*}
\end{itemize}
\end{theorem}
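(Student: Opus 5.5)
We follow the standard M-estimation route for non-smooth policy-value objectives (in the spirit of the analyses of \citet{luedtke2016statistical} and \citet{zhao2025semiparametric}). We treat $\widehat G^t(\boldsymbol\theta)=\widehat\Phi^t_1(\boldsymbol\theta)-\lambda\widehat\xi^t_1(\boldsymbol\theta)$ as a perturbation of $G^t$.

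\textbf{Step 1: decomposition of the estimated objective.} Using the cross-fitted estimators and the von Mises expansion of Theorem 4, we write, uniformly in $\boldsymbol\theta$,
\[
\widehat\Phi^t_1(\boldsymbol\theta)-\Phi^t_1(\boldsymbol\theta)=-(\mathbb{P}_n-\mathbb{P}_0)\{\Delta^t_N(O)d^t_{\boldsymbol\theta}(\boldsymbol W)\}+R_n(\boldsymbol\theta).
\]
The remainder $R_n$ collects two pieces. The first is an empirical-process term in $\widehat\Delta^t_N-\Delta^t_N$, which is $o_p(n^{-1/2})$ by cross-fitting and $L_2$-consistency. The second is the second-order remainder $R_\beta$ restricted to $d^t_{\boldsymbol\theta}$; by Cauchy--Schwarz and Assumption \ref{ass:ass10} (products of rates $o_p(n^{-1/2})$), it is also $o_p(n^{-1/2})$ uniformly. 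The uniformity uses that $\{\mathbb I(\boldsymbol\theta'\widetilde{\boldsymbol W}>0)\}$ is VC, and that Assumption \ref{ass:ass9}(i) gives bounded envelopes.

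The same argument gives $\sup_{\boldsymbol\theta}|\widehat\Psi^t_1-\Psi^t_1|=O_p(n^{-1/2})$. Since $|\Psi^t_1(\boldsymbol\theta)|$ is bounded away from zero (Assumption \ref{ass:ass9}(iii), which we read as holding uniformly over compact $\Theta$), we get $P(\widehat\xi^t_1\equiv\xi^t_1 \text{ on }\Theta)\to1$. Hence the penalty term is asymptotically negligible, and on the feasible set $\widehat G^t-G^t=\widehat\Phi^t_1-\Phi^t_1$.

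\textbf{Step 2: rate (i).} We apply the rate theorem for M-estimators (van der Vaart--Wellner Thm 3.2.5). Twice differentiability at an interior maximum (Assumption \ref{ass:ass9}(ii)) gives the curvature bound $G^t(\boldsymbol\theta)-G^t(\boldsymbol\theta^{*t})\lesssim-\|\boldsymbol\theta-\boldsymbol\theta^{*t}\|^2$ in the feasible neighbourhood. For the modulus of continuity, $\mathbb E\sup_{\|\boldsymbol\theta-\boldsymbol\theta^*\|<\epsilon}|\mathbb G_n\{\Delta^t_N(d_{\boldsymbol\theta}-d_{\boldsymbol\theta^*})\}|\lesssim\phi_n(\epsilon)=\epsilon^{1/2}$. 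This holds because the margin condition (Assumption \ref{ass:ass9}(iv)) gives $\mathbb E|d_{\boldsymbol\theta}-d_{\boldsymbol\theta^*}|\lesssim\|\boldsymbol\theta-\boldsymbol\theta^*\|$ for the bounded $\widetilde{\boldsymbol W}$, combined with a maximal inequality for VC classes. Solving $r_n^2\phi_n(1/r_n)\le\sqrt n$ gives $r_n=n^{1/3}$. Consistency, which is a prerequisite, follows from uniform convergence in Step 1 and well-separatedness of $\boldsymbol\theta^{*t}$; we take the latter as implicit in (ii). We regard this step as the main obstacle. It requires checking that the nuisance error terms are small enough in the localized modulus, not only uniformly; this needs the remainder increments over $\epsilon$-balls to be $o_p(n^{-2/3})$, which we would argue via the product-rate structure times $\epsilon^{1/2}$.

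\textbf{Step 3: (ii)--(iv).} A Taylor expansion with zero gradient at the maximizer gives $G^t(\widehat{\boldsymbol\theta})-G^t(\boldsymbol\theta^*)=O_p(\|\widehat{\boldsymbol\theta}-\boldsymbol\theta^*\|^2)=O_p(n^{-2/3})=o_p(n^{-1/2})$, which is (ii).

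For (iii), we decompose
\[
\widehat G^t(\widehat{\boldsymbol\theta})-G^t(\boldsymbol\theta^*)=[\widehat G^t-G^t](\boldsymbol\theta^*)+\{[\widehat G^t-G^t](\widehat{\boldsymbol\theta})-[\widehat G^t-G^t](\boldsymbol\theta^*)\}+\{G^t(\widehat{\boldsymbol\theta})-G^t(\boldsymbol\theta^*)\}.
\]
The first term is $(\mathbb P_n-\mathbb P_0)\{\Delta^t_Nd_{\boldsymbol\theta^*}\}+o_p(n^{-1/2})$ by Step 1. The second is $O_p(n^{-1/2}\phi_n(n^{-1/3}))=o_p(n^{-1/2})$ by stochastic equicontinuity. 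The third is handled by (ii). The CLT then yields the stated variance (up to sign convention).

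Part (iv) is identical. The difference $\widehat\Gamma^t-\Gamma^t$ splits into the empirical term $(\mathbb P_n-\mathbb P_0)\{\Delta^t_N(d_{\boldsymbol\theta^*}-\widetilde d)\}$, which carries the limit; a second-order nuisance remainder; an equicontinuity increment at $\widehat{\boldsymbol\theta}$; and $\Gamma^t(d_{\widehat{\boldsymbol\theta}})-\Gamma^t(d_{\boldsymbol\theta^*})=\Phi^t_1(\boldsymbol\theta^*)-\Phi^t_1(\widehat{\boldsymbol\theta})=o_p(n^{-1/2})$. Here $\widetilde d$ is fixed, so it contributes no additional estimation error.
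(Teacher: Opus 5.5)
Your outline follows the paper's proof. Cross-fitting reduces $\widehat\Phi^t_1$ to the oracle average plus a second-order remainder, and Assumption 9(iii) makes the penalty asymptotically inert. Kosorok's Theorem 14.4 is then applied with curvature from twice differentiability, $\phi_n(\delta)=\delta^{1/2}$ from the margin condition and a VC maximal inequality, and $r_n=n^{1/3}$. Part (ii) follows by Taylor expansion, and (iii)--(iv) by the same decomposition plus the CLT.

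Where you deviate, you are more careful than the paper. You use uniform rather than pointwise convergence for argmax consistency. You see that a pointwise $o_p(n^{-1/2})$ nuisance error does not give a localized modulus of order $\delta^{1/2}$; the paper just asserts that piece is $o_p(1)$. Your ``up to sign'' is a real correction: since $G^t(\boldsymbol\theta^{*t})=-\mathbb{E}[\Delta^t_N(O)d_{\boldsymbol\theta^{*t}}(\boldsymbol W)]$, the limiting variance is $\mathbb{E}[\{\Delta^t_N(O)d_{\boldsymbol\theta^{*t}}(\boldsymbol W)+G^t(\boldsymbol\theta^{*t})\}^2]$.

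The step that fails, in your Step 2 and equally in the paper's Condition 1, is the curvature bound. At a maximum, twice differentiability gives only a negative \emph{semi}definite Hessian. For this policy class the Hessian is in fact necessarily singular. Since $d^t_{c\boldsymbol\theta}=d^t_{\boldsymbol\theta}$ for every $c>0$, both $G^t$ and $\widehat G^t$ are constant along rays, so $(\boldsymbol\theta^{*t})'\nabla^2G^t(\boldsymbol\theta^{*t})\boldsymbol\theta^{*t}=0$. At $\boldsymbol\theta=(1+s)\boldsymbol\theta^{*t}$ the inequality $G^t(\boldsymbol\theta)-G^t(\boldsymbol\theta^{*t})\le-c_1\|\boldsymbol\theta-\boldsymbol\theta^{*t}\|^2$ reads $0\le -c_1s^2\|\boldsymbol\theta^{*t}\|^2$, which is false. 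Worse, $\boldsymbol\theta^{*t}$ and $\widehat{\boldsymbol\theta}^t$ are determined only up to positive scaling, so (i) is not well posed over $\mathbb{R}^{p+1}$. The well-separatedness you invoke for consistency fails for the same reason. You need a normalization, e.g.\ take $\Theta$ to be the unit sphere, which also supplies the compactness your uniform reading of Assumption 9(iii) uses. You also need an explicit \emph{assumption} that $\boldsymbol\theta^{*t}$ is the unique maximizer on it and that $\nabla^2G^t(\boldsymbol\theta^{*t})$ is negative definite on the orthogonal complement of $\boldsymbol\theta^{*t}$. Neither follows from Assumption 9(ii).

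Two further points. First, on the obstacle you flag: Cauchy--Schwarz against $\mathbb{I}\{|(\boldsymbol\theta^{*t})'\widetilde{\boldsymbol W}|\le k_0\epsilon\}$, together with the margin condition, bounds the bias part of the localized nuisance increment by $\|r\|_{L_2}(k\epsilon)^{1/2}$. Here $r(\boldsymbol W)$ is the conditional product remainder. So you need $\|r\|_{L_2}=O_p(n^{-1/2})$ (e.g.\ $L_4$ rates), while Assumption 10 only delivers $\|r\|_{L_1}=o_p(n^{-1/2})$.

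Second, (ii)--(iv) do not need (i). Because $\widehat{\boldsymbol\theta}^t$ maximizes $\widehat G^t$ and $\boldsymbol\theta^{*t}$ maximizes $G^t$, both $G^t(\boldsymbol\theta^{*t})-G^t(\widehat{\boldsymbol\theta}^t)$ and $\widehat G^t(\widehat{\boldsymbol\theta}^t)-\widehat G^t(\boldsymbol\theta^{*t})$ are nonnegative and bounded by $[\widehat G^t-G^t](\widehat{\boldsymbol\theta}^t)-[\widehat G^t-G^t](\boldsymbol\theta^{*t})$. That bound is $o_p(n^{-1/2})$ by your uniform Step 1 remainder and asymptotic equicontinuity of the Donsker class $\{\Delta^t_N d^t_{\boldsymbol\theta}\}$, as soon as $P(d^t_{\widehat{\boldsymbol\theta}}\neq d^t_{\boldsymbol\theta^{*t}})\to0$ in probability. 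Only (i) needs the normalization and curvature assumptions.
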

\noindent
The proof of this theorem is given in Supplementary Section S6. 
\section{Simulations}
\subsection{Data Generation}
We evaluate finite-sample performance across sample sizes $n \in \{1000, 1500, 2500, 5000\}$. Measured confounders $W_1, W_2 \sim \mathcal{N}(0, 1)$, a binary instrument $Z \sim \text{Ber}(0.5)$, and a DiD period indicator $L \sim \text{Ber}(0.5)$ are generated independently. Unmeasured confounding is introduced via latent variables $U_0$ and $U_1$ following a Bridge distribution. Treatment is assigned period-specifically as $A = LA_0 + (1-L)A_1$, where $A_0 \mid L, Z, W_1, W_2, U_0 \sim \text{Ber}(p_0)$ and $A_1 \mid L, Z, W_1, W_2, U_1 \sim \text{Ber}(p_1)$, with $p_0 = \text{expit}(2 - 7Z + 0.2U_0 + 2W_1)$ and $p_1 = \text{expit}(-1.5 + 5Z + 0.15U_1 + 1.5W_2)$.\\

\noindent
Censoring times follow a Weibull distribution with shape parameter 2 and scale determined by $\theta = -3 + 0.3A + 0.1L + 0.1Z + 0.2W_1 + 0.1W_2$. The terminal and recurrent event processes are specified under an additive hazards framework, with all hazard functions constrained to remain non-negative across the covariate support. The death time is constructed as $D = LD_1 + (1-L)D_0$, with period-specific hazards $\lambda^d_0 = 0.2 + [(-0.1 + 0.2W_1 - 0.2W_2)/10] A_0 - 0.01W_1 + 0.02W_2 + 0.03U_0 + 0.01Z$ and $\lambda^d_1 = 0.25 + [(-0.1 + 0.2W_1 - 0.2W_2)/10] A_1 - 0.01W_1 + 0.02W_2 + 0.03U_1 + 0.01Z$.\\

\noindent
\noindent For the recurrent event process, we consider landmark times $t \in \{1, 2, 3, 4, 5\}$. Event increments $dN^*_0(t)$ and $dN^*_1(t)$ are drawn from Poisson distributions with intensities $\lambda^n_1(t) = 0.1t + 0.2 + (-0.1 + 0.2W_1 - 0.2W_2)A_0 + 0.05W_1 + 0.03W_2 + 0.05U_0 + 0.02Z$ and $\lambda^n_0(t) = 0.1t + (-0.1 + 0.2W_1 - 0.2W_2)A_1 + 0.05W_1 + 0.03W_2 + 0.05U_0 + 0.02Z$ for $L=1$ and $L=0$, respectively. From these increments, we construct the true recurrent event process $N^*(t) = N^*(\min(t,D))$, which counts events up to death $D$, and the observed process $N(t)$, which is additionally subject to censoring.

\subsection{Evaluation Metrics}
\noindent For all simulation scenarios, we evaluate the performance of the estimated optimal treatment policies $\widehat{d}^t_{\boldsymbol{\theta}}$ at landmark times $t=1$ and $t=4$, comparing the proposed multiply robust AIPW estimator against the two IPW-based alternatives.\\

\noindent
\textit{Optimal Treatment Policy:} Due to the complex data-generating mechanism, the true optimal treatment policy $d^t_{\text{opt}}$ is not available in closed form. As a gold standard, we simulate an independent dataset of $n=10^6$ observations and solve the constrained optimization problem using the first IPW estimator applied directly to the uncensored process $N^*(t)$, yielding the benchmark policy $d^t_{\text{opt}}$. Each estimated policy $\widehat{d}^t_{\boldsymbol{\theta}}$ is then evaluated on a separate test dataset of size $10^6$ using the Percentage of Correct Decisions (PCD):
$$\text{PCD} = 1 - \frac{1}{n} \sum_{i=1}^n |\widehat{d}^t_{\boldsymbol{\theta}}(\boldsymbol{W}_i) - d^t_{\text{opt}}(\boldsymbol{W}_i)|\cdot$$

\noindent \textit{Policy Gain Estimation:} \noindent \textit{Policy Gain Estimation:} To evaluate the performance of the policy gain estimator $\widehat{\Gamma}(\widehat{d}^t_{\boldsymbol{\theta}}, \widetilde{d})$, we use the multiply robust AIPW estimator with nuisance parameters estimated via 5-fold cross-fitted machine learning models. The true policy gain is obtained from the $n=10^6$ population dataset described above, using the IPW1 estimator applied directly to the uncensored process $N^*(t)$. We consider two evaluation schemes: fixed policy evaluation, where $\widehat{\Gamma}^t(d^t_{\boldsymbol{\theta}}, \widetilde{d})$ is computed using the fixed true optimal policy across all replications to isolate the performance of the gain estimator itself, and estimated policy evaluation, where $\widehat{\Gamma}^t(\widehat{d}^t_{\boldsymbol{\theta}}, \widetilde{d})$ is computed using policies estimated within each replication to reflect the full scope of estimation uncertainty. In both cases, inferential validity is assessed via root-$n$ scaled bias and empirical coverage probability of 95\% confidence intervals derived from the limiting variance estimator in Theorem \ref{thm:thm6}(iv). All results are based on $R = 500$ independent replications per scenario.

\subsection{Results}
Figure \ref{fig:accu} displays the Percentage of Correct Decisions (PCD) for the estimated optimal treatment policies across all methods, sample sizes, and landmark times. The AIPW estimator consistently achieves the highest accuracy across all settings, with PCD increasing steadily with sample size at both $t=1$ and $t=4$. Both IPW estimators exhibit lower and more variable accuracy, particularly at $t=4$, underscoring the benefit of the multiply robust construction in recovering the true optimal policy.\\
\begin{figure}[ht]
    \centering
    \caption{Percentage of Correct Decisions (PCD) for estimated optimal treatment policies across sample sizes ($n = 1000, 1500, 2500, 5000$) and landmark times ($t = 1, 4$).}
    \label{fig:accu}
    \includegraphics[width =\linewidth]{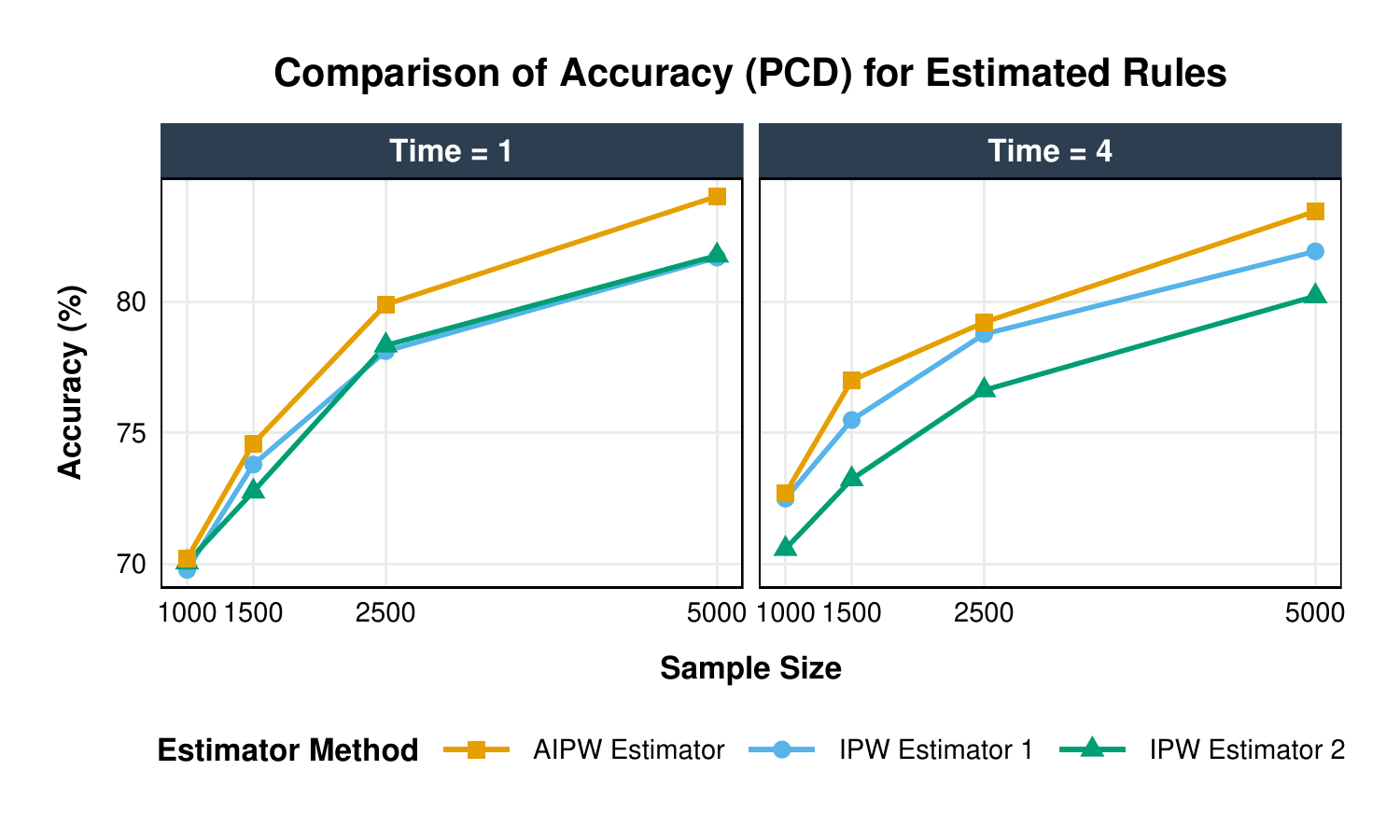}
\end{figure}

\noindent
Table \ref{tab:tab2} reports the inferential performance of the AIPW policy gain estimator under a fixed treatment policy. The estimator performs remarkably well across all sample sizes and both landmark times: AIPW estimates are virtually indistinguishable from the true values, root-$n$ scaled biases are negligible throughout, and empirical coverage probabilities remain at or near the nominal 95\% level across all configurations, ranging narrowly between 0.944 and 0.962. Notably, this strong inferential performance is sustained at both the short ($t=1$) and longer ($t=4$) landmark times, and shows no degradation as the complexity of the recurrent event process increases. These results provide compelling empirical validation of the $\sqrt{n}$-consistency and asymptotic normality established in Theorem~\ref{thm:thm6}.\\
\begin{table}[ht]
\centering
\caption{Performance of the AIPW policy gain estimator under a fixed treatment policy $d(\cdot)$. Columns report the policy value, AIPW estimate, root-$n$ scaled bias, and empirical coverage probability of 95\% confidence intervals across 500 Monte Carlo simulations.}
\label{tab:tab2}
\begin{tabular}{cccccc}
\toprule
\textbf{\makecell{Sample \\ Size}} & \textbf{Time} & \textbf{\makecell{True \\ Value}} & \textbf{\makecell{AIPW \\ Estimator}} & \textbf{\makecell{Root-$n$ \\ Bias}} & \textbf{\makecell{Coverage \\ Probability}} \\ 
\midrule
1000 & 1 & -0.078 & -0.078 &  0.034 & 0.956 \\ 
1500 & 1 & -0.078 & -0.078 & -0.013 & 0.950 \\ 
2500 & 1 & -0.078 & -0.077 &  0.078 & 0.956 \\ 
5000 & 1 & -0.078 & -0.078 &  0.022 & 0.948 \\ 
\midrule
1000 & 4 & -0.163 & -0.157 &  0.199 & 0.962 \\ 
1500 & 4 & -0.163 & -0.159 &  0.216 & 0.952 \\ 
2500 & 4 & -0.163 & -0.171 & -0.423 & 0.956 \\ 
5000 & 4 & -0.163 & -0.161 &  0.143 & 0.944 \\ 
\bottomrule
\end{tabular}
\end{table}

\noindent
When the policy gain is instead evaluated at the estimated optimal policy (Supplementary Table S1), finite-sample bias and undercoverage arise due to the non-smoothness of the estimated decision boundary, particularly at smaller sample sizes. Nonetheless, both root-$n$ bias and coverage probability improve monotonically with $n$ at both landmark times, consistent with the asymptotic theory. The residual undercoverage at moderate sample sizes is expected given the non-regularity induced by the estimated decision boundary, and is a well-documented phenomenon in the policy learning literature.

\section{Data Analysis}
In this section, we apply our methodology for estimating optimal treatment policies for recurrent outcomes to a clinical case study using Medicare service data from 2016–2023. Specifically, we compare two T2DM drug classes, namely metformin $(A=1)$ and glucagon-like peptide-1 receptor agonists (GLP-1 RA) $(A=0)$ as first-line therapies. The primary outcome is the total number (comprising both initial and recurrent episodes) of a composite endpoint, including myocardial infarction, stroke, arterial revascularization, heart failure hospitalization, end-stage kidney disease, kidney transplantation, and mortality.\\

\noindent
We defined the index date as the first instance of metformin or GLP-1 RA (including dual agonists) initiation. To maintain a focus on monotherapy, we excluded any patients who were prescribed multiple drug classes on their index date. To address unmeasured confounding, we utilized provider preference as an instrumental variable, defined by the longitudinal change in a provider's metformin prescription proportion between 2016 and 2023. We restricted the analysis to providers with at least ten records in both years to ensure a stable estimate of prescribing behavior. The resulting binary instrument reflects this temporal shift in preference; consequently, we excluded patients with index dates in 2016 or 2023 to ensure the instrument captures a distinct pre- or post-period preference relative to the treatment decision.\\

\noindent
Supplementary Table S2 summarizes the baseline characteristics for the 219,286 Medicare beneficiaries included in the study, where metformin was the predominant first-line therapy (95\%). On average, patients initiating GLP-1 RA were younger (68 vs.\ 71 years) and more likely to be female (62\% vs.\ 52\%) compared to those initiating metformin. The GLP-1 RA cohort had a slightly higher proportion of Non-Hispanic White (NHW) patients (81\% vs.\ 77\%), and a marginally higher mean Claims-based Frailty Index (CFI) score (0.133 vs.\ 0.131). Regarding clinical outcomes, the raw mortality rate was lower in the GLP-1 RA group (1.93\% vs.\ 3.29\%), while the incidence of recurrent composite events, though numerically similar across both groups (1.00\% vs.\ 1.01\%), represents a clinically meaningful difference at the scale of this Medicare cohort, where even marginal disparities in event rates translate to thousands of hospitalizations. Both groups demonstrated a high and consistent rate of administrative follow-up, with 93\% of patients remaining in the study until the end of the observation period.\\

\noindent
To determine the optimal temporal split for the instrumented difference-in-differences framework, we utilized the F-statistic method proposed by \citet{ye2023instrumented}. This diagnostic ensures that the instrument remains sufficiently strong across the selected time periods to avoid issues related to weak identification. Our analysis yielded an F-statistic well above the conventional threshold of 10, supporting the validity of the instrument. Based on this criterion, the study period was partitioned into a pre-period spanning 2017–2021 and a post-period consisting of 2022.

\subsection{Results}
Over the 2017--2022 study period, observed clinical practice strongly favored metformin as first-line therapy (95\%), reflected by a positive baseline intercept (1.04) and provider preferences toward older and male patients (age: 0.03; male: 0.43). Higher frailty and Non-Hispanic White (NHW) race were associated with reduced metformin initiation (CFI: $-0.63$; NHW: $-0.36$). A distinct temporal shift was evident by 2022: metformin initiation declined to 85.9\% (intercept: $-0.40$), with providers demonstrating a substantially stronger tendency to prescribe GLP-1 RAs to frailer patients (CFI: $-1.44$).\\

\noindent
All evaluated models consistently recommended a dramatic reduction in metformin use relative to the behavioral policy (Table~\ref{tab:policy_results}), reflecting strong algorithmic consensus that GLP-1 RAs are substantially underutilized as first-line therapy. The standard non-IV IPW approach recommends metformin for only 4.78\% of patients, favoring younger (age: $-0.01$), lower-frailty (CFI: 0.26), and non-NHW patients (NHW: $-0.58$). Adjusting for unmeasured confounding via the proposed iDID framework yields more conservative reallocation targets: the IPW1 estimator recommends metformin for 7.41\% of patients, while the multiply robust AIPW estimator recommends it for 21.16\%. The AIPW optimal policy assigns higher frailty as a driver toward GLP-1 RA (CFI: $-0.50$), and similarly discourages metformin for male (male: $-0.46$) and NHW patients (NHW: $-0.18$), yielding a more clinically nuanced reallocation relative to the non-IV approach.\\

\noindent
The estimated benefit of implementing the AIPW optimal policy is a Final Value Gain of $-0.034$ (95\% CI: $[-0.068,\, 0.001]$), indicating an expected reduction of approximately 0.034 adverse composite events per patient relative to the observed 2022 behavioral policy. Crucially, the constrained optimization framework ensures this reduction is not achieved at the expense of increased mortality, guaranteeing survival probability under the optimal policy is no lower than under the behavioral policy. These findings align with emerging clinical guidelines: for patients managing multiple comorbidities, GLP-1 RAs offer targeted cardiovascular protection and slowing of chronic kidney disease progression, benefits that metformin does not inherently provide.
\begin{table}[ht]
\centering
\caption{Estimated optimal treatment policy coefficients and recommended metformin proportion for the Medicare Type 2 diabetes cohort. Behavioral policies reflect observed prescribing patterns over the full 2017--2022 period and in 2022 alone. The Non-IV IPW, iDID IPW, and iDID AIPW columns report the optimal treatment policies derived under each method.}
\label{tab:policy_results}
\resizebox{\textwidth}{!}{%
\begin{tabular}{lcccccc}
\toprule
\textbf{Method} & \textbf{Metformin Proportion} & \textbf{Intercept} & \textbf{Age} & \textbf{CFI Score} & \textbf{Male} & \textbf{Race (NHW)} \\
\midrule
Behavioral (Overall) & 95.0\% & 1.04 & 0.03 & -0.63 & 0.43 & -0.36 \\
Behavioral (2022) & 85.9\% & -0.40 & 0.04 & -1.44 & 0.59 & -0.38 \\
Non-IV IPW & 4.78\% & 0.75 & -0.01 & 0.26 & 0.18 & -0.58 \\
iDID (IPW1) & 7.41\% & 0.64 & -0.01 & 0.21 & -0.67 & -0.31 \\
iDID (AIPW) & 21.16\% & 0.71 & -0.01 & -0.50 & -0.46 & -0.18 \\
\bottomrule
\end{tabular}%
}
\end{table}

\section{Discussion}
Our proposed iDID framework addresses a critical gap in the policy learning literature by enabling estimation of optimal treatment policies for recurrent event outcomes subject to a terminal event under unmeasured confounding. The framework's constrained optimization formulation explicitly guards against degenerate policies that reduce recurrent events only by increasing mortality, a clinically critical but frequently overlooked failure mode. We developed IPW estimators offering computational 
efficiency alongside a multiply robust AIPW estimator that achieves $\sqrt{n}$-consistency and asymptotic normality of the estimated policy gain under flexible machine learning models for nuisance parameter estimation, with simulation studies corroborating these theoretical guarantees. Applied to a national Medicare cohort of Type 2 diabetes patients, the iDID AIPW estimator yields a clinically coherent reallocation (21.16\% metformin) that appropriately directs GLP-1 RAs toward frailer, female, and non-NHW patients, in contrast to a non-IV estimator that produces an extreme and confounded reallocation demonstrating the practical consequences of ignoring unmeasured confounding in policy learning at scale.\\

\noindent
The present work opens several directions for future research. First, the current policy class is restricted to linear decision boundaries, which ensures interpretability but may not capture nonlinear covariate interactions; extending the framework to more flexible policy classes is a natural next step. Second, although the iDID identification assumptions, including period-invariance of the instrument's direct effect are weaker than the standard IV exclusion restriction, they remain untestable, and developing formal sensitivity analyses for these assumptions would strengthen the practical utility of the framework. Perhaps the most consequential generalization concerns the dynamic treatment setting: our framework currently operates with a static instrument and baseline covariates, whereas in many chronic disease applications, treatment decisions, confounders, and instruments evolve over time. 
Extending iDID policy learning to accommodate time-varying treatments and confounders, where the exclusion restriction and parallel trends analogue must hold conditionally at each decision point would naturally connect to the dynamic treatment policy literature and enable sequential treatment 
adaptation as a patient's disease state and frailty evolve over the course of follow-up.

\section*{Acknowledgments}
This work was supported by the Patient-Centered Outcomes Research Institute (PCORI). The authors thank Colleen Brensinger for her assistance with data preprocessing.\\

\noindent
{\it Conflict of Interest}: None declared.
\bibliographystyle{abbrvnat}
\bibliography{references.bib}

\newpage
\newgeometry{top=1in, bottom=1in, left=0.5in, right=0.5in}
\section*{Supplementary Materials}
\pagenumbering{arabic}

\setcounter{figure}{0}
\renewcommand{\thefigure}{S\arabic{figure}}
\setcounter{table}{0}
\renewcommand{\thetable}{S\arabic{table}}
\setcounter{equation}{0}
\renewcommand{\theequation}{S\arabic{equation}}
\setcounter{section}{0}
\renewcommand{\thesection}{S\arabic{section}}
\setcounter{theorem}{0}
\renewcommand{\thetheorem}{S\arabic{theorem}}
\section{Proof of Theorem 1}
We start evaluating the following expectation.
\begin{align*}
    &\mathbb{E}\left[\frac{\Delta\cdot(2Z-1)(2L-1)(2A-1)N(t)\{\mathbb{I}(A=d^t(\boldsymbol W))\}}{K(X-,A,L,Z,\boldsymbol W)\pi(L,Z,\boldsymbol W)\delta_A(\boldsymbol W)}\right]\\
    &=\mathbb{E}\left[\frac{\mathbb{I}(C\geq D)(2Z-1)(2L-1)(2A-1)N(t)\{\mathbb{I}(A=d^t(\boldsymbol W))\}}{K(D-,A,L,Z,\boldsymbol W)\pi(L,Z,\boldsymbol W)\delta_A(\boldsymbol W)}\right]
\end{align*}
Moreover, $\mathbb{I}(C\geq D)N(t)=\mathbb{I}(C\geq D)N^*(t)$, hence the above expression can be written as,
\begin{align*}
    &\mathbb{E}\left[\frac{(2Z-1)(2L-1)(2A-1)\mathbb{I}(C\geq D) N^*(t)\{\mathbb{I}(A=d^t(\boldsymbol W))\}}{K(D-|A,L,Z,\boldsymbol W)\pi(L,Z,\boldsymbol W)\delta_A(\boldsymbol W)}\right]\\
    & =\mathbb{E}\left[\left.\mathbb{E}\left\{\frac{(2Z-1)(2L-1)(2A-1)\mathbb{I}(C\geq D)N^*(t)\{\mathbb{I}(A=d^t(\boldsymbol W))\}}{K(D-|A,L,Z,\boldsymbol W)\pi(L,Z,\boldsymbol W)\delta_A(\boldsymbol W)}\right\}\right|A,L,Z,\boldsymbol W,D, N^*(t)\right]\\
    &  =\mathbb{E}\left[\frac{(2Z-1)(2L-1)(2A-1)\mathbb{E}[\mathbb{I}(C\geq D)|A,L,Z,\boldsymbol W,D,N^*(t)]N^*(t)\{\mathbb{I}(A=d^t(\boldsymbol W))\}}{K(D-|A,L,Z,\boldsymbol W)\pi(L,Z,\boldsymbol W)\delta_A(\boldsymbol W)}\right]
\end{align*}
Using Assumption 8 of non-informative censoring in the main text, the above expression is simplified as,
\begin{align*}
    &\mathbb{E}\left\{\frac{(2Z-1)(2L-1)(2A-1)N^*(t)\mathbb{I}(A=d^t(\boldsymbol W))}{\pi(L,Z,\boldsymbol W)\delta_A(\boldsymbol W)}\right\}\\
    &=\mathbb{E}\left\{\frac{(2Z-1)(2L-1)(2A-1)N^{*(A)}_L(t)\mathbb{I}(A=d^t(\boldsymbol W))}{\pi(L,Z,\boldsymbol W)\delta_A(\boldsymbol W)}\right\}\\
    &= \mathbb{E}\left\{\sum_{a=0,1}\frac{(2Z-1)(2L-1)(2a-1)N^{*(a)}_L(t)\mathbb{I}(A=a)\mathbb{I}(d^t(\boldsymbol W)=a)}{\pi(L,Z,\boldsymbol W)\delta_A(\boldsymbol W)}\right\}\\
    & =\mathbb{E}\left[\mathbb{E}\left\{\left.\sum_{a=0,1}\frac{(2Z-1)(2L-1)(2a-1)N^{*(a)}_L(t)\mathbb{I}(A=a)\mathbb{I}(d^t(\boldsymbol W)=a)}{\pi(L,Z,\boldsymbol W)\delta_A(\boldsymbol W)}\right|Z,L,\boldsymbol W,\boldsymbol U\right\}\right]
\end{align*}
Using the fact that $N^{*(a)}_L(t)\independent A|Z,L,\boldsymbol W,\boldsymbol U$, the above expression can be written as:
\begin{align*}
   & = \mathbb{E}\left\{\sum_{a=0,1}\frac{(2Z-1)(2L-1)(2a-1)\mathbb{E}(N^{*(a)}_L(t)|Z,L,\boldsymbol W,\boldsymbol U)P(A=a|Z,L,\boldsymbol W,\boldsymbol U)\mathbb{I}(d^t(\boldsymbol W)=a)}{\pi(L,Z,\boldsymbol W)\delta_A(\boldsymbol W)}\right\}
\end{align*}
Using Assumption 3 of Random Sampling in the main text, the above term can be written as
\begin{align*}
    &=\mathbb{E}\left\{\frac{\mathbb{E}(N^{*(1)}_1(t)|Z=1,\boldsymbol W,\boldsymbol U)P(A=1|Z=1,L=1,\boldsymbol W,\boldsymbol U)\mathbb{I}(d^t(\boldsymbol W)=1)}{\delta_A(\boldsymbol W)}\right\}\\
    &-\mathbb{E}\left\{\frac{\mathbb{E}(N^{*(1)}_0(t)|Z=1,\boldsymbol W,\boldsymbol U)P(A=1|Z=1,L=0,\boldsymbol W,\boldsymbol U)\mathbb{I}(d^t(\boldsymbol W)=1)}{\delta_A(\boldsymbol W)}\right\}\\
    &-\mathbb{E}\left\{\frac{\mathbb{E}(N^{*(1)}_1(t)|Z=0,\boldsymbol W,\boldsymbol U)P(A=1|Z=0,L=1,\boldsymbol W,\boldsymbol U)\mathbb{I}(d^t(\boldsymbol W)=1)}{\delta_A(\boldsymbol W)}\right\}\\
    & + \mathbb{E}\left\{\frac{\mathbb{E}(N^{*(1)}_0(t)|Z=0,\boldsymbol W,\boldsymbol U)P(A=1|Z=0,L=0,\boldsymbol W,\boldsymbol U)\mathbb{I}(d^t(\boldsymbol W)=1)}{\delta_A(\boldsymbol W)}\right\}\\
    &-\mathbb{E}\left\{\frac{\mathbb{E}(N^{*(0)}_1(t)|Z=1,\boldsymbol W,\boldsymbol U)P(A=0|Z=1,L=1,\boldsymbol W,\boldsymbol U)\mathbb{I}(d^t(\boldsymbol W)=0)}{\delta_A(\boldsymbol W)}\right\}\\
    &+\mathbb{E}\left\{\frac{\mathbb{E}(N^{*(0)}_0(t)|Z=1,\boldsymbol W,\boldsymbol U)P(A=0|Z=1,L=0,\boldsymbol W,\boldsymbol U)\mathbb{I}(d^t(\boldsymbol W)=0)}{\delta_A(\boldsymbol W)}\right\}\\
    &+\mathbb{E}\left\{\frac{\mathbb{E}(N^{*(0)}_1(t)|Z=0,\boldsymbol W,\boldsymbol U)P(A=0|Z=0,L=1,\boldsymbol W,\boldsymbol U)\mathbb{I}(d^t(\boldsymbol W)=0)}{\delta_A(\boldsymbol W)}\right\}\\
    & - \mathbb{E}\left\{\frac{\mathbb{E}(N^{*(0)}_0(t)|Z=0,\boldsymbol W,\boldsymbol U)P(A=0|Z=0,L=0,\boldsymbol W,\boldsymbol U)\mathbb{I}(d^t(\boldsymbol W)=0)}{\delta_A(\boldsymbol W)}\right\}\cdot
\end{align*}
The last four terms in the above expression can be written as: 
\begin{align*}
    &-\mathbb{E}\left\{\frac{\mathbb{E}(N^{*(0)}_1(t)|Z=1,\boldsymbol W,\boldsymbol U)\mathbb{I}(d^t(\boldsymbol W)=0)}{\delta_A(\boldsymbol W)}\right\}+\mathbb{E}\left\{\frac{\mathbb{E}(N^{*(0)}_0(t)|Z=1,\boldsymbol W,\boldsymbol U)\mathbb{I}(d^t(\boldsymbol W)=0)}{\delta_A(\boldsymbol W)}\right\}\\
    &+\mathbb{E}\left\{\frac{\mathbb{E}(N^{*(0)}_1(t)|Z=0,\boldsymbol W,\boldsymbol U)\mathbb{I}(d^t(\boldsymbol W)=0)}{\delta_A(\boldsymbol W)}\right\} -\mathbb{E}\left\{\frac{\mathbb{E}(N^{*(0)}_0(t)|Z=0,\boldsymbol W,\boldsymbol U)\mathbb{I}(d^t(\boldsymbol W)=0)}{\delta_A(\boldsymbol W)}\right\}\\
    &+\mathbb{E}\left\{\frac{\mathbb{E}(N^{*(0)}_1(t)|Z=1,\boldsymbol W,\boldsymbol U)P(A=1|Z=1,L=1,\boldsymbol W,\boldsymbol U)\mathbb{I}(d^t(\boldsymbol W)=0)}{\delta_A(\boldsymbol W)}\right\}\\
    &-\mathbb{E}\left\{\frac{\mathbb{E}(N^{*(0)}_0(t)|Z=1,\boldsymbol W,\boldsymbol U)P(A=1|Z=1,L=0,\boldsymbol W,\boldsymbol U)\mathbb{I}(d^t(\boldsymbol W)=0)}{\delta_A(\boldsymbol W)}\right\}\\
    &-\mathbb{E}\left\{\frac{\mathbb{E}(N^{*(0)}_1(t)|Z=0,\boldsymbol W,\boldsymbol U)P(A=1|Z=0,L=1,\boldsymbol W,\boldsymbol U)\mathbb{I}(d^t(\boldsymbol W)=0)}{\delta_A(\boldsymbol W)}\right\}\\
    & +\mathbb{E}\left\{\frac{\mathbb{E}(N^{*(0)}_0(t)|Z=0,\boldsymbol W,\boldsymbol U)P(A=1|Z=0,L=0,\boldsymbol W,\boldsymbol U)\mathbb{I}(d^t(\boldsymbol W)=0)}{\delta_A(\boldsymbol W)}\right\} \cdot
\end{align*}
The above terms can be rearranged as,
\begin{align*}
    &-\mathbb{E}\left\{\frac{\mathbb{E}(N^{*(0)}_1(t)-N^{*(0)}_0(t)|Z=1,\boldsymbol W,\boldsymbol U)\mathbb{I}(d^t(\boldsymbol W)=0)}{\delta_A(\boldsymbol W)}\right\}\\
    &+\mathbb{E}\left\{\frac{\mathbb{E}(N^{*(0)}_1(t)-N^{*(0)}_0(t)|Z=0,\boldsymbol W,\boldsymbol U)\mathbb{I}(d^t(\boldsymbol W)=0)}{\delta_A(\boldsymbol W)}\right\}\\
    &+\mathbb{E}\left\{\frac{\mathbb{E}(N^{*(0)}_1(t)|Z=1,\boldsymbol W,\boldsymbol U)P(A=1|Z=1,L=1,\boldsymbol W,\boldsymbol U)\mathbb{I}(d^t(\boldsymbol W)=0)}{\delta_A(\boldsymbol W)}\right\}\\
    &-\mathbb{E}\left\{\frac{\mathbb{E}(N^{*(0)}_0(t)|Z=1,\boldsymbol W,\boldsymbol U)P(A=1|Z=1,L=0,\boldsymbol W,\boldsymbol U)\mathbb{I}(d^t(\boldsymbol W)=0)}{\delta_A(\boldsymbol W)}\right\}\\
    &-\mathbb{E}\left\{\frac{\mathbb{E}(N^{*(0)}_1(t)|Z=0,\boldsymbol W,\boldsymbol U)P(A=1|Z=0,L=1,\boldsymbol W,\boldsymbol U)\mathbb{I}(d^t(\boldsymbol W)=0)}{\delta_A(\boldsymbol W)}\right\}\\
    & +\mathbb{E}\left\{\frac{\mathbb{E}(N^{*(0)}_0(t)|Z=0,\boldsymbol W,\boldsymbol U)P(A=1|Z=0,L=0,\boldsymbol W,\boldsymbol U)\mathbb{I}(d^t(\boldsymbol W)=0)}{\delta_A(\boldsymbol W)}\right\} 
\end{align*}
Using Assumption 5, the original expression simplifies to:
\begin{align*}
    &=\mathbb{E}\left\{\frac{\mathbb{E}(N^{*(1)}_1(t)|Z=1,\boldsymbol W,\boldsymbol U)P(A=1|Z=1,L=1,\boldsymbol W,\boldsymbol U)\mathbb{I}(d^t(\boldsymbol W)=1)}{\delta_A(\boldsymbol W)}\right\}\\
    &-\mathbb{E}\left\{\frac{\mathbb{E}(N^{*(1)}_0(t)|Z=1,\boldsymbol W,\boldsymbol U)P(A=1|Z=1,L=0,\boldsymbol W,\boldsymbol U)\mathbb{I}(d^t(\boldsymbol W)=1)}{\delta_A(\boldsymbol W)}\right\}\\
    &-\mathbb{E}\left\{\frac{\mathbb{E}(N^{*(1)}_1(t)|Z=0,\boldsymbol W,\boldsymbol U)P(A=1|Z=0,L=1,\boldsymbol W,\boldsymbol U)\mathbb{I}(d^t(\boldsymbol W)=1)}{\delta_A(\boldsymbol W)}\right\}\\
    & + \mathbb{E}\left\{\frac{\mathbb{E}(N^{*(1)}_0(t)|Z=0,\boldsymbol W,\boldsymbol U)P(A=1|Z=0,L=0,\boldsymbol W,\boldsymbol U)\mathbb{I}(d^t(\boldsymbol W)=1)}{\delta_A(\boldsymbol W)}\right\}\\
    &+\mathbb{E}\left\{\frac{\mathbb{E}(N^{*(0)}_1(t)|Z=1,\boldsymbol W,\boldsymbol U)P(A=1|Z=1,L=1,\boldsymbol W,\boldsymbol U)\mathbb{I}(d^t(\boldsymbol W)=0)}{\delta_A(\boldsymbol W)}\right\}\\
    &-\mathbb{E}\left\{\frac{\mathbb{E}(N^{*(0)}_0(t)|Z=1,\boldsymbol W,\boldsymbol U)P(A=1|Z=1,L=0,\boldsymbol W,\boldsymbol U)\mathbb{I}(d^t(\boldsymbol W)=0)}{\delta_A(\boldsymbol W)}\right\}\\
    &-\mathbb{E}\left\{\frac{\mathbb{E}(N^{*(0)}_1(t)|Z=0,\boldsymbol W,\boldsymbol U)P(A=1|Z=0,L=1,\boldsymbol W,\boldsymbol U)\mathbb{I}(d^t(\boldsymbol W)=0)}{\delta_A(\boldsymbol W)}\right\}\\
    & +\mathbb{E}\left\{\frac{\mathbb{E}(N^{*(0)}_0(t)|Z=0,\boldsymbol W,\boldsymbol U)P(A=1|Z=0,L=0,\boldsymbol W,\boldsymbol U)\mathbb{I}(d^t(\boldsymbol W)=0)}{\delta_A(\boldsymbol W)}\right\} 
\end{align*}
After rearranging we obtain that, 
\begin{align*}
    &=\mathbb{E}\!\left\{\frac{%
\left[
\begin{aligned}
&\mathbb{E}\big(N^{*(1)}_1(t)\mid Z=1,\boldsymbol W,\boldsymbol U\big)\,\mathbb{I}\{d^t(\boldsymbol W)=1\}\\
&\quad+\,\mathbb{E}\big(N^{*(0)}_1(t)\mid Z=1,\boldsymbol W,\boldsymbol U\big)\,\mathbb{I}\{d^t(\boldsymbol W)=0\}
\end{aligned}
\right]\;P\big(A=1\mid Z=1,L=1,\boldsymbol W,\boldsymbol U\big)}{\delta_A(\boldsymbol W)}\right\}\\
    &-\mathbb{E}\!\left\{\frac{%
\left[
\begin{aligned}
&\mathbb{E}\big(N^{*(1)}_0(t)\mid Z=1,\boldsymbol W,\boldsymbol U\big)\,\mathbb{I}\{d^t(\boldsymbol W)=1\}\\
&\quad+\,\mathbb{E}\big(N^{*(0)}_0(t)\mid Z=1,\boldsymbol W,\boldsymbol U\big)\,\mathbb{I}\{d^t(\boldsymbol W)=0\}
\end{aligned}
\right]\;P\big(A=1\mid Z=1,L=0,\boldsymbol W,\boldsymbol U\big)}{\delta_A(\boldsymbol W)}\right\}\\
&-\mathbb{E}\!\left\{\frac{%
\left[
\begin{aligned}
&\mathbb{E}\big(N^{*(1)}_1(t)\mid Z=0,\boldsymbol W,\boldsymbol U\big)\,\mathbb{I}\{d^t(\boldsymbol W)=1\}\\
&\quad+\,\mathbb{E}\big(N^{*(0)}_1(t)\mid Z=0,\boldsymbol W,\boldsymbol U\big)\,\mathbb{I}\{d^t(\boldsymbol W)=0\}
\end{aligned}
\right]\;P\big(A=1\mid Z=0,L=1,\boldsymbol W,\boldsymbol U\big)}{\delta_A(\boldsymbol W)}\right\}\\
& + \mathbb{E}\!\left\{\frac{%
\left[
\begin{aligned}
&\mathbb{E}\big(N^{*(1)}_0(t)\mid Z=0,\boldsymbol W,\boldsymbol U\big)\,\mathbb{I}\{d^t(\boldsymbol W)=1\}\\
&\quad+\,\mathbb{E}\big(N^{*(0)}_0(t)\mid Z=0,\boldsymbol W,\boldsymbol U\big)\,\mathbb{I}\{d^t(\boldsymbol W)=0\}
\end{aligned}
\right]\;P\big(A=1\mid Z=0,L=0,\boldsymbol W,\boldsymbol U\big)}{\delta_A(\boldsymbol W)}\right\}
\end{align*}
For any $t=0,1$ and $z=0,1$, we obtain that
\begin{align*}
    &\mathbb{E}[N^{*(1)}_l(t)|Z=z,\boldsymbol W,\boldsymbol U]\mathbb{I}\{d^t(\boldsymbol W)=1\}+\mathbb{E}[N^{*(0)}_l(t)|Z=z,\boldsymbol W,\boldsymbol U]\mathbb{I}\{d^t(\boldsymbol W)=0\}\\
    &=  \mathbb{E}[N^{*(1)}_l(t)-N^{*(0)}_l(t)|Z=z,\boldsymbol W,\boldsymbol U]\mathbb{I}\{d^t(\boldsymbol W)=1\}+ \mathbb{E}[N^{*(0)}_l(t)|Z=z,\boldsymbol W,\boldsymbol U]
\end{align*}
Using Assumption 5 in the main text, the above expression is simplified as:
\begin{align*}
   \tau^N_l(t,\boldsymbol W,\boldsymbol U)d^t(\boldsymbol W)+ \nu_l(\boldsymbol W,\boldsymbol U,Z)
\end{align*}
where $\tau^N_l(t,\boldsymbol W,\boldsymbol U)=\mathbb{E}[N^{*(1)}_l(t)-N^{*(0)}_l(t)|\boldsymbol W,\boldsymbol U]$. Note that the second term does not depend on $d^t(\boldsymbol W)$. Hence one can write
\begin{align*}
    & \mathbb{E}\left\{\frac{(2Z-1)(2L-1)(2A-1)N^*(t)\mathbb{I}(A=d^t(\boldsymbol W))}{\pi(L,Z,\boldsymbol W)\delta_A(\boldsymbol W)}\right\}\\
    & =\mathbb{E}\left\{\frac{d^t(\boldsymbol W)}{\delta_A(\boldsymbol W)}[\tau_1(\boldsymbol W,\boldsymbol U)\delta_{A,1}(\boldsymbol W,\boldsymbol U)- \tau_0(\boldsymbol W,\boldsymbol U)\delta_{A,0}(\boldsymbol W,\boldsymbol U)]\right\} + f_N
\end{align*}
where, $f_N$ does not depend on $d^t(\boldsymbol W)$ and 
\begin{align*}
    &\delta_{A,l}(\boldsymbol W,\boldsymbol U)=P(A=1|Z=1,L=l,\boldsymbol W,\boldsymbol U)-P(A=1|Z=0,L=l,\boldsymbol W,\boldsymbol U)\\
    &=P(A_l(1)=1|\boldsymbol W,\boldsymbol U)-P(A_l(0)=1|\boldsymbol W,\boldsymbol U)=E[A_l(1)-A_l(0)|\boldsymbol W,\boldsymbol U]
\end{align*}
Using Assumption 6 in the main text of no unmeasured common effect modifier, we obtain 
\begin{align*}
    \mathbb{E}[\tau^N_l(t,\boldsymbol W,\boldsymbol U)\delta_{A,l}(\boldsymbol W,\boldsymbol U)|\boldsymbol W]=\mathbb{E}[\tau^N_l(t,\boldsymbol W,\boldsymbol U)|\boldsymbol W]\mathbb{E}[\delta_{A,l}(\boldsymbol W,\boldsymbol U)|\boldsymbol W]=\tau^N_l(t,\boldsymbol W)\delta_{A,l}(\boldsymbol W)
\end{align*}
Hence we obtain 
\begin{align*}
     &\mathbb{E}_{\boldsymbol W}\left[\frac{d^t(\boldsymbol W)}{\delta_A(\boldsymbol W)}\mathbb{E}\left\{[\tau^N_1(t,\boldsymbol W,\boldsymbol U)\delta_{A,1}(\boldsymbol W,\boldsymbol U)- \tau^N_0(1,\boldsymbol W,\boldsymbol U)\delta_{A,0}(\boldsymbol W,\boldsymbol U)]|\boldsymbol W\right\}\right]+f_N\\
     &=\mathbb{E}_{\boldsymbol W}\left[\frac{d^t(\boldsymbol W)}{\delta_A(\boldsymbol W)}\{\tau^N_1(t,\boldsymbol W)\delta_{A,1}(\boldsymbol W)-\tau^N_0(t,\boldsymbol W)\delta_{A,0}(\boldsymbol W)\}\right]+f_N\\
     &=\mathbb{E}_{\boldsymbol W}\left[\frac{d^t(\boldsymbol W)}{\delta_A(\boldsymbol W)}\tau^N(t,\boldsymbol W)\{\delta_{A,1}(\boldsymbol W)-\delta_{A,0}(\boldsymbol W)\}\right]+_N=\mathbb{E}_{\boldsymbol W}[\tau^N(t,\boldsymbol W)d^t(\boldsymbol W)]+f_N
\end{align*}
The last equality is due to fact, $\delta_{A,1}(\boldsymbol W)-\delta_{A,0}(\boldsymbol W)=\delta_A(\boldsymbol W)$ and Assumption 7 of stable treatment effect over each period in the main text.\\

\noindent
Using the exactly similar steps and replacing $N(t)$ by $Y(t)$, we obtain
\begin{align*}
    \mathbb{E}\left[\frac{\Delta\cdot(2Z-1)(2L-1)(2A-1)Y(t)\{\mathbb{I}(A=d^t(\boldsymbol W))\}}{K(X-,A,L,Z,\boldsymbol W)\pi(L,Z,\boldsymbol W)\delta_A(\boldsymbol W)}\right]=\mathbb{E} \left[\tau^Y(t,\boldsymbol W)d^t(\boldsymbol W) \right]+f_Y
\end{align*}
Since both $f_N$ and $f_Y$ do not depend on $d^t(\boldsymbol W)$, we obtain that the optimization policy in equation (3) of the main text is identified by 
\begin{align*}
         d^t_{\text{opt}}&=\arg\min_{d^t\in \mathcal{D}} \mathbb{E}\left[\frac{\Delta\cdot(2Z-1)(2L-1)(2A-1)N(t)\{\mathbb{I}(A=d^t(\boldsymbol W))\}}{K(X-,A,L,Z,\boldsymbol W)\pi(L,Z,\boldsymbol W)\delta_A(\boldsymbol W)}\right]\hspace{0.2cm} \text{subject to } \\
         &\mathbb{E}\left[\frac{\Delta\cdot(2Z-1)(2L-1)(2A-1))Y(t)\{\mathbb{I}(A=d^t(\boldsymbol W))-\mathbb{I}(A=\widetilde{d}(\boldsymbol W))\}}{K(X-,A,L,Z,\boldsymbol W)\pi(L,Z,\boldsymbol W)\delta_A(\boldsymbol W)}\right]>0 \cdot
    \end{align*}
This completes the proof of Theorem 1.

\section{Proof of Theorem 2}
We start with evaluating the following expression,
\begin{align*}
    &\mathbb{E}\left[\int_{0}^t\frac{(2Z-1)(2L-1)(2A-1)dN(s)\{\mathbb{I}(A=d^t(\boldsymbol W))\}}{K(s,A,L,Z,\boldsymbol W)\pi(L,Z,\boldsymbol W)\delta_A(\boldsymbol W)}\right]\\
    &= \int_{0}^t\mathbb{E}\left[\frac{(2Z-1)(2L-1)(2A-1)dN(s)\{\mathbb{I}(A=d(\boldsymbol W))\}}{K(s,A,L,Z,\boldsymbol W)\pi(L,Z,\boldsymbol W)\delta_A(\boldsymbol W)}\right]\\
    & =  \int_{0}^t\mathbb{E}\left[\frac{(2Z-1)(2L-1)(2A-1)\mathbb{I}(C>s)dN^*(s)\{\mathbb{I}(A=d^t(\boldsymbol W))\}}{K(s,A,L,Z,\boldsymbol W)\pi(L,Z,\boldsymbol W)\delta_A(\boldsymbol W)}\right]\\
   &= \int_{0}^t\mathbb{E}\left[\left.\mathbb{E}\left\{\frac{(2Z-1)(2L-1)(2A-1)\mathbb{I}(C>s)dN^*(s)\{\mathbb{I}(A=d^t(\boldsymbol W))\}}{K(s,A,L,Z,\boldsymbol W)\pi(L,Z,\boldsymbol W)\delta_A(\boldsymbol W)}\right\}\right|A,L,Z,\boldsymbol W,dN^*(s)\right]
\end{align*}
Using Assumption 8 of non-informative censoring in the main text the above term is simplified as,
\begin{align*}
   &\int_{0}^t\mathbb{E}\left\{\frac{(2Z-1)(2L-1)(2A-1)E(\mathbb{I}(C>s)|A,L,Z,\boldsymbol W,dN^*(s))dN^*(s)\{\mathbb{I}(A=d^t(\boldsymbol W))\}}{K(s,A,L,Z,\boldsymbol W)\pi(L,Z,\boldsymbol W)\delta_A(\boldsymbol W)}\right\}\\
   &=\int_{0}^t\mathbb{E}\left\{\frac{(2Z-1)(2L-1)(2A-1)dN^*(s)\{\mathbb{I}(A=d^t(\boldsymbol W))\}}{\pi(L,Z,\boldsymbol W)\delta_A(\boldsymbol W)}\right\}\\
   &=\mathbb{E}\left\{\frac{(2Z-1)(2L-1)(2A-1)N^*(s)\{\mathbb{I}(A=d^t(\boldsymbol W))\}}{\pi(L,Z,\boldsymbol W)\delta_A(\boldsymbol W)}\right\}
\end{align*}
Hence the above expression boils down to IPW identification using Theorem 1. Hence one can proceed the remaining proof using exactly same steps as the proof of Theorem 1.

\section{Proof of Theorem 3}
We start with evaluating the following expectation. For any $l=0,1$, $z=0,1$ and $\boldsymbol w \in \text{Supp}(\boldsymbol W)$,
\begin{align*}
     &\mu_{\widetilde{N}(t)}(l,z,\boldsymbol w)=\mathbb{E}[\widetilde{N}(t)|L=l,Z=z,\boldsymbol W=\boldsymbol w]=\mathbb{E}\left.\left[\frac{\mathbb{I}(C\geq D)}{K(X-,A,L,Z,\boldsymbol W)}N(t)\right|L=l,Z=z,\boldsymbol W=\boldsymbol w\right]\\
    & =\mathbb{E}\left.\left[\frac{\mathbb{I}(C\geq D)}{K(D-,A,L,Z,\boldsymbol W)}N^*(t)\right|L=l,Z=z,\boldsymbol W=\boldsymbol w\right]\\
    &=\mathbb{E}\left[\left.\mathbb{E}\left.\left\{\frac{\mathbb{I}(C\geq D)}{K(D-,A,L,Z,\boldsymbol W)}N^*(t)\right|L,Z,\boldsymbol W,N^*(t),D,A\right\}\right|L=l,Z=z,\boldsymbol W=\boldsymbol w\right]\\
    &=\mathbb{E}\left[\left.\frac{N^*(t)}{K(D-,A,L,Z,\boldsymbol W)}\mathbb{E}\left.\left\{\mathbb{I}(C\geq D)\right|L,Z,\boldsymbol W,N^*(t),D,A\right\}\right|L=l,Z=z,\boldsymbol W=\boldsymbol w\right]\\
    &= \mathbb{E}\left[\left.\frac{N^*(t)}{K(D-,A,L,Z,\boldsymbol W)}\mathbb{E}\left.\left\{\mathbb{I}(C\geq D)\right|A,L,Z,\boldsymbol W,D\right\}\right|L=l,Z=z,\boldsymbol W=\boldsymbol w\right]\\
    &=\mathbb{E}\left[\left.\frac{N^*(t)}{K(D-,A,L,Z,\boldsymbol W)}\mathbb{E}\left.\left\{\mathbb{I}(C\geq D)\right|A,L,Z,\boldsymbol W\right\}\right|L=l,Z=z,\boldsymbol W=\boldsymbol w\right]\\
    &=\mathbb{E}\left[N^*(t)|L=l,Z=z,\boldsymbol W=\boldsymbol w\right]=\mu_{N^*(t)}(l,z,\boldsymbol w)\cdot
\end{align*}
We used Assumption 8 of non-informative censoring from the main text in the above derivation. Hence, we obtain $\delta_{\widetilde{N}(t)}(\boldsymbol w)=\delta_{N^*(t)}(\boldsymbol w)$. Next
\begin{align*}
    \delta_{N^*(t)}(\boldsymbol w)&=\mu_{N^*(t)}(1,1,\boldsymbol w)-\mu_{N^*(t)}(1,0,\boldsymbol w)-\mu_{N^*(t)}(0,1,\boldsymbol w)+\mu_{N^*(t)}(0,0,\boldsymbol w)\\
    &=\sum_{z=0,1}(2z-1)(\mathbb{E}[N^*(t)|L=1,Z=z,\boldsymbol W=\boldsymbol w]-\mathbb{E}[N^*(t)|L=0,Z=z,\boldsymbol W=\boldsymbol w])
\end{align*}
Next using Assumption 1 of consistency in the main text, the above expression becomes
\begin{align*}
    \sum_{z=0,1}(2z-1)(\mathbb{E}[N_1^{*A_1(z)}(t)|L=1,Z=z,\boldsymbol W=\boldsymbol w]-\mathbb{E}[N_0^{*A_0(z)}(t)|L=0,Z=z,\boldsymbol W=\boldsymbol w])
\end{align*}
Next using Assumption 3 of random sampling in the main text, the above expression becomes,
\begin{align*}
    &\sum_{z=0,1}(2z-1)(\mathbb{E}[N_1^{*A_1(z)}(t)-N_0^{*A_0(z)}(t)|Z=z,\boldsymbol W=\boldsymbol w])\\
    &=\sum_{z=0,1}(2z-1)(\mathbb{E}[A_1(z)N_1^{*1}(t)+(1-A_1(z))N_1^{*0}(t)-A_0(z)N_0^{*1}(t)+(1-A_0(z))N_0^{*0}(t)|Z=z,\boldsymbol W=\boldsymbol w])\\
    &=\sum_{z=0,1}(2z-1)(\mathbb{E}[A_1(z)[N_1^{*1}(t)-N_1^{*0}(t)]-A_0(z)[N_0^{*1}(t)-N_0^{*0}(t)]+N_1^{*0}(t)-N_0^{*0}(t)|Z=z,\boldsymbol W=\boldsymbol w])
\end{align*}
Using Assumption 5 of independence and exclusion restriction in the main text, the above expression becomes
\begin{align*}
    &\sum_{z=0,1}(2z-1)(\mathbb{E}[A_1(z)(N_1^{*1}(t)-N_1^{*0}(t))-A_0(z)(N_0^{*1}(t)-N_0^{*0}(t))+N_1^{*0}(t)-N_0^{*0}(t)|\boldsymbol W=\boldsymbol w])\\
    &=\mathbb{E}[(A_1(1)-A_1(0))(N_1^{*1}(t)-N_1^{*0}(t))-(A_0(1)-A_0(0))(N_0^{*1}(t)-N_0^{*0}(t))|\boldsymbol W=\boldsymbol w]
\end{align*}
Using Assumption 6 of no unmeasured common effect modifier in the main text, the above expression becomes
\begin{align*}
    &\mathbb{E}[A_1(1)-A_1(0)|\boldsymbol W=\boldsymbol w] \mathbb{E}[N_1^{*1}(t)-N_1^{*0}(t)|\boldsymbol W=\boldsymbol w]\\
    &\hspace{3cm}-\mathbb{E}[A_0(1)-A_0(0)|\boldsymbol W=\boldsymbol w]\mathbb{E}[N_0^{*1}(t)-N_0^{*0}(t)|\boldsymbol W=\boldsymbol w]
\end{align*}
Using Assumption 7 of stable treatment effect over each period in the main text, the above expression becomes
\begin{align*}
    \mathbb{E}[A_1(1)-A_1(0)-A_0(1)+A_0(0)|\boldsymbol W=\boldsymbol w]\tau^N(t,\boldsymbol w)
\end{align*}
Next we observe that
\begin{align*}
    &\delta_A(\boldsymbol w)=\mu_{A}(1,1,\boldsymbol w)-\mu_{A}(1,0,\boldsymbol w)-\mu_{A}(0,1,\boldsymbol w)+\mu_{A}(0,0,\boldsymbol w)\\
    &=\sum_{z=0,1}(2z-1)(\mathbb{E}[A|L=1,Z=z,\boldsymbol W=\boldsymbol w]-\mathbb{E}[A|L=0,Z=z,\boldsymbol W=\boldsymbol w])\\
    &=\sum_{z=0,1}(2z-1)(\mathbb{E}[A_1(z)|L=1,Z=z,\boldsymbol W=\boldsymbol w]-\mathbb{E}[A_0(z)|L=0,Z=z,\boldsymbol W=\boldsymbol w])\\
    &=\sum_{z=0,1}(2z-1)(\mathbb{E}[A_1(z)|\boldsymbol W=\boldsymbol w]-\mathbb{E}[A_0(z)|\boldsymbol W=\boldsymbol w])\\
    &=\mathbb{E}[A_1(1)-A_1(0)-A_0(1)+A_0(0)|\boldsymbol W=\boldsymbol w]
\end{align*}
Hence we obtain $\frac{\delta_{N^*(t)}(\boldsymbol w)}{\delta_A(\boldsymbol w)}=\tau^N(t,\boldsymbol w)$ and consequently 
$$\mathbb{E}\left[\frac{\delta_{\widetilde{N}(t)}(\boldsymbol W)}{\delta_A(\boldsymbol W)}\right]=\mathbb{E}\left[\frac{\delta_{N^*(t)}(\boldsymbol W)}{\delta_A(\boldsymbol W)}\right]=\mathbb{E}[\tau^N(t,\boldsymbol W)]\cdot$$ Using the exactly similar steps, we also obtain
$$\mathbb{E}\left[\frac{\delta_{\widetilde{Y}(t)}(\boldsymbol W)}{\delta_A(\boldsymbol W)}\right]=\mathbb{E}\left[\frac{\delta_{Y^*(t)}(\boldsymbol W)}{\delta_A(\boldsymbol W)}\right]=\mathbb{E}[\tau^Y(t,\boldsymbol W)]\cdot$$ 
\noindent
This completes the proof of Theorem 3.
\section{Proof of Theorem 4}
We prove that the conjectured gradient satisfies a von Mises expansion by showing that it vanishes in expectation and that the corresponding remainder is of second order. The first order term is given by,
\begin{align*}
    D_{\beta}(t,O,\mathbb{P})=&\frac{\delta_{\widetilde{N}(t)}(\boldsymbol W)}{\delta_{A}(\boldsymbol W)}(\mathbb{P})-\beta(t)+\frac{(2Z-1)(2L-1)}{\pi(L,Z,\boldsymbol W)\delta_A(\boldsymbol W)}\left[\widetilde{N}(t)-\mu_{\widetilde{N}(t)}(L,Z,\boldsymbol W)-\right.\\
    &\left.\frac{\delta_{\widetilde{N}(t)}(\boldsymbol W)}{\delta_{A}(\boldsymbol W)}\{A-\mu_{A}(L,Z,\boldsymbol W)\}+ \int_{0}^\infty \frac{F(u,t,A,L,Z,\boldsymbol W)}{H(u,A,L,Z,\boldsymbol W)}\frac{dM_C(u,A,L,Z,\boldsymbol W)}{K(u,A,L,Z,\boldsymbol W)}\right]\cdot
\end{align*}
We first show that $\mathbb{E}_{\mathbb{P}}[D^t_{\beta}(t,O,\mathbb{P})]=0$. For a known censoring mechanism $K$, it is shown in the regular iDID case \citep{ye2023instrumented} that the quantity
\begin{align*}
     D_{\beta}(t,O,\mathbb{P})-\frac{(2Z-1)(2L-1)}{\pi(L,Z,\boldsymbol W)\delta_A(\boldsymbol W)}\int_{0}^\infty \frac{F(u,t,A,L,Z,\boldsymbol W)}{H(u,A,L,Z,\boldsymbol W)}\frac{dM_C(u,A,L,Z,\boldsymbol W)}{K(u,A,L,Z,\boldsymbol W)}
\end{align*}
vanishes in expectation under ${\mathbb{P}}$. Hence in our case of estimated $K$, we just concentrate on the last term, 
\begin{align*}
    & \mathbb{E}\left[\frac{(2Z-1)(2L-1)}{\pi(L,Z,\boldsymbol W)\delta_A(\boldsymbol W)}\int_{0}^\infty \frac{F(u,t,A,L,Z,\boldsymbol W)}{H(u,A,L,Z,\boldsymbol W)}\frac{dM_C(u,A,L,Z,\boldsymbol W)}{K(u,A,L,Z,\boldsymbol W)}\right]\\
    &=\mathbb{E}_{\mathbb{P}}\left.\left[\mathbb{E}\left\{\frac{(2Z-1)(2L-1)}{\pi(L,Z,\boldsymbol W)\delta_A(\boldsymbol W)}\int_{0}^\infty \frac{F(u,t,A,L,Z,\boldsymbol W)}{H(u,A,L,Z,\boldsymbol W)}\frac{dM_C(u,A,L,Z,\boldsymbol W)}{K(u,A,L,Z,\boldsymbol W)}\right|A,L,Z,\boldsymbol W\right\}\right]\\
    &=\mathbb{E}_{\mathbb{P}}\left[\frac{(2Z-1)(2L-1)}{\pi(L,Z,\boldsymbol W)\delta_A(\boldsymbol W)}\int_{0}^\infty \frac{F(u,t,A,L,Z,\boldsymbol W)}{H(u,A,L,Z,\boldsymbol W)}\frac{d\mathbb{E}\left\{M_C(u,A,L,Z,\boldsymbol W)|A,L,Z,\boldsymbol W\right\}}{K(u,A,L,Z,\boldsymbol W)}\right]\cdot
\end{align*}
We compute the conditional expectation of the martingale,
\begin{align*}
    &\mathbb{E}_{\mathbb{P}}\left\{M_C(u,A,L,Z,\boldsymbol W)|A,L,Z,\boldsymbol W\right\}=\mathbb{E}\left.\left\{N_C(u)-\int_{(0,u]}Y^{\dagger}(s)d\Lambda_C(s,A,L,Z,\boldsymbol W)\right|A,L,Z,\boldsymbol W\right\}\\
    &=\mathbb{E}\left.\left\{N_C(u)-\int_{(0,u]}Y^{\dagger}(s)\frac{d\mathbb{E}\{N_C(s)|A,L,Z,\boldsymbol W\}}{\mathbb{E}\{Y^{\dagger}(s)|A,L,Z,\boldsymbol W\}}\right|A,L,Z,\boldsymbol W\right\}\\
    &=\mathbb{E}\{N_C(u)|A,L,Z,\boldsymbol W\}-\int_{(0,u]}\mathbb{E}\{Y^{\dagger}(s)|A,L,Z,\boldsymbol W\}\frac{d\mathbb{E}\{N_C(s)|A,L,Z,\boldsymbol W\}}{\mathbb{E}\{Y^{\dagger}(s)|A,L,Z,\boldsymbol W\}}\\
    &=\mathbb{E}\{N_C(u)|A,L,Z,\boldsymbol W\}-\int_{(0,u]}d\mathbb{E}\{N_C(s)|A,L,Z,\boldsymbol W\}=0\cdot
\end{align*}
This shows $\mathbb{E}_{\mathbb{P}}[D_{\beta}(t,O,\mathbb{P})]=0$.\\

\noindent
Next, we show that the remainder term in the von Mises expansion is of second order. The von Mises expansion of $\beta(t)$ at $\bar{\mathbb{P}}$ around $\mathbb{P}$ is given by,
\begin{align*}
    \beta(t,\bar{\mathbb{P}})- \beta(t,\mathbb{P})=(\bar{\mathbb{E}}-\mathbb{E}) D^t_{\beta}(t,O,\bar{\mathbb{P}})+R_{\beta}(t,O,\bar{\mathbb{P}}, \mathbb{P})
\end{align*}
Let $\phi(t,O,\mathbb{P})=\frac{\delta_{\widetilde{N}(t)}(\boldsymbol W)}{\delta_{A}(\boldsymbol W)}(\mathbb{P})$, hence $\beta(t,\mathbb{P})=\mathbb{E}(\phi(t,O,\mathbb{P}))$. Let $D_{\text{aug},\beta}(t,O,\mathbb{P})=d^t_{\beta}(t,O,\mathbb{P})-\phi(t,O,\mathbb{P})+\beta(t,\mathbb{P})$. Since, $\bar{\mathbb{E}}[D^t_{\beta}(t,O,\bar{\mathbb{P}})]=0$, we obtain
\begin{align*}
    &R(t,O,\bar{\mathbb{P}},\mathbb{P})=\beta(t,\bar{\mathbb{P}})- \beta(t,\mathbb{P})+\mathbb{E}[D_{\beta}(t,O,\bar{\mathbb{P}})]= \beta(t,\bar{\mathbb{P}})- \beta(t,\mathbb{P})+\mathbb{E}[\phi(t,O,\bar{\mathbb{P}})-\beta(t,\bar{\mathbb{P}})+D_{\text{aug},\beta}(t,O,\bar{\mathbb{P}})]\\
    &=\mathbb{E}(\phi(t,O,\bar{\mathbb{P}}))-\mathbb{E}(\phi(t,O,\mathbb{P}))+\mathbb{E}[D_{\text{aug},\beta}(t,O,\bar{\mathbb{P}})]
\end{align*}
\subsubsection*{Part 1}
We drop the random variables except for $\boldsymbol W$ from each parameters for notational simplicity. Since $\mathbb{E}[\tau^N(t,\boldsymbol W)]=\beta(t,\mathbb{P})$, we obtain
\begin{align*}
    &\mathbb{E}(\phi(t,O,\bar{\mathbb{P}}))-\mathbb{E}(\phi(t,O,\mathbb{P}))=\mathbb{E}\left[\frac{\bar{\delta}_{\widetilde{N}(t)}}{\bar{\delta}_{A}}-\tau^N(t)\right]=\mathbb{E}\left[\frac{1}{\bar{\delta}_{A}}[\bar{\delta}_{\widetilde{N}(t)}-\tau^N(t)\bar{\delta}_{A}]-\frac{1}{\bar{\delta}_{A}}[\delta_{\widetilde{N}(t)}-\tau^N(t)\delta_{A}]\right]\\
    &=\mathbb{E}\left[\frac{1}{\bar{\delta}_{A}}[\bar{\delta}_{\widetilde{N}(t)}-\tau^N(t)\bar{\delta}_{A}-\delta_{\widetilde{N}(t)}-\tau^N(t)\delta_{A}]\right]=\mathbb{E}\left[\frac{1}{\bar{\delta}_{A}}[\bar{\delta}_{\widetilde{N}(t)}-\delta_{\widetilde{N}(t)}-\tau^N(t)[\bar{\delta}_{A}-\delta_{A}]\right]
\end{align*}
For $M\in \{\widetilde{N}(t),A\}$, we know that
\begin{align*}
    \mathbb{E}\left.\left[\frac{(2Z-1)(2L-1)\mu_M}{\pi}\right|\boldsymbol W\right]=\delta_M, \quad \mathbb{E}\left.\left[\frac{(2Z-1)(2L-1)\bar{\mu}_M}{\pi}\right|\boldsymbol W\right]=\bar{\delta}_M,
\end{align*}
Hence the above term can be written as, 
\begin{align*}
    \mathbb{E}\left[\frac{1}{\bar{\delta}_{A}}[\bar{\delta}_{\widetilde{N}(t)}-\delta_{\widetilde{N}(t)}-\tau^N(t)[\bar{\delta}_{A}-\delta_{A}]\right]= \mathbb{E}\left[\frac{(2Z-1)(2L-1)}{\pi \bar{\delta}_{A}}[\bar{\mu}_{\widetilde{N}(t)}-\mu_{\widetilde{N}(t)}-\tau^N(t)[\bar{\mu}_{A}-\mu_{A}]]\right]
\end{align*}
\subsubsection*{Part 2}
We work the following expression,
\begin{align*}
    &\mathbb{E}\left[\frac{(2Z-1)(2L-1)}{\bar{\pi}\bar{\delta_A}}\left[\frac{\mathbb{I}(C\geq D)}{\bar{K}(X-)}N(t)-\bar{\mu}_{\widetilde{N}(t)}-\frac{\bar{\delta}_{\widetilde{N}(t)}}{\bar{\delta}_{A}}\{A-\bar{\mu}_{A}\}\right]\right]=\\
    &\mathbb{E}\left[\frac{(2Z-1)(2L-1)}{\bar{\pi}\bar{\delta_A}}\left[\frac{\mathbb{I}(C\geq D)}{\bar{K}(X-)}N(t)-\frac{\mathbb{I}(C\geq D)}{K(X-)}N(t)+\frac{\mathbb{I}(C\geq D)}{K(X-)}N(t)-\bar{\mu}_{\widetilde{N}(t)}-\frac{\bar{\delta}_{\widetilde{N}(t)}}{\bar{\delta}_{A}}\{A-\bar{\mu}_{A}\}\right]\right]
\end{align*}
\subsubsection*{Part 2.1}
\begin{align*}
    &\mathbb{E}\left[\frac{(2Z-1)(2L-1)}{\bar{\pi}\bar{\delta_A}}\left[\frac{\mathbb{I}(C\geq D)}{K(X-)}N(t)-\bar{\mu}_{\widetilde{N}(t)}-\frac{\bar{\delta}_{\widetilde{N}(t)}}{\bar{\delta}_{A}}\{A-\bar{\mu}_{A}\}\right]\right]\\
    &=  \mathbb{E}\left[\frac{(2Z-1)(2L-1)}{\bar{\pi}\bar{\delta_A}}\left[\widetilde{N}(t)-\bar{\mu}_{\widetilde{N}(t)}-\frac{\bar{\delta}_{\widetilde{N}(t)}}{\bar{\delta}_{A}}\{A-\bar{\mu}_{A}\}\right]\right]\\
    &=\mathbb{E}\left[\frac{(2Z-1)(2L-1)}{\bar{\pi}\bar{\delta_A}}\left[\mu_{\widetilde{N}(t)}-\bar{\mu}_{\widetilde{N}(t)}-\frac{\bar{\delta}_{\widetilde{N}(t)}}{\bar{\delta}_{A}}\{\mu_A-\bar{\mu}_{A}\}\right]\right]\\
    &=\mathbb{E}\left[\frac{(2Z-1)(2L-1)}{\bar{\pi}\bar{\delta_A}}\left[(\mu_{\widetilde{N}(t)}-\bar{\mu}_{\widetilde{N}(t)})-\tau^N(t)\{\mu_A-\bar{\mu}_{A}\}+\tau^N(t)\{\mu_A-\bar{\mu}_{A}\}-\frac{\bar{\delta}_{\widetilde{N}(t)}}{\bar{\delta}_{A}}\{\mu_A-\bar{\mu}_{A}\}\right]\right]
\end{align*}
\subsubsection*{Combination of Part 1 and Part 2.1}
Combining Parts 1 and 2.1, we obtain the following expression,
\begin{align*}
    &\mathbb{E}\left[\frac{(2Z-1)(2L-1)}{\bar{\delta_A}}\left[\left(\frac{1}{\pi}-\frac{1}{\bar{\pi}}\right)[\bar{\mu}_{\widetilde{N}(t)}-\mu_{\widetilde{N}(t)}-\tau^N(t)[\bar{\mu}_{A}-\mu_{A}]]\right]+\frac{1}{\bar{\pi}}\left(\tau^N(t)-\frac{\bar{\delta}_{\widetilde{N}(t)}}{\bar{\delta}_{A}}\right)(\mu_{A}-\bar{\mu}_{A})\right]
\end{align*}
\subsubsection*{Part 2.2}
At first we show that $\mathbb{E}[F(u,t,A,L,Z,\boldsymbol W)]=\mathbb{E}[F^*(u,t,A,L,Z,\boldsymbol W)]$. It is derived as following:
\begin{align*}
    & \mathbb{E}[F(u,t,A,L,Z,\boldsymbol W)]=\mathbb{E}[\mathbb{I}(X>u)\widetilde{N}(t)|A,L,Z,\boldsymbol W]\\
    & =\mathbb{E}\left[\mathbb{I}(X>u)\frac{\mathbb{I}(C\geq D)}{K(X-,A,L,Z,\boldsymbol W)}N(t)|A,L,Z,\boldsymbol W\right]\\
    &=\mathbb{E}\left[\mathbb{I}(D>u)\frac{\mathbb{I}(C\geq D)}{K(D-,A,L,Z,\boldsymbol W)}N^*(t)|A,L,Z,\boldsymbol W=\boldsymbol w\right]\\
    &=\mathbb{E}\left[\mathbb{I}(D>u)N^*(t)|A,L,Z,\boldsymbol W\right]=\mathbb{E}[F^*(u,t,A,L,Z,\boldsymbol W)]
\end{align*}
We proceed with the following expression as,
\begin{align*}
    &\mathbb{E}\left[\frac{(2Z-1)(2L-1)}{\bar{\pi}\bar{\delta_A}}\left[\frac{\mathbb{I}(C\geq D)}{\bar{K}(X-)}N(t)-\frac{\mathbb{I}(C\geq D)}{K(X-)}N(t)\right]\right]\\
    &=\mathbb{E}\left[\frac{(2Z-1)(2L-1)\mathbb{I}(C\geq D)N(t)}{\bar{\pi}\bar{\delta_A}}\left(\frac{1}{\bar{K}(X-)}-\frac{1}{K(X-)}\right)\right]\\
    &=\mathbb{E}\left[\frac{(2Z-1)(2L-1)\mathbb{I}(C\geq D)N^*(t)}{\bar{\pi}\bar{\delta_A}}\left(\frac{1}{\bar{K}(X-)}-\frac{1}{K(X-)}\right)\right]\\
    &=\mathbb{E}\left[\frac{(2Z-1)(2L-1)K(X-) N^*(t)}{\bar{\pi}\bar{\delta_A}}\left(\frac{1}{\bar{K}(X-)}-\frac{1}{K(X-)}\right)\right]\\
    &=\mathbb{E}\left[\frac{(2Z-1)(2L-1)N^*(t)}{\bar{\pi}\bar{\delta_A}}\left(\frac{K(X-)}{\bar{K}(X-)}-1\right)\right]\\
    &=-\mathbb{E}\left[\frac{(2Z-1)(2L-1)}{\bar{\pi}\bar{\delta_A}}\int_{0}^{\infty}\mathbb{I}(D>u)N^*(t)\frac{K(u)}{\bar{K}(u)}(d\Lambda_C(u)-d\bar{\Lambda}_C(u))\right]\\
    &=-\mathbb{E}\left[\frac{(2Z-1)(2L-1)}{\bar{\pi}\bar{\delta_A}}\int_{0}^{\infty}F(u,t)\frac{K(u)}{\bar{K}(u)}(d\Lambda_C(u)-d\bar{\Lambda}_C(u))\right]
\end{align*}
\noindent
In the fifth equality we used the identity $\Lambda_C(t) = - \int_{(0, t]} \frac{d K(u)}{K(u-)}$. Specifically, let
$R(u)=[\bar K(u)]^{-1} K(u).$ If \(K\) and \(\bar K\) were smooth, then
$dK(u)=-K(u)\,d\Lambda_C(u)$, and 
$d\bar K(u)=-\bar K(u)\,d\bar\Lambda_C(u)$. Then the ordinary quotient rule gives
\[
dR(u)
=
d\left\{\frac{K(u)}{\bar K(u)}\right\}
=
\frac{dK(u)}{\bar K(u)}
-
\frac{K(u)}{\bar K(u)^2}\,d\bar K(u).
\]
Substituting the two differential identities yields
\[
dR(u)
=
\frac{-K(u)\,d\Lambda_C(u)}{\bar K(u)}
-
\frac{K(u)}{\bar K(u)^2}
\{-\bar K(u)\,d\bar\Lambda_C(u)\}.
\]
Therefore,
$dR(u)
=
-\frac{K(u)}{\bar K(u)}\,d\Lambda_C(u)
+
\frac{K(u)}{\bar K(u)}\,d\bar\Lambda_C(u)$.
Hence, $dR(u)
=
R(u)\{d\bar\Lambda_C(u)-d\Lambda_C(u)\}$. 

\subsubsection*{Part 3}
We need to calculate the following term given by:
\begin{align*}
    &\mathbb{E}\left[\frac{(2Z-1)(2L-1)}{\bar{\pi}\bar{\delta_A}}\int_{0}^\infty \frac{\bar{F}(u,t)}{\bar{H}(u)}\frac{dM_C(u)}{\bar{K}(u)}\right]=\mathbb{E}\left[\frac{(2Z-1)(2L-1)}{\bar{\pi}\bar{\delta_A}}\int_{0}^\infty \frac{(dN_C(u)-Y^{\dagger}(u)d\bar{\Lambda}_C(u))}{\bar{K}(u)}\frac{\bar{F}(u,t)}{\bar{H}(u)}\right]\\
    &=\mathbb{E}\left[\frac{(2Z-1)(2L-1)}{\bar{\pi}\bar{\delta_A}}\int_{0}^\infty \frac{(d\mathbb{E}[N_C(u)|A,L,Z,\boldsymbol W]-E[Y^{\dagger}(u)|A,L,Z,\boldsymbol W])d\bar{\Lambda}_C(u))}{\bar{K}(u)}\frac{\bar{F}(u,t)}{\bar{H}(u)}\right]\\
    &=\mathbb{E}\left[\frac{(2Z-1)(2L-1)}{\bar{\pi}\bar{\delta_A}}\int_{0}^\infty \frac{(E[Y^{\dagger}(u)|A,L,Z,\boldsymbol W]d\Lambda_C-E[Y^{\dagger}(u)|A,L,Z,\boldsymbol W])d\bar{\Lambda}_C(u))}{\bar{K}(u)}\frac{\bar{F}(u,t)}{\bar{H}(u)}\right]\\
    &=\mathbb{E}\left[\frac{(2Z-1)(2L-1)}{\bar{\pi}\bar{\delta_A}}\int_{0}^\infty E[Y^{\dagger}(u)|A,L,Z,\boldsymbol W]\frac{(d\Lambda_C(u)-d\bar{\Lambda}_C)(u)}{\bar{K}(u)}\frac{\bar{F}(u,t)}{\bar{H}(u)}\right]\\
    &=\mathbb{E}\left[\frac{(2Z-1)(2L-1)}{\bar{\pi}\bar{\delta_A}}\int_{0}^\infty H(u)\frac{K(u)\bar{F}(u,t)}{\bar{K}(u)\bar{H}(u)}(d\Lambda_C(u)-d\bar{\Lambda}_C(u))\right]
\end{align*}
\subsubsection*{Combining Part 2.2 and 3}
Combining these parts we obtain, 
\begin{align*}
    \mathbb{E}\left[\frac{(2Z-1)(2L-1)}{\bar{\pi}\bar{\delta_A}}\int_{0}^\infty H(u)\frac{K(u)}{\bar{K}(u)}\left[\frac{F(u,t)}{H(u)}-\frac{\bar{F}(u,t)}{\bar{H}(u)}\right](d\Lambda_C(u)-d\bar{\Lambda}_C(u))\right]
\end{align*}
\begin{align*}
    &=\mathbb{E}\left[\frac{(2Z-1)(2L-1)}{\bar{\pi}\bar{\delta_A}}\int_{0}^\infty H(u)\left[\frac{F(u,t)}{H(u)}-\frac{\bar{F}(u,t)}{\ H(u)}\right]d\left\{\frac{K(u)-\bar{K}(u)}{\bar{K}(u)}\right\}\right]\\
    &=\mathbb{E}\left[\frac{(2Z-1)(2L-1)}{\bar{\pi}\bar{\delta_A}}\int_{0}^\infty H(u)\left[\frac{F(u,t)}{H(u)}-\frac{\bar{F}(u,t)}{\bar{H}(u)}\right]d\left\{\frac{K(u)}{\bar{K}(u)}\right\}\right]
\end{align*}
\subsection*{Final Remainder}
The final von Mises remainder is obtained as:
\begin{align*}
    &\mathbb{E}\left[\frac{(2Z-1)(2L-1)}{\bar{\delta_A}}\left[\left(\frac{1}{\pi}-\frac{1}{\bar{\pi}}\right)[\bar{\mu}_{\widetilde{N}(t)}-\mu_{\widetilde{N}(t)}-\tau^N(t)[\bar{\mu}_{A}-\mu_{A}]]\right]+\frac{1}{\bar{\pi}}\left(\tau^N(t)-\frac{\bar{\delta}_{\widetilde{N}(t)}}{\bar{\delta}_{A}}\right)(\mu_{A}-\bar{\mu}_{A})\right.\\
    &\left.\hspace{4cm}+\frac{1}{\bar{\pi}}\int_{0}^\infty H(u)\left[\frac{F(u,t)}{H(u)}-\frac{\bar{F}(u,t)}{\bar{H}(u)}\right]d\left\{\frac{K(u)}{\bar{K}(u)}\right\}\right]
\end{align*}
Hence the remainder term $R_{\beta}(t,O,\bar{\mathbb{P}}, \mathbb{P})$ depends on the pairwise products and hence it is of the second order. Using the exact same steps, we can prove the above results for $\eta(t)$ as well.

\section{Proof of Theorem 5}
Considering the second-order remainder terms from von-Mises expansions in Theorem 4, it is easy to see under $\mathcal{M}=\bigcup_{j=1}^6 \mathcal{M}_j$, 
$\mathbb{E}[\Delta^t_N(O)|\boldsymbol W]=\tau^N(t,\boldsymbol W)$ and $\mathbb{E}[\Delta^t_Y(O)|\boldsymbol W]=\tau^Y(t,\boldsymbol W)\cdot$
Hence we obtain 
    $\mathbb{E}[\Delta^t_N(O)d^t(\boldsymbol W)]=\mathbb{E}[\tau^N(t,\boldsymbol W)d^t(\boldsymbol W)] \quad \text{and}\quad \mathbb{E}[\Delta^t_Y(O)d^t(\boldsymbol W)]=\mathbb{E}[\tau^Y(t,\boldsymbol W)d^t(\boldsymbol W)]\cdot$
This proves the first part of Theorem 5. Next we evaluate the following expectation,
\begin{align*}
    \mathbb{E}[W^t_{N}\mathbb{I}\{A=d^t(\boldsymbol W)\}]&=\frac{1}{2} \mathbb{E}[W^t_{N}(2\mathbb{I}\{A=d^t(\boldsymbol W)\}-1)]+\frac{1}{2}\mathbb{E}(W^t_{N})\\
    &=\frac{1}{2} \mathbb{E}[W^t_{N}(2A-1)(2d^t(\boldsymbol W)-1)]+\frac{1}{2}\mathbb{E}(W^t_{N})\\
    &=\frac{1}{2} \mathbb{E}[\Delta^t_N(O)(2d^t(\boldsymbol W)-1)]+\frac{1}{2}\mathbb{E}(W^t_{N})\\
    &=\mathbb{E}[\Delta^t_N(O)d^t(\boldsymbol W)]-\frac{1}{2} \mathbb{E}[\Delta^t_N(O)-W^t_{N}]
\end{align*}
Hence we obtain that, 
    $\arg\min_{d^t\in \mathcal{D}}\mathbb{E}[W^t_{N}\mathbb{I}\{A=d^t(\boldsymbol W)\}]=\arg\min_{d^t\in \mathcal{D}}\mathbb{E}[\Delta^t_N(O)d^t(\boldsymbol W)].$
Similarly we can show that,
    $\mathbb{E}[W^t_{Y}\mathbb{I}\{A=d^t(\boldsymbol W)\}]=\mathbb{E}[\Delta^t_Y(O)d^t(\boldsymbol W)]-\frac{1}{2} \mathbb{E}[\Delta^t_Y(O)-W^t_{Y}]$. 
Hence we obtain that, 
\begin{align*}
    \mathbb{E}[W^t_{Y}(\mathbb{I}\{A=d^t(\boldsymbol W)\}-\mathbb{I}\{A=\widetilde{d}(\boldsymbol W)\})]=\mathbb{E}[\Delta^t_Y(O)\{d^t(\boldsymbol W)-\widetilde{d}(\boldsymbol W)\}]
\end{align*}
This proves Theorem 5.

\section{Proof of Theorem 6}
\textbf{Proof of (i).} To implement the cross-fitting procedure, we randomly partition the sample indices into $K$ disjoint folds. Without loss of generality, we focus our proof on the $K=2$ case, denoting the index sets of the two partitions as $\mathcal{I}_1$ and $\mathcal{I}_2$. Let $n_1 = |\mathcal{I}_1|$ and $n_2 = |\mathcal{I}_2|$ represent the respective cardinalities of these folds. We first define the following fold-specific quantities:
\begin{align*}
    &\Phi^{t\mathcal{I}_I}_{1N}(\boldsymbol \theta)=-\frac{1}{n_1}\sum_{i \in \mathcal{I}_1}\left[\frac{\delta_{\widetilde{N}(t)}(\boldsymbol W_i)}{\delta_{A}(\boldsymbol W_i)}+\frac{(2Z_i-1)(2L_i-1)}{\pi(L_i,Z_i,\boldsymbol W_i)\delta_A(\boldsymbol W_i)}\left[\frac{\Delta_iN_i(t)}{K(X_i-,A_i,L_i,Z_i,\boldsymbol W_i)}-\mu_{\widetilde{N}(t)}(L_i,Z_i,\boldsymbol W_i)\right.\right.\\
    &\left.\left.-\frac{\delta_{\widetilde{N}(t)}(\boldsymbol W_i)}{\delta_{A}(\boldsymbol W_i)}\{A-\mu_{A}(L_i,Z_i,\boldsymbol W_i)\}+ \int_{0}^\infty \frac{F(u,t,A_i,L_i,Z_i,\boldsymbol W_i)}{H(u,A_i,L_i,Z_i,\boldsymbol W_i)}\frac{dM_C(u,A_i,L_i,Z_i,\boldsymbol W_i)}{K(u,A_i,L_i,Z_i,\boldsymbol W_i)}\right]d^t_{\boldsymbol \theta}(\boldsymbol W_i)\right]
\end{align*}
\begin{align*}
    &\Psi^{t\mathcal{I}_I}_{1N}(\boldsymbol \theta)=\frac{1}{n_1}\sum_{i \in \mathcal{I}_1}\left[\frac{\delta_{Y(t)}(\boldsymbol W_i)}{\delta_{A}(\boldsymbol W_i)}+\frac{(2Z_i-1)(2L_i-1)}{\pi(L_i,Z_i,\boldsymbol W_i)\delta_A(\boldsymbol W_i)}\left[\frac{\Delta_iY_i(t)}{K(X_i-,A_i,L_i,Z_i,\boldsymbol W_i)}-\mu_{Y(t)}(L_i,Z_i,\boldsymbol W_i)\right.\right.\\
    &\left.\left.-\frac{\delta_{Y(t)}(\boldsymbol W_i)}{\delta_{A}(\boldsymbol W_i)}\{A-\mu_{A}(L_i,Z_i,\boldsymbol W_i)\}+ \int_{0}^\infty \frac{H(u\vee t,A_i,L_i,Z_i,\boldsymbol W_i)}{H(u,A_i,L_i,Z_i,\boldsymbol W_i)}\frac{dM_C(u,A_i,L_i,Z_i,\boldsymbol W_i)}{K(u,A_i,L_i,Z_i,\boldsymbol W_i)}\right][d^t_{\boldsymbol \theta}(\boldsymbol W_i)-\widetilde{d}(\boldsymbol W_i)]\right]
\end{align*}

\begin{align*}
    & \widehat{\Phi}^{t\mathcal{I}_1}_1(\boldsymbol \theta)=-\frac{1}{n_1}\sum_{i \in \mathcal{I}_1}\left[\frac{\widehat{\delta}_{\widetilde{N}(t)}(\boldsymbol W_i)}{\widehat{\delta}_{A}(\boldsymbol W_i)}+\frac{(2Z_i-1)(2L_i-1)}{\widehat{\pi}(L_i,Z_i,\boldsymbol W_i)\widehat{\delta}_A(\boldsymbol W_i)}\left[\frac{\Delta_iN_i(t)}{\widehat{K}(X_i-,A_i,L_i,Z_i,\boldsymbol W_i)}-\widehat{\mu}_{\widetilde{N}(t)}(L_i,Z_i,\boldsymbol W_i)\right.\right.\\
    &\left.\left.-\frac{\widehat{\delta}_{\widetilde{N}(t)}(\boldsymbol W_i)}{\widehat{\delta}_{A}(\boldsymbol W_i)}\{A-\widehat{\mu}_{A}(L_i,Z_i,\boldsymbol W_i)\}+ \int_{0}^\infty \frac{\widehat{F}(u,t,A_i,L_i,Z_i,\boldsymbol W_i)}{\widehat{H}(u,A_i,L_i,Z_i,\boldsymbol W_i)}\frac{d\widehat{M_C}(u,A_i,L_i,Z_i,\boldsymbol W_i)}{\widehat{K}(u,A_i,L_i,Z_i,\boldsymbol W_i)}\right]d^t_{\boldsymbol \theta}(\boldsymbol W_i)\right]
\end{align*}
\begin{align*}
    &\widehat{\Psi}_1^{t\mathcal{I}_1}(\boldsymbol \theta)=\frac{1}{n_1}\sum_{i \in \mathcal{I}_1}\left[\frac{\widehat{\delta}_{Y(t)}(\boldsymbol W_i)}{\widehat{\delta}_{A}(\boldsymbol W_i)}+\frac{(2Z_i-1)(2L_i-1)}{\widehat{\pi}(L_i,Z_i,\boldsymbol W_i)\widehat{\delta}_A(\boldsymbol W_i)}\left[\frac{\Delta_iY_i(t)}{\widehat{K}(X_i-,A_i,L_i,Z_i,\boldsymbol W_i)}-\widehat{\mu}_{Y(t)}(L_i,Z_i,\boldsymbol W_i)\right.\right.\\
    &\left.\left.-\frac{\widehat{\delta}_{Y(t)}(\boldsymbol W_i)}{\widehat{\delta}_{A}(\boldsymbol W_i)}\{A-\widehat{\mu}_{A}(L_i,Z_i,\boldsymbol W_i)\}+ \int_{0}^\infty \frac{\widehat{H}(u\vee t,A_i,L_i,Z_i,\boldsymbol W_i)}{\widehat{H}(u,A_i,L_i,Z_i,\boldsymbol W_i)}\frac{d\widehat{M_C}(u,A_i,L_i,Z_i,\boldsymbol W_i)}{\widehat{K}(u,A_i,L_i,Z_i,\boldsymbol W_i)}\right][d^t_{\boldsymbol \theta}(\boldsymbol W_i)-\widetilde{d}(\boldsymbol W_i)]\right]
\end{align*}
The nuisance functions are estimated from $\mathcal{I}_2$. Hence the cross fitted estimator $\widehat{\Phi}_1^t(\boldsymbol \theta)$ and $\widehat{\Psi}_1^t(\boldsymbol \theta)$ can be written as 
\begin{align*}
    \widehat{\Phi}_1^t(\boldsymbol \theta)=\frac{n_1}{n}\widehat{\Phi}^{t\mathcal{I}_1}_1(\boldsymbol \theta)+\frac{n_2}{n}\widehat{\Phi}_1^{t\mathcal{I}_2}(\boldsymbol \theta), \quad \widehat{\Psi}_1^t(\boldsymbol \theta)=\frac{n_1}{n}\widehat{\Psi}_1^{t\mathcal{I}_1}(\boldsymbol \theta)+\frac{n_2}{n}\widehat{\Psi}_1^{t\mathcal{I}_2}(\boldsymbol \theta) \cdot
\end{align*}
\noindent
To show convergences of $\widehat{\Phi}^{t\mathcal{I}_1}_1(\boldsymbol \theta)$ and $\widehat{\Psi}_1^{t\mathcal{I}_1}(\boldsymbol \theta)$ are sufficient to show convergences of $\widehat{\Phi}_1^t(\boldsymbol \theta)$ and $\widehat{\Psi}_1^t(\boldsymbol \theta)$. At first we show the following,
\begin{align*}
    \widehat{\Phi}^{t\mathcal{I}_1}_1(\boldsymbol \theta)-\Phi^{t\mathcal{I}_I}_{1N}(\boldsymbol \theta)=o_p(n^{-1/2})
\end{align*}
We can decompose the above term as 
\begin{align*}
    & \widehat{\Phi}^{t\mathcal{I}_1}_1(\boldsymbol \theta)-\Phi^{t\mathcal{I}_I}_{1N}(\boldsymbol \theta)=\\
    & -\frac{1}{n_1}\sum_{i \in \mathcal{I}_1}\left[\frac{\widehat{\delta}_{\widetilde{N}(t)}(\boldsymbol W_i)}{\widehat{\delta}_{A}(\boldsymbol W_i)}+\frac{(2Z_i-1)(2L_i-1)}{\widehat{\pi}(L_i,Z_i,\boldsymbol W_i)\widehat{\delta}_A(\boldsymbol W_i)}\left[\frac{\Delta_iN_i(t)}{\widehat{K}(X_i-,A_i,L_i,Z_i,\boldsymbol W_i)}-\widehat{\mu}_{\widetilde{N}(t)}(L_i,Z_i,\boldsymbol W_i)\right.\right.\\
    &\left.\left.-\frac{\widehat{\delta}_{\widetilde{N}(t)}(\boldsymbol W_i)}{\widehat{\delta}_{A}(\boldsymbol W_i)}\{A-\widehat{\mu}_{A}(L_i,Z_i,\boldsymbol W_i)\}+ \int_{0}^\infty \frac{\widehat{F}(u,t,A_i,L_i,Z_i,\boldsymbol W_i)}{\widehat{H}(u,A_i,L_i,Z_i,\boldsymbol W_i)}\frac{d\widehat{M_C}(u,A_i,L_i,Z_i,\boldsymbol W_i)}{\widehat{K}(u,A_i,L_i,Z_i,\boldsymbol W_i)}\right]d^t_{\boldsymbol \theta}(\boldsymbol W_i)\right]\\
    &-\frac{1}{n_1}\sum_{i \in \mathcal{I}_1}\left[\frac{\delta_{\widetilde{N}(t)}(\boldsymbol W_i)}{\delta_{A}(\boldsymbol W_i)}+\frac{(2Z_i-1)(2L_i-1)}{\pi(L_i,Z_i,\boldsymbol W_i)\delta_A(\boldsymbol W_i)}\left[\frac{\Delta_iN_i(t)}{K(X_i-,A_i,L_i,Z_i,\boldsymbol W_i)}-\mu_{\widetilde{N}(t)}(L_i,Z_i,\boldsymbol W_i)\right.\right.\\
    &\left.\left.-\frac{\delta_{\widetilde{N}(t)}(\boldsymbol W_i)}{\delta_{A}(\boldsymbol W_i)}\{A-\mu_{A}(L_i,Z_i,\boldsymbol W_i)\}+ \int_{0}^\infty \frac{F(u,t,A_i,L_i,Z_i,\boldsymbol W_i)}{H(u,A_i,L_i,Z_i,\boldsymbol W_i)}\frac{dM_C(u,A_i,L_i,Z_i,\boldsymbol W_i)}{K(u,A_i,L_i,Z_i,\boldsymbol W_i)}\right]d^t_{\boldsymbol \theta}(\boldsymbol W_i)\right]
\end{align*}
For notational simplicity, we suppress the arguments of the nuisance functions and the subscript $i$ in the calculations below and divide it into different parts. 

\subsubsection*{PART 1}
$$\left(\frac{\widehat{\delta}_{\widetilde{N}(t)}}{\widehat{\delta}_{A}}-\frac{\delta_{\widetilde{N}(t)}}{\delta_{A}}\right)d^t_{\boldsymbol \theta}\cdot$$
\subsubsection*{PART 2.1}
\begin{align*}
    &\frac{(2Z-1)(2L-1)}{\widehat{\pi}\widehat{\delta_A}}\mathbb{I}(C\geq D)N(t)\left(\frac{1}{\widehat{K}(X-)}-\frac{1}{K(X-)}\right)d^t_{\boldsymbol \theta}\\
    &=(2Z-1)(2L-1)\mathbb{I}(C\geq D)N(t)\left(\frac{1}{\widehat{K}(X-)}-\frac{1}{K(X-)}\right)\left[\left(\frac{1}{\widehat {\pi}}-\frac{1}{\pi}\right)\left(\frac{1}{\widehat {\delta_A}}-\frac{1}{\delta_A}\right)\right.\\
    &\left.\hspace{5cm}+\frac{1}{\delta_A}\left(\frac{1}{\widehat {\pi}}-\frac{1}{\pi}\right)+\frac{1}{\pi}\left(\frac{1}{\widehat {\delta_A}}-\frac{1}{\delta_A}\right)+\frac{1}{\pi\delta_A}\right]d^t_{\boldsymbol \theta}
\end{align*}
\subsubsection*{Part 2.2}
\begin{align*}
    &\frac{(2Z-1)(2L-1)}{\widehat{\pi}\widehat{\delta_A}}\left[\widetilde{N}(t)-\widehat{\mu}_{\widetilde{N}(t)}-\frac{\widehat{\delta}_{\widetilde{N}(t)}}{\widehat{\delta}_{A}}(A-\widehat{\mu}_{A})\right]- \frac{(2Z-1)(2L-1)}{\pi\delta_A}\left[\widetilde{N}(t)-\mu_{\widetilde{N}(t)}-\frac{\delta_{\widetilde{N}(t)}}{\delta_{A}}(A-\mu_{A})\right]d^t_{\boldsymbol \theta}\\
    &=(2Z-1)(2L-1)\left[\left(\frac{1}{\widehat {\pi}}-\frac{1}{\pi}\right)\left(\frac{1}{\widehat {\delta_A}}-\frac{1}{\delta_A}\right)\left(\widetilde{N}(t)-\widehat{\mu}_{\widetilde{N}(t)}-\frac{\widehat{\delta}_{\widetilde{N}(t)}}{\widehat{\delta}_{A}}(A-\widehat{\mu}_{A})\right)+\frac{1}{\delta_A}\left(\frac{1}{\widehat {\pi}}-\frac{1}{\pi}\right)\mathcal{T}_1\right.\\
    & \left.+\frac{1}{\pi}\left(\frac{1}{\widehat {\delta_A}}-\frac{1}{\delta_A}\right)\mathcal{T}_1+\frac{1}{\pi\delta_A}\mathcal{T}_2+\left(\widetilde{N}(t)-\mu_{\widetilde{N}(t)}-\frac{\delta_{\widetilde{N}(t)}}{\delta_{A}}(A-\mu_{A})\right)\left\{\frac{1}{\pi}\left(\frac{1}{\widehat {\delta_A}}-\frac{1}{\delta_A}\right)+\frac{1}{\delta_A}\left(\frac{1}{\widehat {\pi}}-\frac{1}{\pi}\right)\right\}\right.\\
    &\left.+\frac{1}{\pi\delta_A}\left(\mu_{\widetilde{N}(t)}-\widehat{\mu}_{\widetilde{N}(t)}-\frac{1}{\delta_A}(A-\mu_A)(\widehat{\delta}_{\widetilde{N}(t)}-\delta_{\widetilde{N}(t)})-\delta_{\widetilde{N}(t)}\left(\frac{1}{\widehat{\delta}_A}-\frac{1}{\delta_A}\right)(A-\mu_A)+\frac{\delta_{\widetilde{N}(t)}}{\delta_A}(\widehat{\mu}_A-\mu_A)\right)\right]d^t_{\boldsymbol \theta}
\end{align*}
In the above expression,
\begin{align*}
    \mathcal{T}_1&=\mu_{\widetilde{N}(t)}-\widehat{\mu}_{\widetilde{N}(t)}-(\widehat{\delta}_{\widetilde{N}(t)}-\delta_{\widetilde{N}(t)})\left(\frac{1}{\widehat{\delta}_A}-\frac{1}{\delta_A}\right)(A-\mu_A)-\frac{1}{\delta_A}(\widehat{\delta}_{\widetilde{N}(t)}-\delta_{\widetilde{N}(t)})(A-\mu_A)\\
    & + \frac{1}{\delta_A}(\widehat{\delta}_{\widetilde{N}(t)}-\delta_{\widetilde{N}(t)})(\widehat{\mu}_A-\mu_A)-\delta_{\widetilde{N}(t)}\left(\frac{1}{\widehat{\delta}_A}-\frac{1}{\delta_A}\right)(A-\mu_A)+\delta_{\widetilde{N}(t)}\left(\frac{1}{\widehat{\delta}_A}-\frac{1}{\delta_A}\right)(\widehat{\mu}_A-\mu_A)+\frac{\delta_{\widetilde{N}(t)}}{\delta_A}(\widehat{\mu}_A-\mu_A)\\
    \mathcal{T}_2 &=\frac{(\delta_{\widetilde{N}(t)}-\widehat{\delta}_{\widetilde{N}(t)})}{\delta_A}(\widehat{\mu}_A-\mu_A)+\left(\frac{1}{\widehat{\delta}_A}-\frac{1}{\delta_A}\right)(\widehat{\mu}_A-\mu_A)-(\delta_{\widetilde{N}(t)}-\widehat{\delta}_{\widetilde{N}(t)})\left(\frac{1}{\widehat{\delta}_A}-\frac{1}{\delta_A}\right)(A-\mu_A)
\end{align*}
\subsubsection*{Part 3.11}
\begin{align*}
    &\frac{(2Z-1)(2L-1)}{\widehat{\pi}{\widehat{\delta}_A}}\int_{0}^{\infty}dN_C(u)\left(\frac{\widehat{F}(u,t)}{\widehat{H}(u)\widehat{K}(u)}-\frac{F(u,t)}{H(u)K(u)}\right)d^t_{\boldsymbol \theta}\\
    &=(2Z-1)(2L-1)\left[\int_{0}^{\infty}dN_C(u)\left\{\left(\frac{\widehat{F}(u,t)}{\widehat{H}(u)}-\frac{F(u,t)}{H(u)}\right)\left(\frac{1}{\widehat{K}(u)}-\frac{1}{K(u)}\right)+\frac{F(u,t)}{H(u)}\left(\frac{1}{\widehat{K}(u)}-\frac{1}{K(u)}\right)\right.\right.\\
    &+\left.\left.\frac{1}{K(u)}\left(\frac{\widehat{F}(u,t)}{\widehat{H}(u)}-\frac{F(u,t)}{H(u)}\right)\right\}\right]\left[\left(\frac{1}{\widehat {\pi}}-\frac{1}{\pi}\right)\left(\frac{1}{\widehat {\delta_A}}-\frac{1}{\delta_A}\right)+\frac{1}{\delta_A}\left(\frac{1}{\widehat {\pi}}-\frac{1}{\pi}\right)+\frac{1}{\pi}\left(\frac{1}{\widehat {\delta_A}}-\frac{1}{\delta_A}\right)+\frac{1}{\pi\delta_A}\right]d^t_{\boldsymbol \theta}
\end{align*}
\subsubsection*{Part 3.12}
\begin{align*}
    &\frac{(2Z-1)(2L-1)}{\widehat{\pi}{\widehat{\delta}_A}}\int_{0}^{\infty}Y^{\dagger}(u)\left(\frac{\widehat{F}(u,t)d\widehat{\Lambda_C}(u)}{\widehat{H}\widehat{K}}-\frac{F(u,t)d\Lambda_C(u)}{H(u)K(u)}\right)d^t_{\boldsymbol \theta}\\
    &=(2Z-1)(2L-1)\left[\int_{0}^{\infty}Y^{\dagger}(u)\left\{\left(\frac{\widehat{F}(u,t)}{\widehat{H}(u)}-\frac{F(u,t)}{H(u)}\right)\left(\frac{d\widehat{\Lambda_C}(u)}{\widehat{K}(u)}-\frac{d\Lambda_C(u)}{K(u)}\right)+\frac{F(u,t)}{H(u)}\left(\frac{d\widehat{\Lambda_C}(u)}{\widehat{K}(u)}-\frac{d\Lambda_C(u)}{K(u)}\right)\right.\right.\\
    &+\left.\left.\frac{d\Lambda_C(u)}{K(u)}\left(\frac{\widehat{F}(u,t)}{\widehat{H}(u)}-\frac{F(u,t)}{H(u)}\right)\right\}\right]\left[\left(\frac{1}{\widehat {\pi}}-\frac{1}{\pi}\right)\left(\frac{1}{\widehat {\delta_A}}-\frac{1}{\delta_A}\right)+\frac{1}{\delta_A}\left(\frac{1}{\widehat {\pi}}-\frac{1}{\pi}\right)+\frac{1}{\pi}\left(\frac{1}{\widehat {\delta_A}}-\frac{1}{\delta_A}\right)+\frac{1}{\pi\delta_A}\right]d^t_{\boldsymbol \theta}
\end{align*}
\subsubsection*{Part 3.2}
\begin{align*}
    &(2Z-1)(2L-1)\left(\int_{0}^{\infty}\frac{F(u,t)}{H(u)}\frac{dM_C(u)}{K(u)}\right)\left(\frac{1}{\widehat{\pi}{\widehat{\delta}_A}}-\frac{1}{\pi\delta_A}\right)d^t_{\boldsymbol \theta}\\
    &=(2Z-1)(2L-1)\left(\int_{0}^{\infty}\frac{F(u,t)}{H(u)}\frac{dM_C(u)}{K(u)}\right)\left[\left(\frac{1}{\widehat {\pi}}-\frac{1}{\pi}\right)\left(\frac{1}{\widehat {\delta_A}}-\frac{1}{\delta_A}\right)+\frac{1}{\delta_A}\left(\frac{1}{\widehat {\pi}}-\frac{1}{\pi}\right)+\frac{1}{\pi}\left(\frac{1}{\widehat {\delta_A}}-\frac{1}{\delta_A}\right)\right]d^t_{\boldsymbol \theta}
\end{align*}
The final term is the sum of the all the above parts. Hence this sum consists of mean zero terms and product terms except the following terms.\\

\noindent
At first we start with Part I
\begin{align*}
\mathbb{E}\left[\left(\frac{\widehat{\delta}_{\widetilde{N}(t)}}{\widehat{\delta}_{A}}-\frac{\delta_{\widetilde{N}(t)}}{\delta_{A}}\right)d^t_{\boldsymbol \theta}\right]=\mathbb{E}\left[\left(\frac{(2Z-1)(2L-1)}{\pi \hat{\delta}_{A}}[\hat{\mu}_{\widetilde{N}(t)}-\mu_{\widetilde{N}(t)}-\tau(\boldsymbol W)[\hat{\mu}_{A}-\mu_{A}]]\right)d^t_{\boldsymbol \theta}\right]
\end{align*}
Then we consider $\mathbb{E}\left(\frac{(2Z-1)(2L-1)}{\pi \delta_{A}}\left[\mu_{\widetilde{N}(t)}-\widehat{\mu}_{\widetilde{N}(t)}+\frac{\delta_{\widetilde{N}(t)}}{\delta_A}(\widehat{\mu}_A-\mu_A)\right]d^t_{\boldsymbol \theta}\right)$ from Part 2.2. Then we add Part 1 and terms from Part 2.2 together and obtain, 
\begin{align*}
    \mathbb{E}\left[\left(\frac{(2Z-1)(2L-1)}{\pi}\left(\frac{1}{\hat{\delta}_{A}}-\frac{1}{\delta_{A}}\right)[\hat{\mu}_{\widetilde{N}(t)}-\mu_{\widetilde{N}(t)}-\tau(\boldsymbol W)[\hat{\mu}_{A}-\mu_{A}]]\right)d^t_{\boldsymbol \theta}\right]
\end{align*}
Next we consider the term from Part 2.1
\begin{align*}
&=\mathbb{E}\left[\frac{(2Z-1)(2L-1)\mathbb{I}(C\geq D)N(t)}{\pi\delta_A}\left(\frac{1}{\widehat{K}(X-)}-\frac{1}{K(X-)}\right)d^t_{\theta}\right]\\
&=-\mathbb{E}\left[\frac{(2Z-1)(2L-1)}{\pi\delta_A}\int_{0}^{\infty}\frac{K(u)}{\widehat{K}(u)}F(u,t)(d\Lambda_C(u)-d\widehat{\Lambda}_C(u))\right]
\end{align*}
Also we consider these two terms from 3.11 given by
\begin{align*}
    &\mathbb{E}\left[\frac{(2Z-1)(2L-1)}{\pi\delta_A}\int_{0}^{\infty}dN_C(u)\left\{\frac{F(u,t)}{H(u)}\left(\frac{1}{\widehat{K}(u)}-\frac{1}{K(u)}\right)+\frac{1}{K(u)}\left(\frac{\widehat{F}(u,t)}{\widehat{H}(u)}-\frac{F(u,t)}{H(u)}\right)\right\}\right]\\
    &= \mathbb{E}\left[\frac{(2Z-1)(2L-1)}{\pi\delta_A}\int_{0}^{\infty}\mathbb{E}[Y^{\dagger}(u)|A,L,Z,\boldsymbol W]d\Lambda_C(u)\left\{\frac{F(u,t)}{H(u)}\left(\frac{1}{\widehat{K}(u)}-\frac{1}{K(u)}\right)\right.\right.\\
    &\left.\left.\hspace{4cm}+\frac{1}{K(u)}\left(\frac{\widehat{F}(u,t)}{\widehat{H}(u)}-\frac{F(u,t)}{H(u)}\right)\right\}d^t_{\theta}\right]
\end{align*}
Also we consider these two terms from 3.12 given by
\begin{align*}
    \mathbb{E}\left[\frac{(2Z-1)(2L-1)}{\pi\delta_A}\int_{0}^{\infty}Y^{\dagger}\left\{\frac{F(u,t)}{H(u)}\left(\frac{d\widehat{\Lambda_C}(u)}{\widehat{K}(u)}-\frac{d\Lambda_C(u)}{K(u)}\right)+\frac{d\Lambda_C(u)}{K(u)}\left(\frac{\widehat{F}(u,t)}{\widehat{H}(u)}-\frac{F(u,t)}{H(u)}\right)\right\}d^t_{\theta}\right]
\end{align*}
Adding the above terms we obtain that 
\begin{align*}
   \mathbb{E}\left[\frac{(2Z-1)(2L-1)}{\pi\delta_A}\int_{0}^{\infty}\frac{K(u)}{\widehat{K}(u)}F(u,t)(d\Lambda_C(u)-d\widehat{\Lambda}_C(u))\right]
\end{align*}
Hence all these terms from 2.1, 3.11 and 3.12 cancel out totally. For the product terms, we use Cauchy Schwarz inequality and the rate conditions in Assumption 10 in the main text to show they are $o_p(n^{-1/2})$. For the mean zero terms we use the following lemma from  \citet{kennedy2020sharp} to prove the rate of $o_p(n^{-1/2})$.
\begin{lemma}
    Consider two independent samples $O_1 = (O_1, \dots, O_n)$ and $O_2 = (O_{n+1}, \dots, O_{\tilde{n}})$, let $\hat{f}(o)$ be a function estimated from $O_2$ and $\mathbb{P}_n$ the empirical measure over $O_1$, then we have$$(\mathbb{P}_n - P)(\hat{f} - f) = O_P \left( \frac{\|\hat{f} - f\|}{\sqrt{n}} \right)$$
\end{lemma}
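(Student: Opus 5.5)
The plan is to condition on the auxiliary sample $O_2$, so that $\hat f$ becomes a fixed (non-random) function, and then apply a conditional Chebyshev inequality. Throughout, $\|g\|^2=\int g(o)^2\,dP(o)$ denotes the $L_2(P)$ norm, evaluated with $\hat f$ held fixed. It is therefore a random quantity only through $O_2$. I will assume $\hat f-f$ is square integrable given $O_2$; this holds in our application by the boundedness in Assumption \ref{ass:ass9}(i) and positivity.

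\textbf{Step 1 (conditional i.i.d. structure).} By independence of the two samples, conditional on $O_2$ the variables $g(O_i):=\hat f(O_i)-f(O_i)$, $i=1,\dots,n$, are i.i.d. Their common conditional mean is $P(\hat f-f)=\int (\hat f-f)\,dP$. Hence
\[
\mathbb{E}\{(\mathbb{P}_n-P)(\hat f-f)\mid O_2\}=0 .
\]
The conditional variance is
\[
\mathrm{Var}\{(\mathbb{P}_n-P)(\hat f-f)\mid O_2\}=\frac{1}{n}\mathrm{Var}\{g(O)\mid O_2\}\le \frac{1}{n}\int(\hat f-f)^2\,dP=\frac{\|\hat f-f\|^2}{n}.
\]

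\textbf{Step 2 (conditional Chebyshev).} On the event $\{\|\hat f-f\|>0\}$, for any $M>0$, Chebyshev's inequality applied conditionally gives
\[
P\left(\frac{|(\mathbb{P}_n-P)(\hat f-f)|}{\|\hat f-f\|/\sqrt n}\ge M \,\Big|\, O_2\right)\le \frac{1}{M^2}.
\]
On the event $\{\|\hat f-f\|=0\}$ we have $\hat f=f$ $P$-a.s. Then $g(O_i)=0$ almost surely for every $i$, so $(\mathbb{P}_n-P)(\hat f-f)=0$ almost surely and the ratio can be set to zero.

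\textbf{Step 3 (uncondition).} Taking expectations over $O_2$ (tower property, with the conditional probabilities bounded by one) yields the unconditional bound $P(\cdot\ge M)\le M^{-2}$. This holds uniformly in $n$, which is exactly the statement $(\mathbb{P}_n-P)(\hat f-f)=O_P(\|\hat f-f\|/\sqrt n)$.

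\textbf{Main obstacle.} The only delicate point is Step 1's measurability and independence. We must justify that $\hat f$ is a measurable function of $O_2$ alone, so that conditioning truly freezes it and leaves the $O_1$ observations i.i.d. from $P$. In our cross-fitting scheme this is guaranteed by construction, since nuisance estimates used on $\mathcal{I}_1$ are fit only on $\mathcal{I}_2$. Everything else is routine.

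In the proof of Theorem \ref{thm:thm6}, the lemma is then applied to each mean-zero term. Combined with the $o_p(1)$ consistency of the nuisance estimators implied by Assumption \ref{ass:ass10}, it gives $O_P(o_p(1)/\sqrt n)=o_p(n^{-1/2})$.
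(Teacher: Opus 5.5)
Your argument is correct and is the standard proof of this result. The paper does not prove the lemma itself but imports it from \citet{kennedy2020sharp}, where it is established by the same steps: condition on $O_2$, use the conditional mean-zero property and the conditional variance bound $\|\hat f - f\|^2/n$, apply Chebyshev conditionally, and then integrate out $O_2$, with your separate handling of the event $\{\|\hat f - f\| = 0\}$ adding a small bit of care the original glosses over.
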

\noindent
Hence it is proved that, $ \widehat{\Phi}^{t\mathcal{I}_1}_1(\boldsymbol \theta)-\Phi^{t\mathcal{I}_I}_{1N}(\boldsymbol \theta)=o_p(n^{-1/2})$. Next, we can write 
\begin{align*}
    \widehat{\Psi}_1^{t\mathcal{I}_1}(\boldsymbol \theta)-\Psi^{\mathcal{I}_I}_{1N}(\boldsymbol \theta)=
    & \frac{1}{n_1}\sum_{i \in \mathcal{I}_1}[\widehat{\Delta}^t_Y(O_i)-\Delta^t_Y(O_i)](d^t_{\boldsymbol \theta}(\boldsymbol W_i)-\widetilde{d}(\boldsymbol W_i))
\end{align*}
where 
\begin{align*}
&\widehat{\Delta}^t_Y(O_i)=\frac{\widehat{\delta}_{Y(t)}(\boldsymbol W_i)}{\widehat{\delta}_{A}(\boldsymbol W_i)}+\frac{(2Z_i-1)(2L_i-1)}{\widehat{\pi}(L_i,Z_i,\boldsymbol W_i)\widehat{\delta}_A(\boldsymbol W_i)}\left[\frac{\Delta_iY_i(t)}{\widehat{K}(X_i-,A_i,L_i,Z_i,\boldsymbol W_i)}-\widehat{\mu}_{Y(t)}(L_i,Z_i,\boldsymbol W_i)\right.\\
    &\left.-\frac{\widehat{\delta}_{Y(t)}(\boldsymbol W_i)}{\widehat{\delta}_{A}(\boldsymbol W_i)}\{A-\widehat{\mu}_{A}(L_i,Z_i,\boldsymbol W_i)\}+ \int_{0}^\infty \frac{\widehat{H}(u\vee t,A_i,L_i,Z_i,\boldsymbol W_i)}{\widehat{H}(u,A_i,L_i,Z_i,\boldsymbol W_i)}\frac{d\widehat{M_C}(u,A_i,L_i,Z_i,\boldsymbol W_i)}{\widehat{K}(u,A_i,L_i,Z_i,\boldsymbol W_i)}\right]\\
&\Delta^t_Y(O_i)=\frac{\delta_{Y(t)}(\boldsymbol W_i)}{\delta_{A}(\boldsymbol W_i)}+\frac{(2Z_i-1)(2L_i-1)}{\pi(L_i,Z_i,\boldsymbol W_i)\delta_A(\boldsymbol W_i)}\left[\frac{\Delta_iY_i(t)}{K(X_i-,A_i,L_i,Z_i,\boldsymbol W_i)}-\mu_{Y(t)}(L_i,Z_i,\boldsymbol W_i)\right.\\
    &\left.-\frac{\delta_{Y(t)}(\boldsymbol W_i)}{\delta_{A}(\boldsymbol W_i)}\{A-\mu_{A}(L_i,Z_i,\boldsymbol W_i)\}+ \int_{0}^\infty \frac{H(u\vee t,A_i,L_i,Z_i,\boldsymbol W_i)}{H(u,A_i,L_i,Z_i,\boldsymbol W_i)}\frac{dM_C(u,A_i,L_i,Z_i,\boldsymbol W_i)}{K(u,A_i,L_i,Z_i,\boldsymbol W_i)}\right]
\end{align*}
Using the same technique as before, we can show $\widehat{\Psi}_1^{t\mathcal{I}_1}(\boldsymbol \theta)-\Psi^{\mathcal{I}_I}_{1n}(\boldsymbol \theta)$ is $o_p(n^{-1/2})$.
Next by WLLN, we can show that 
\begin{align*}
   \Phi^t_{1n}(\boldsymbol \theta)-\Phi^t_1(\boldsymbol\theta)=o_p(1), \quad \Psi^t_{1n}(\boldsymbol \theta)-\Psi^t_1(\boldsymbol\theta)=o_p(1)
\end{align*}
Hence we obtain that 
\begin{align*}
     \widehat{\Phi}_1^t(\boldsymbol \theta)-\Phi^t_1(\boldsymbol\theta)=o_p(1), \quad  \widehat{\Psi}_1^t(\boldsymbol \theta)-\Psi^t_1(\boldsymbol\theta)=o_p(1)
\end{align*}
Next we show that under Assumptions 9 and 10 in the main text, for any $\alpha>0$,
\begin{align*}
     \widehat{\xi}^t_1(\boldsymbol \theta)-\xi^t_1(\boldsymbol\theta)=o_p(n^{-\alpha})
\end{align*}
Fix any $\epsilon >0$, 
\begin{align*}
    & P(n^{\alpha}|\widehat{\xi}^t_1(\boldsymbol \theta)-\xi^t_1\boldsymbol\theta)|>\epsilon)= P(n^{\alpha}|\mathbb{I}(\widehat{\Psi}_1^t(\boldsymbol \theta)<0)-\mathbb{I}(\Psi^t_1(\boldsymbol \theta)<0)|>\epsilon)\\
    & P(\mathbb{I}(\widehat{\Psi}_1^t(\boldsymbol \theta)<0)\neq\mathbb{I}(\Psi^t_1(\boldsymbol \theta)<0))\leq P(|\widehat{\Psi}_1^t(\boldsymbol \theta)-\Psi_1(\boldsymbol\theta)|>|\Psi_1(\boldsymbol\theta)|)\rightarrow 0.
\end{align*}
The last holds since $|\Psi^t_1(\boldsymbol \theta)|>0$ under Assumption 9(iii) in the main text. Since $G^t(\boldsymbol{\theta}) = \Phi^t_1(\boldsymbol{\theta}) - \lambda \cdot \xi^t_1(\boldsymbol{\theta})$, hence we obtain that 
\begin{align*}
    \widehat{G}^t(\boldsymbol \theta)-G^t(\boldsymbol\theta)=o_p(1)
\end{align*}
Since under Assumption 9(ii) in the main text, both $\Phi^t_1(\boldsymbol \theta)$ and $\Psi^t_1(\boldsymbol \theta)$ is twice differentiable in $\mathcal{N}$ and since $|\Psi^t_1(\boldsymbol \theta)|>0$, either $\Psi^t_1(\boldsymbol \theta)>0$ or $\Psi^t_1(\boldsymbol \theta)<0$ for all $\theta \in \mathcal{N}$, then $\xi^t_1\boldsymbol \theta)$ is also twice differentiable in $\mathcal{N}$. Hence $G^t(\boldsymbol \theta)$ is also twice differentiable in $\mathcal{N}$. Next we established that $\widehat{G}^t(\boldsymbol \theta)-G^t(\boldsymbol\theta)=o_p(1)$ and since $\widehat{\boldsymbol \theta}$ maximizes $\widehat{G}^t(\boldsymbol \theta)$, hence we have that $\widehat{G}^t(\widehat{\boldsymbol \theta})\geq \sup_{\boldsymbol \theta}\widehat{G}^t(\boldsymbol \theta)$. Hence by the Argmax theorem, we obtain that $\widehat{\boldsymbol \theta} \xrightarrow{p} \boldsymbol \theta^*$. Next we establish the $n^{-1/3}$ convergence rate for $\widehat{\boldsymbol \theta}$. We wish Theorem 14.4 from \citet{kosorok2008introduction} to establish this. For that we need to satisfy three conditions.\\

\noindent
\textbf{Condition 1:} We need to show that $G^t(\boldsymbol \theta)-G^t(\boldsymbol \theta^*)\leq -c_1||\boldsymbol \theta-\boldsymbol \theta^*||^2$ for some $c_1>0$ and for every $\boldsymbol \theta \in  \mathcal{N}$.\\

\noindent
Since $G^t(\boldsymbol \theta)$ is twice continuously differentiable in $\mathcal{N}$, we obtain by Taylor Series expansion,
\begin{align*}
    G^t(\boldsymbol \theta)-G^t(\boldsymbol \theta^*)&=G^{t'}(\boldsymbol \theta^*)||\boldsymbol \theta-\boldsymbol \theta^*||+\frac{1}{2}G^{t''}(\boldsymbol \theta^*)||\boldsymbol \theta-\boldsymbol \theta^*||^2 +o_p(||\boldsymbol \theta-\boldsymbol \theta^*||^2)\\
    &=\frac{1}{2}G^{t''}(\boldsymbol \theta^*)||\boldsymbol \theta-\boldsymbol \theta^*||^2 +o_p(||\boldsymbol \theta-\boldsymbol \theta^*||^2)
\end{align*}
Since $\boldsymbol \theta^*$ maximizes $G^t(\boldsymbol \theta)$, hence $G^{t'}(\boldsymbol \theta^*)=0$ and $G^{t''}(\boldsymbol \theta^*)<0$. Let $c_1=-\frac{1}{2}G^{t''}(\boldsymbol \theta^*)>0$. Hence we obtain that $G^t(\boldsymbol \theta)-G^t(\boldsymbol \theta^*)\leq -c_1||\boldsymbol \theta-\boldsymbol \theta^*||^2$ for some $c_1>0$ and for every $\boldsymbol \theta \in  \mathcal{N}$.\\

\noindent
\textbf{Condition 2:} For all $N$ large enough and for all small $\delta>0$, there exist a $c_2>0$ such that
\begin{align*}
    \mathbb{E}[\sqrt{n}\sup_{||\boldsymbol \theta-\boldsymbol \theta^*||_2<\delta}| \widehat{G}^t(\boldsymbol \theta)-G^t(\boldsymbol \theta)-[\widehat{G}^t(\boldsymbol \theta^*)-G^t(\boldsymbol \theta^*)])|]\leq c_2\delta^{1/2}.
\end{align*}
and $\phi_n$ such that $\frac{\phi_n(\delta)}{\delta^{\alpha}}$ is decreasing for some $\alpha<2$ not depending on $n$.
\begin{align*}
    & \mathbb{E}[\sqrt{n}\sup_{||\boldsymbol \theta-\boldsymbol \theta^*||_2<\delta}| \widehat{G}^t(\boldsymbol \theta)-G^t(\boldsymbol \theta)-[\widehat{G}^t(\boldsymbol \theta^*)-G^t(\boldsymbol \theta^*)]||\\
    & \leq  \mathbb{E}[\sqrt{n}\sup_{||\boldsymbol \theta-\boldsymbol \theta^*||_2<\delta}| \widehat{\Phi}_1^t(\boldsymbol \theta)-\Phi^t_1(\boldsymbol \theta)-[\widehat{\Phi}_1^t(\boldsymbol \theta^*)-G^t(\boldsymbol \theta^*)]|+ \lambda\mathbb{E}[\sqrt{n}\sup_{||\boldsymbol \theta-\boldsymbol \theta^*||_2<\delta}| \widehat{\xi}^t_1(\boldsymbol \theta)-\xi^t_1\boldsymbol \theta)-[\widehat{\xi}^t_1(\boldsymbol \theta^*)-\xi^t_1\boldsymbol \theta^*)]|
\end{align*}
We first deal with the second term. We already established in the previous part of the proof that for every $\boldsymbol \theta$, $\widehat{\xi}^t_1(\boldsymbol \theta)-\xi^t_1\boldsymbol \theta)=o_p(n^{-\alpha})$. Hence the second term is negligible and we only concentrate on the first term.
\begin{align*}
     &\mathbb{E}[\sqrt{n}\sup_{||\boldsymbol \theta-\boldsymbol \theta^*||_2<\delta}| \widehat{\Phi}_1^t(\boldsymbol \theta)-\Phi^t_1(\boldsymbol \theta)-[\widehat{\Phi}_1^t(\boldsymbol \theta^*)-G^t(\boldsymbol \theta^*)]|\\
     & \leq \mathbb{E}[\sqrt{n}\sup_{||\boldsymbol \theta-\boldsymbol \theta^*||_2<\delta}| \widehat{\Phi}_1^t(\boldsymbol \theta)-\Phi^t_{1n}(\boldsymbol \theta)-[\widehat{\Phi}^t_1(\boldsymbol \theta^*)-\Phi^t_{1n}(\boldsymbol \theta^*)]|\\
     &+\mathbb{E}[\sqrt{n}\sup_{||\boldsymbol \theta-\boldsymbol \theta^*||_2<\delta}| \Phi^t_{1n}(\boldsymbol \theta)-\Phi^t_1(\boldsymbol \theta)-[\Phi^t_{1n}(\boldsymbol \theta^*)-G^t(\boldsymbol \theta^*)]|
\end{align*}
The first part is $o_p(1)$ which has been shown earlier. Hence we deal with the second part only.
\begin{align*}
    &\Phi^t_{1n}(\boldsymbol \theta)-\Phi^t_{1n}(\boldsymbol \theta^*)=-\frac{1}{n}\sum_{i =1}^n\left[\frac{\delta_{\widetilde{N}(t)}(\boldsymbol W_i)}{\delta_{A}(\boldsymbol W_i)}+\frac{(2Z_i-1)(2L_i-1)}{\pi(L_i,Z_i,\boldsymbol W_i)\delta_A(\boldsymbol W_i)}\left[\frac{\Delta_iN_i(t)}{K(X_i-,A_i,L_i,Z_i,\boldsymbol W_i)}\right.\right.\\
    &\left.\left.\hspace{4cm}-\mu_{\widetilde{N}(t)}(L_i,Z_i,\boldsymbol W_i)-\frac{\delta_{\widetilde{N}(t)}(\boldsymbol W_i)}{\delta_{A}(\boldsymbol W_i)}\{A-\mu_{A}(L_i,Z_i,\boldsymbol W_i)\}\right.\right.\\
    &\left.\left.\hspace{4cm}+ \int_{0}^\infty \frac{F(u,t,A_i,L_i,Z_i,\boldsymbol W_i)}{H(u,A_i,L_i,Z_i,\boldsymbol W_i)}\frac{dM_C(u,A_i,L_i,Z_i,\boldsymbol W_i)}{K(u,A_i,L_i,Z_i,\boldsymbol W_i)}\right][d^t_{\boldsymbol \theta}(\boldsymbol W_i)-d^t_{\boldsymbol \theta^*}(\boldsymbol W_i)]\right]
\end{align*}
At first we define
\begin{align*}
    \Delta^t_N(O)&=\frac{\delta_{\widetilde{N}(t)}(\boldsymbol W)}{\delta_{A}(\boldsymbol W)}+\frac{(2Z-1)(2L-1)}{\pi(L,Z,\boldsymbol W)\delta_A(\boldsymbol W)}\left[\frac{\Delta N(t)}{K(X-,A,L,Z,\boldsymbol W)}-\mu_{\widetilde{N}(t)}(L,Z,\boldsymbol W)\right.\\
    &\left.-\frac{\delta_{\widetilde{N}(t)}(\boldsymbol W)}{\delta_{A}(\boldsymbol W)}\{A-\mu_{A}(L,Z,\boldsymbol W)\}+ \int_{0}^\infty \frac{F(u,t,A,L,Z,\boldsymbol W)}{H(u,A,L,Z,\boldsymbol W)}\frac{dM_C(u,A,L,Z,\boldsymbol W)}{K(u,A,L,Z,\boldsymbol W)}\right]
\end{align*}
We define a class of functions given by 
\begin{align*}
    &\mathcal{F}_{\boldsymbol\theta}=\left\{||\boldsymbol \theta-\boldsymbol \theta^*||_2<\delta:\Delta^t_N(O)[d^t_{\boldsymbol \theta}(\boldsymbol W)-d^t_{\boldsymbol \theta^*}(\boldsymbol W)]\right\}
\end{align*}
Let $M_1=\sup|\Delta^t_N(O)|$ and $M_1<\infty$ using Assumptions 9 and 10 in the main text. Next we show if $||\boldsymbol \theta-\boldsymbol \theta^*||_2<\delta$, there exists a $0<k_0<\infty$ such that
\begin{align*}
    d^t_{\boldsymbol \theta}(\boldsymbol W)-d^t_{\boldsymbol \theta^*}(\boldsymbol W)=\mathbb{I}\{-k_0\delta\leq(1,\boldsymbol W')\boldsymbol \theta^*\leq k_0\delta\}
\end{align*}
Using Assumption 9(i) in the main text, we obtain that $(1,\boldsymbol W')\boldsymbol (\boldsymbol\theta-\boldsymbol\theta^*)<k_0\delta$ for some $0<k_0<\infty$. Next when $-k_0\delta\leq(1,\boldsymbol W')\boldsymbol \theta^*\leq k_0\delta$, $|d^t_{\boldsymbol \theta}(\boldsymbol W)-d^t_{\boldsymbol \theta^*}(\boldsymbol W)|\geq 1=\mathbb{I}\{-k_0\delta\leq(1,\boldsymbol W')\boldsymbol \theta^*\leq k_0\delta\}$.\\

\noindent
When $(1,\boldsymbol W')\boldsymbol \theta^*>k_0\delta>0$, we have $(1,\boldsymbol W')\boldsymbol \theta^*=(1,\boldsymbol W')(\boldsymbol\theta-\boldsymbol \theta^*)+(1,\boldsymbol W')\boldsymbol \theta^*>0$, hence $|d^t_{\boldsymbol \theta}(\boldsymbol W)-d^t_{\boldsymbol \theta^*}(\boldsymbol W)|\geq 0=\mathbb{I}\{-k_0\delta\leq(1,\boldsymbol W')\boldsymbol \theta^*\leq k_0\delta\}.$\\

\noindent
When $(1,\boldsymbol W')\boldsymbol \theta^*<-k_0\delta<0$, we have $(1,\boldsymbol W')\boldsymbol \theta^*=(1,\boldsymbol W')(\boldsymbol\theta-\boldsymbol \theta^*)+(1,\boldsymbol W')\boldsymbol \theta^*<0$, hence $|d^t_{\boldsymbol \theta}(\boldsymbol W)-d^t_{\boldsymbol \theta^*}(\boldsymbol W)|\geq 0=\mathbb{I}\{-k_0\delta\leq(1,\boldsymbol W')\boldsymbol \theta^*\leq k_0\delta\}.$\\

\noindent
Hence we define the envelope of $\mathcal{F}_{\boldsymbol\theta}$ as $F=M_1\mathbb{I}\{-k_0\delta\leq(1,\boldsymbol W')\boldsymbol \theta^*\leq k_0\delta\}$. Using Assumption 9(iv) in the main text, we obtain that 
\begin{align*}
    ||F||_{p,2}=\sqrt{\mathbb{E}[F^2]}=M_1\sqrt{P(-k_0\delta\leq(1,\boldsymbol W')\boldsymbol \theta^*\leq k_0\delta)}\leq M_1\sqrt{k_0k_1}\delta^{1/2}
\end{align*}
Next using Lemma 9.6 and Lemma 9.9 of \citet{kosorok2008introduction}, $\mathcal{F}_{\boldsymbol\theta}$ is a class indicator functions is a Vapnik-Cervonenkis (VC) class with bounded bracketing entropy $J^*_{[]}(1,\mathcal{F}_{\boldsymbol\theta})$.
\begin{align*}
    \mathbb{G}_n\mathcal{F}_{\boldsymbol\theta}=\frac{1}{\sqrt{n}}\sum_{i=1}^n[\mathcal{F}_{\boldsymbol\theta}-\mathbb{E}(\mathcal{F}_{\boldsymbol\theta})]=\sqrt{n}[\Phi^t_{1n}(\boldsymbol \theta)-\Phi^t_{1n}(\boldsymbol \theta^*)-[\Phi^t_1(\boldsymbol \theta)-G^t(\boldsymbol \theta^*)]]
\end{align*}
Using theorem 11.2 of \citet{kosorok2008introduction}, there exist a constant $0<k_2<\infty$ such that 
\begin{align*}
    \mathbb{E}(\sup_{||\boldsymbol \theta-\boldsymbol \theta^*||_2<\delta}|\mathbb{G}_n\mathcal{F}_{\boldsymbol\theta}|)\leq k_2J^*_{[]}(1,\mathcal{F}_{\boldsymbol\theta})M_1\sqrt{k_0k_1}\delta^{1/2}=c\delta^{1/2}
\end{align*}
where $0<c<\infty$. Hence we obtain that
\begin{align*}
    &\mathbb{E}[\sqrt{n}\sup_{||\boldsymbol \theta-\boldsymbol \theta^*||_2<\delta}| \widehat{\Phi}_1^t(\boldsymbol \theta)-\Phi^t_1(\boldsymbol \theta)-[\widehat{\Phi}_1^t(\boldsymbol \theta^*)-G^t(\boldsymbol \theta^*)]|\leq c\delta^{1/2}\\
    & \implies \mathbb{E}[\sqrt{n}\sup_{||\boldsymbol \theta-\boldsymbol \theta^*||_2<\delta}| \widehat{G}^t(\boldsymbol \theta)-G^t(\boldsymbol \theta)-[\widehat{G}^t(\boldsymbol \theta^*)-G^t(\boldsymbol \theta^*)]|\leq c\delta^{1/2}
\end{align*}
Using notations of Theorem 14.4 of \citet{kosorok2008introduction}, let $\phi_n(\delta)=\delta^{1/2}$. Let $\alpha=1<2$ and $\frac{\phi_n(\delta)}{\delta^{\alpha}}=\delta^{-1/2}$ is decreasing and $\alpha$ do not depend on $n$.\\

\noindent
\textbf{Condition 3:} $r_n^2\phi_n(r_n^{-1})\leq c_3\sqrt{n}$ for every $n$ and some $c_3<\infty$. $\widehat{\boldsymbol \theta}\xrightarrow{p}\boldsymbol \theta^*$ and $\widehat{G}^t(\widehat{\boldsymbol \theta})\geq \sup_{\boldsymbol \theta}\widehat{G}^t(\boldsymbol \theta)$. We choose $r_n=n^{1/3}$, then $r_n^2\phi_n(r_n^{-1})=n^{1/2}$. Hence condition 3 is satisfied. Hence $n^{1/3}||\widehat{\boldsymbol \theta}-\boldsymbol \theta^*||=O_p(1)$. This proves the first part of Theorem 1.\\

\noindent
\textbf{Proof of (ii).} We start with the Taylor series expansion,
\begin{align*}
    \sqrt{n}(G^t(\widehat{\boldsymbol \theta})-G^t(\boldsymbol \theta^*))&=\sqrt{n}\left\{G^{t'}(\boldsymbol \theta^*)||\widehat{\boldsymbol \theta}-\boldsymbol \theta^*||+\frac{1}{2}G^{t''}(\boldsymbol \theta^*)||\widehat{\boldsymbol \theta}-\boldsymbol \theta^*||^2 +o_p(||\widehat{\boldsymbol \theta}-\boldsymbol \theta^*||^2)\right\}\\
    &=\sqrt{n}\left\{\frac{1}{2}G^{t''}(\boldsymbol \theta^*)||\widehat{\boldsymbol \theta}-\boldsymbol \theta^*||^2 +o_p(||\widehat{\boldsymbol \theta}-\boldsymbol \theta^*||^2)\right\}\\
    &=\sqrt{n}\left\{\frac{1}{2}G^{t''}(\boldsymbol \theta^*)O_p(n^{-2/3}) +o_p(n^{-2/3})\right\}=o_p(1)\cdot
\end{align*}
\textbf{Proof of (iii).} We start with 
\begin{align*}
    \sqrt{n}(\widehat{G}^t(\widehat{\boldsymbol \theta})-G^t(\boldsymbol \theta^*))=  \sqrt{n}(\widehat{G}^t(\widehat{\boldsymbol \theta})-\widehat{G}^t(\boldsymbol \theta^*))+  \sqrt{n}(\widehat{G}^t(\boldsymbol \theta^*)-G^t(\boldsymbol \theta^*))
\end{align*}
The first part can be written as
\begin{align*}
    &\sqrt{n}(\widehat{G}^t(\widehat{\boldsymbol \theta})-\widehat{G}^t(\boldsymbol \theta^*))=  \sqrt{n}(G^t(\widehat{\boldsymbol \theta})-G^t(\boldsymbol \theta^*))+  \sqrt{n}(\widehat{G}^t(\widehat{\boldsymbol \theta})-\widehat{G}^t(\boldsymbol \theta^*))-\{G^t(\widehat{\boldsymbol \theta})-G^t(\boldsymbol \theta^*)\})\\
    &=o_p(1)+\sqrt{n}(\widehat{G}^t(\widehat{\boldsymbol \theta})-\widehat{G}^t(\boldsymbol \theta^*))-\{G^t(\widehat{\boldsymbol \theta})-G^t(\boldsymbol \theta^*)\})
\end{align*}
Since $n^{1/3}||\widehat{\boldsymbol \theta}-\boldsymbol \theta^*||=O_p(1)$, we have a $\delta_1=k_4n^{-1/3}$ where $k_4<\infty$. Therefore,
\begin{align*}
    & \sqrt{n}(\widehat{G}^t(\widehat{\boldsymbol \theta})-\widehat{G}^t(\boldsymbol \theta^*))-\{G^t(\widehat{\boldsymbol \theta})-G^t(\boldsymbol \theta^*)\})\leq \mathbb{E}[\sqrt{n}\sup_{||\widehat{\boldsymbol \theta}-\boldsymbol \theta^*||_2<\delta_1}| \widehat{G}^t(\widehat{\boldsymbol \theta})-G^t(\widehat{\boldsymbol \theta})-\{\widehat{G}^t(\boldsymbol \theta^*))-G^t(\boldsymbol \theta^*)\}|]\\
    & \quad\leq c_3\delta_1^{\frac{1}{2}}\leq c_3\sqrt{k_4}N^{-1/6}=o_p(1)\cdot
\end{align*}
Hence to obtain the asymptotic distribution of $\sqrt{n}(\widehat{G}^t(\widehat{\boldsymbol \theta})-G^t(\boldsymbol \theta^*))$, we need to find the same for $\sqrt{n}(\widehat{G}^t(\boldsymbol \theta^*)-G^t(\boldsymbol \theta^*))$.
\begin{align*}
    \sqrt{n}(\widehat{G}^t(\boldsymbol \theta^*)-G^t(\boldsymbol \theta^*))&=\sqrt{n}(\widehat{G}^t(\boldsymbol \theta^*)-G^t_N(\boldsymbol \theta^*))+\sqrt{n}(G^t_N(\boldsymbol \theta^*)-G^t(\boldsymbol \theta^*))\\
    &=o_p(1)+\sqrt{n}(G^t_N(\boldsymbol \theta^*)-G^t(\boldsymbol \theta^*))\\
    &=\sqrt{n}(\Phi^t_{1n}(\boldsymbol \theta^*)-G^t(\boldsymbol \theta^*))\xrightarrow{d}\mathcal{N}(0,\mathbb{E}[\{\Delta^t_N(O)d^t_{\boldsymbol \theta^*}(\boldsymbol W)-G^t(\boldsymbol \theta^*)\}^2])\cdot
\end{align*}
Hence we obtain that,
\begin{align*}
    \sqrt{n}(\widehat{G}^t(\widehat{\boldsymbol \theta})-G^t(\boldsymbol \theta^*))\xrightarrow{d}\mathcal{N}(0,\mathbb{E}[\{\Delta^t_N(O)d^t_{\boldsymbol \theta^*}(\boldsymbol W)-G^t(\boldsymbol \theta^*)\}^2])
\end{align*}
\textbf{Proof of (iv).} Let $\Gamma^t(d^t_{\boldsymbol \theta}(\boldsymbol W),\widetilde{d}(\boldsymbol W))=V^t_N(d^t_{\boldsymbol \theta})-V^t_N(\widetilde{d})$, identified by $\mathbb{E}[\Delta^t_N(O)\{d^t_{\boldsymbol \theta}(\boldsymbol W)-\widetilde{d}(\boldsymbol W)\}]$. The estimated policy gain is
\begin{align*}
\widehat{\Gamma}^t(\widehat{d}^t_{\boldsymbol \theta}(\boldsymbol W),\widetilde{d}(\boldsymbol W))=\frac{1}{n}\sum_{i=1}^n\widehat{\Delta}^t_N(O_i)\{\widehat{d}^t_{\boldsymbol \theta}(\boldsymbol W_i)-\widetilde{d}(\boldsymbol W_i)\}.
\end{align*}
We decompose:
\begin{align*}
&\sqrt{n}(\widehat{\Gamma}^t(\widehat{d}^t_{\boldsymbol \theta}(\boldsymbol W),\widetilde{d}(\boldsymbol W))-\Gamma^t(d^t_{\boldsymbol \theta^*}(\boldsymbol W),\widetilde{d}(\boldsymbol W)))\\
=&\sqrt{n}(\widehat{\Gamma}^t(\widehat{d}^t_{\boldsymbol \theta}(\boldsymbol W),\widetilde{d}(\boldsymbol W))-\widehat{\Gamma}^t(d^t_{\boldsymbol \theta^*}(\boldsymbol W),\widetilde{d}(\boldsymbol W)))+\sqrt{n}(\widehat{\Gamma}^t(d^t_{\boldsymbol \theta^*}(\boldsymbol W),\widetilde{d}(\boldsymbol W))-\Gamma^t(d^t_{\boldsymbol \theta^*}(\boldsymbol W),\widetilde{d}(\boldsymbol W))).
\end{align*}
The first term:
\begin{align*}
\widehat{\Gamma}^t(\widehat{d}^t_{\boldsymbol \theta}(\boldsymbol W),\widetilde{d}(\boldsymbol W))-\widehat{\Gamma}^t(d^t_{\boldsymbol \theta^*},\widetilde{d})&=\frac{1}{n}\sum_{i=1}^n\widehat{\Delta}^t_N(O_i)(\widehat{d}^t_{\boldsymbol \theta}(\boldsymbol W_i)-d^t_{\boldsymbol \theta^*}(\boldsymbol W_i))
\end{align*}
is $o_p(n^{-1/2})$ using the same steps as the proof of part (iii). Hence we obtain $\sqrt{n}(\widehat{\Gamma}^t(\widehat{d}^t_{\boldsymbol \theta}(\boldsymbol W),\widetilde{d}(\boldsymbol W))-\widehat{\Gamma}^t(d^t_{\boldsymbol \theta^*}(\boldsymbol W),\widetilde{d}(\boldsymbol W)))=o_p(1)$. The second term is given by
\begin{align*}
   \frac{1}{n}\sum_{i=1}^n\widehat{\Delta}^t_N(O_i)\{d^t_{\boldsymbol \theta}(\boldsymbol W_i)-\widetilde{d}(\boldsymbol W_i)\}-\mathbb{E}[\{\Delta^t_{N}(O)(d^t_{\boldsymbol \theta^*}(\boldsymbol W)-\widetilde{d}(\boldsymbol W))]
\end{align*}
Hence by the central limit theorem we obtain,
\begin{align*}
&\sqrt{n}(\widehat{\Gamma}^t(d^t_{\boldsymbol \theta^*}(\boldsymbol W),\widetilde{d}(\boldsymbol W))-\Gamma^t(d^t_{\boldsymbol \theta^*}(\boldsymbol W),\widetilde{d}(\boldsymbol W)))\\
&\hspace{3cm}\xrightarrow{d}\mathcal{N}(0,\mathbb{E}[\{\Delta^t_{N}(O)(d^t_{\boldsymbol \theta^*}(\boldsymbol W)-\widetilde{d}(\boldsymbol W))-\Gamma^t(d^t_{\boldsymbol \theta^*}(\boldsymbol W),\widetilde{d}(\boldsymbol W))\}^2])\cdot
\end{align*}
Combining both terms, we conclude:
\begin{align*}
&\sqrt{n}(\widehat{\Gamma}^t(\widehat{d}^t_{\boldsymbol \theta}(\boldsymbol W),\widetilde{d}(\boldsymbol W))-\Gamma^t(d^t_{\boldsymbol \theta^*}(\boldsymbol W),\widetilde{d}(\boldsymbol W)))\\
&\hspace{3cm}\xrightarrow{d}\mathcal{N}(0,\mathbb{E}[\{\Delta^t_{N}(O)(d^t_{\boldsymbol \theta^*}(\boldsymbol W)-\widetilde{d}(\boldsymbol W))-\Gamma^t(d^t_{\boldsymbol \theta^*}(\boldsymbol W),\widetilde{d}(\boldsymbol W))\}^2])\cdot
\end{align*}
\section{Supplementary Tables}
\begin{table}[H]
\centering
\caption{Comparison of performance of the AIPW Policy Gain Estimator in which the optimal policy are estimated at different simulation runs.}
\label{tab:tab3}
\vspace{0.3cm}
\begin{tabular}{cccccc}
\toprule
\textbf{\makecell{Sample \\ Size}} & \textbf{Time} & \textbf{\makecell{True \\ Value}} & \textbf{\makecell{AIPW \\ Estimator}} & \textbf{\makecell{Root-$n$ \\ Bias}} & \textbf{\makecell{Coverage \\ Probability}} \\ 
\midrule
1000 & 1 & -0.085 & -0.044 & 1.303 & 0.860 \\ 
1500 & 1 & -0.085 & -0.055 & 1.150 & 0.874 \\ 
2500 & 1 & -0.085 & -0.063 & 1.101 & 0.882 \\ 
5000 & 1 & -0.085 & -0.072 & 0.904 & 0.904 \\ 
\midrule
1000 & 4 & -0.196 & -0.116 & 2.545 & 0.888 \\ 
1500 & 4 & -0.196 & -0.120 & 2.937 & 0.880 \\ 
2500 & 4 & -0.196 & -0.152 & 2.229 & 0.904 \\ 
5000 & 4 & -0.196 & -0.168 & 2.004 & 0.920 \\ 
\bottomrule
\end{tabular}
\end{table}
\begin{table}[htbp]
\centering
\caption{Summary of baseline demographic and clinical characteristics for the Medicare study cohort 2017-2022 ($N = 219,286$), stratified by first-line treatment initiation (Metformin vs. GLP-1 receptor agonists). Continuous variables are presented as mean (standard deviation), and categorical variables as percentages. NHW: Non-Hispanic White; CFI: Claims-based Frailty Index.}
\label{tab:baseline}
\vspace{0.3cm}
\begin{tabular}{lcc}
\toprule
\textbf{Characteristic} & \textbf{Metformin (95.03\%)} & \textbf{GLP-1 (4.97\%)} \\ 
\midrule
Count                   & 208,383                     & 10,903                 \\
Age, Mean (SD)          & 71 (10)                     & 68 (10)                \\
Male                    & 48\%                        & 38\%                   \\
NHW                     & 77\%                        & 81\%                   \\
CFI Score, Mean (SD)    & 0.131 (0.04)                & 0.133 (0.04)           \\
Delta                   & 93\%                        & 93\%                   \\
Death                   & 3.29\%                       & 1.93\%                  \\
Recurrent Event         & 1.01\%                       & 1.00\%                  \\ 
\bottomrule
\end{tabular}
\end{table}

\end{document}